\DeclareMathOperator{\gme}{GME}
\DeclareMathOperator{\ghz}{GHZ}
\newcommand{\Lt}{\mathcal{L}^{\tau}}
\newcommand{\Lgme}{\mathcal{L}^{\mathrm{C}}}
\newcommand{\Lce}{\mathcal{L}^{\mathrm{CE}}}
\newcommand{\Lsqrtce}{\mathcal{L}^{\sqrt{\mathrm{CE}}}}
\newcommand{\AME}{\mathrm{AME}}
\DeclareMathOperator{\tr}{tr}
\DeclareMathOperator{\rk}{rank}
\DeclareMathOperator{\supp}{supp}
\DeclareMathOperator{\Span}{span}
\DeclareMathOperator{\Haar}{Haar}
\newcommand{\sumi}{\sum\nolimits}
\DeclareMathOperator*{\EV}{\mathbb{E}}
\newtheorem{theorem}{Theorem}
\newtheorem{lemma}[theorem]{Lemma}
\newtheorem{corollary}[theorem]{Corollary}
\newcommand{\CE}[2]{C\!\left(#1; #2\right)}
\begin{document}

\title{Localizing multipartite entanglement with local and global measurements}
\author{Christopher Vairogs}
\email{christopher.vairogs@gmail.com}
\affiliation{Department of Physics, University of Illinois Urbana-Champaign, Urbana, Illinois 61801, USA}
\author{Samihr Hermes}
\affiliation{Department of Physics, University of Illinois Urbana-Champaign, Urbana, Illinois 61801, USA}
\author{Felix Leditzky}
\affiliation{Department of Mathematics, University of Illinois Urbana-Champaign, Urbana, Illinois 61801, USA}
\affiliation{IQUIST, University of Illinois Urbana-Champaign, Urbana, Illinois 61801, USA}


\begin{abstract}
We study the task of localizing multipartite entanglement in pure quantum states onto a subsystem by measuring the remaining systems. To this end, we fix a multipartite entanglement measure and consider two quantities: the multipartite entanglement of assistance (MEA), defined as the entanglement measure averaged over the post-measurement states and maximized over arbitrary measurements; and the localizable multipartite entanglement (LME), defined in the same way but restricted to only local single-system measurements.
Both quantities generalize previously considered bipartite entanglement localization measures.
In our work we choose the $n$-tangle, the genuine multipartite entanglement concurrence and the concentratable entanglement (CE) as the underlying seed measure, and discuss the resulting MEA and LME quantities.
First, we prove easily computable upper and lower bounds on MEA and LME and establish Lipschitz-continuity for the $n$-tangle and CE-based LME and MEA.
Using these bounds we investigate the typical behavior of entanglement localization by deriving concentration inequalities for the MEA evaluated on Haar-random states and performing numerical studies for small tractable system sizes. We then turn our attention to protocols that transform graph states. We give a simple criterion based on a matrix equation to decided whether states with a specified $n$-tangle value can be obtained from a given graph state, providing no-go theorems for a broad class of such graph state transformations beyond the usual ``local Clifford plus local Pauli measurement'' framework. This analysis is generalized to weighted graph states, which provide a realistic error model in current experiments preparing graph state. Our entanglement localization framework certifies the near-optimality of recently discussed local-measurement protocols to transform uniformly weighted line graph states into GHZ states, even when considering arbitrary entangled measurements. Finally, we demonstrate how our MEA and LME quantities can be used to detect critical phenomena such as phase transitions in transversal field Ising models. Since entanglement localization is operationally relevant throughout quantum networking and measurement-based quantum computation, our framework of results based on the MEA and LME has the potential for broad applications in these fields.
\end{abstract}

\maketitle 

\newpage
\tableofcontents

\section{Introduction}
Entanglement is the quintessential feature distinguishing quantum information theory from its classical counterpart, enabling various quantum information-processing tasks such as quantum teleportation \cite{bennett1993teleporting,ishizaka2008asymptotic,ishizaka2009quantum}, quantum communication over long distances \cite{duer1999repeaters} or secret sharing protocols \cite{hillery1999secret,cleve1999secret}.
These protocols all require certain types of entanglement resources, but preparing high-fidelity multipartite entangled states remains a major experimental challenge \cite{duan2004scalable,nysteen2017limitations,pichler2017universal,Firstenberg2013, Firstenberg2016, Tiarks2016, Thompson2017, Sagona-Stophel2020, Tiarks2018}.

We are thus presented with the task of transforming multipartite states produced by some entanglement source into a suitable form for further use in quantum information-processing protocols.
There are different approaches to achieving this task.
In entanglement distillation (also called purification) one aims to convert many independent and identically distributed copies of a noisy source state into a smaller number of pure target states \cite{bennett1996purification,bennett1996concentrating,bennett1996mixed}.
In a different, ``single-copy'' approach, measurements are performed on some subset of the quantum systems in order to prepare the target state on the remaining systems, thus increasing or \emph{localizing} entanglement on a smaller system \cite{Divincenzo1998,raussendorf2001oneway,raussendorf2003measurement,Lausten2003,Verstraete-LE,Popp-LE}.
Recently, this measurement-based entanglement localization approach has been employed to study protocols that extract highly entangled multipartite states such as GHZ states from less entangled source states \cite{paunkovic2002entanglement,hwang2007practical,xiong2011schemes,deng2012optimal,Zhou2013,Choudhury2013}.
For example, the recent work \cite{Frantzeskakis2023} describes a protocol to extract GHZ states from larger multipartite states with a linear correlation structure by suitably measuring out auxiliary qubits.

In our work we focus on entanglement localization and aim to devise benchmarks for such protocols using tools from entanglement theory.
To this end, we define multipartite versions of the entanglement of assistance \cite{Divincenzo1998, Lausten2003,Smolin2005,Gour2006} and the localizable entanglement \cite{Verstraete-LE, Popp-LE}.
Both concepts quantify the maximal amount of entanglement, with respect to a chosen multipartite entanglement measure, that can be localized on a subset of qubits via measurements on the remaining systems. We consider two possibilities for these  measurements: local measurements and global measurements. Naturally, local measurements, in which single qudits are measured individually, form a proper subset of global measurements, in which multiple parties may be measured jointly.
For the entanglement of assistance we allow global measurements, whereas for the localizable entanglement we restrict to local measurements.
In this paper, we set out to characterize these entanglement localization quantities in terms of the $n$-tangle~\cite{Wong2001, Jaeger2003Invariance, Jaeger2003Mixedness, Teodorescu2003}, the genuine multipartite entanglement (GME) concurrence~\cite{Ma2011, Dai2020}, and the concentratable entanglement~\cite{Beckey2021}. Each of these functions are bona fide entanglement monotones in the sense that \textit{(i)} they vanish on product states and \textit{(ii)} they do not increase, on average, under local operations and classical communication (LOCC)~\cite{vidal2000monotones,Horodecki2009}, which makes them suitable seed measures for our entanglement localization measures.
Furthermore, the favorable properties of these entanglement measures allow us to gain new insights into multipartite entanglement localization that significantly extend previous studies \cite{sadhukhan2017multipartite,banerjee2021localizing,harikrishnan2023localizing,amaro2018estimating,amaro2020scalable,banerjee2020uniform,banerjee2022hierarchies,krishnan2023controlling}. Many of our results center around the entanglement localization measures defined in terms of the $n$-tangle due to its many exploitable features, and our results for the GME-concurrence and CE are provided for further context.
We give an overview of our main findings in the following paragraph.

\paragraph*{Main results and structure of this paper.}
We show in this work that our multipartite entanglement localization measures give useful tools to benchmark multipartite state transformations and detect critical phenomena in quantum spin systems.
In Sec.~\ref{sec:entanglement-localization-measures} we define the two main quantities of this work, the multipartite entanglement of assistance (MEA) and the localizable multipartite entanglement (LME) in terms of three entanglement measures: the $n$-tangle \eqref{eq:n-tangle-def}, the genuine multipartite entanglement concurrence \eqref{eq:gme-conc-def}, and the concentratable entanglement \eqref{eq:CE}. 
In Sec.~\ref{sec:bounds} we derive easily computable upper (Thms.~\ref{Th:n-tangle-upper-bound}, \ref{Th:cgme-upper-bound}, \ref{Th:CE-upper-bound}) and lower bounds (Thm.~\ref{Th:corr-func-lower-bound}) as technical tools.
In Sec.~\ref{sec:continuity-bounds} we establish Lipschitz-continuity for these quantities (Thm.~\ref{Th:LE-continuity-bound} and Cor.~\ref{cor:LE-local-bounds}).
In Sec.~\ref{sec:concentration} we discuss generic aspects of localizing multipartite entanglement (Thms.~\ref{Th:EoA-concentration} and \ref{Th:EoA-concentration-2}), and we provide a numerical study of Haar-random states in Sec.~\ref{sec:numerics}.
We then focus on protocols transforming (weighted) graph states in Sec.~\ref{sec:graph-states}, proving efficient criteria for determining what state transformations are possible (Thm.~\ref{Th:main-graph-thm}) and showing that the aforementioned GHZ-extraction protocol of~\cite{Frantzeskakis2023} is close to optimal (Fig.~\ref{fig:mTangle-optimization}), even when considering global measurements.
Finally, we apply our methods to the study of critical phenomena of a spin-half transverse field Ising model in Sec.~\ref{sec:spin-models}, and conclude with a summary and open problems in Sec.~\ref{sec:discussion}.
The proofs of our main results are given in the appendices.

\section{Entanglement localization measures}
\label{sec:entanglement-localization-measures}

\subsection{Notation and basic definitions}

Given a multipartite pure state $|\Psi\rangle\in\mathcal{H}_1 \otimes \dots \otimes \mathcal{H}_N$ over $N$ systems, we consider measurements performed on some proper subset $A\subset [N] \coloneqq \{1, \dots, N\}$ of the $N$ systems. 
We make the following two restrictions in our work: First, we only consider projective measurements so that post-measurement states are well-defined.
Second, in order to ensure that the post-measurement states on the subset $A\subset [N]$ are pure, we further restrict to rank-1 measurements.
This allows us to directly evaluate pure-state entanglement measurements on the post-measurement states, thereby avoiding the issue of evaluating mixed-state entanglement measures (e.g., through convex-roof constructions \cite{Horodecki2009}) altogether, which would make our analysis infeasible.
More succinctly, we only consider von Neumann measurements on the subset $A\subset [N]$.

We denote the complement of $A$ in $[N]$ by $B$, and we write $N_A\coloneqq |A|$ and $N_B\coloneqq |B|$. Let $\mathcal{H}_A$ and $\mathcal{H}_B$ denote the Hilbert spaces of the $A$ and $B$ subsystems, respectively, and write $d_A \coloneqq \dim \mathcal{H}_A$ and $d_B \coloneqq \dim \mathcal{H}_B$. A rank-one projective measurement over the systems in $A$ produces an ensemble $\{(p_i, |\phi_i\rangle)\}_i$ of states over the unmeasured systems in $B$. We stress that the measurement bases in this definition are completely arbitrary and may be entangled over the subsystems in $A$. Given a pure state entanglement measure $E$, we may define the \textit{multipartite entanglement of assistance} (MEA) of $|\Psi\rangle$ with respect to $E$ to be\footnote{Despite the resemblance of $L^E(|\Psi\rangle)$ to a convex roof extension, we emphasize that this is a different quantity defined by a \textit{maximization} over post-measurement ensembles, i.e., decompositions of the reduced density matrix $\tr_A[|\Psi\rangle\langle \Psi|]$.
In contrast, a convex roof measure extends an entanglement measure on pure states to mixed states using a \textit{minimization} over pure-state decompositions of the given mixed state.}
\begin{equation}\label{eq:eoa-def}
    \mathcal{A}^E(|\Psi\rangle) \coloneqq \max\left\lbrace \sum\nolimits_{i} p_i E(|v_i\rangle): \{(p_i, |v_i\rangle)\}_i~\text{results from a global measurement}\right\rbrace.
\end{equation} 
That is, the MEA of $|\Psi\rangle$ is the maximum amount of entanglement, on average, that can be localized onto the subsystems in $B$ by performing rank-one projective measurements on the subsystems in $A$ that are \textit{global} in the sense that their measurement operators may be entangled across multiple subsystems in $A$. This definition is a direct generalization of the bipartite entanglement of assistance~\cite{Divincenzo1998, Lausten2003, Smolin2005, Gour2006}.

Since global measurements are operationally burdensome, it is practically relevant to restrict to \textit{local} projective measurement on the $A$ subsystems. For this reason, we define the \textit{localizable multipartite entanglement} (LME) of $|\Psi\rangle$ by
\begin{align}\label{eq:localizable-multipartite-entanglement}
    \mathcal{L}^E(|\Psi\rangle) \coloneqq \max\left\lbrace \sum\nolimits_{i} p_i E(|v_i\rangle): \{(p_i, |v_i\rangle)\}_i~\text{results from a local measurement}\right\rbrace.
\end{align} 
Here, we restrict the optimization in \eqref{eq:eoa-def} to rank-one projective that are \textit{local} in the sense that their measurement operators necessarily assume a tensor product form with respect to the subsystems in $A$.
As with the MEA, this definition is a direct generalization of the bipartite localizable entanglement~\cite{Verstraete-LE, Popp-LE}.
Naturally, the LME and MEA are highly dependent on the choice of entanglement measure $E$ and the set of measured-out systems $A$. 
To avoid over-encumbering our notation, we will make the choice of subsystems $A$ clear from the context in what follows.

\subsection{Choice of entanglement measures}
\begin{table}
\centering
\begin{tblr}{|c|c|c|c|}
    \hline
    & {$n$-tangle} & {GME-concurrence} & {concentratable entanglement} \\
    \hline
    {local (LME)} & {$\mathcal{L}^\tau$} & {$\mathcal{L}^C$} & {$\mathcal{L}^{\mathrm{CE}}$} \\
    \hline
    {global (MEA)} & {$\mathcal{A}^\tau$} & {$\mathcal{A}^C$} & {$\mathcal{A}^{\mathrm{CE}}$} \\
    \hline
\end{tblr}
\caption{Given a partition $A|B$ of $N$ subsystems, our entanglement localization quantities have two essential ingredients: the class of allowed measurements on $A$ and the entanglement measure used to quantify the entanglement of states over $B$ in the optimal post-measurement ensemble.}
\label{table:entanglement-table}
\end{table}

The question of how much entanglement, as quantified by the concurrence, can be localized on average onto two-qubit subsystems of a multi-qubit system via local measurements has previously been considered~\cite{Verstraete-LE, Popp-LE}.  However, when it comes to studying the entanglement localizable onto multiple qubits, the problem requires the choice of an appropriate entanglement measure. Of course, we may consider a bipartition of the $B$ subsystems and use an entanglement measure that describes the entanglement shared between the two parties in this bipartition, e.g., the entropy of entanglement or the generalized bipartite concurrence~\cite{Rungta2001, Mintert2005}. However, in this case, we do not account for entanglement that is shared between all subsystems in $B$, rather than across a specific bipartition. Several entanglement measures that capture a truly multipartite correlation have been proposed, although the appropriate choice of a specific measure for computing the entanglement localizable onto the $B$ systems can be nebulous because these measures are often highly coarse, difficult to compute, or tailored to specific state properties~\cite{Eisert2001, Hein2004, Osterloh2006}.

In our work, we set out to characterize localizable multipartite entanglement and entanglement of assistance in terms of the $n$-tangle~\cite{Wong2001, Jaeger2003Invariance, Jaeger2003Mixedness, Teodorescu2003}, the genuine multipartite entanglement (GME) concurrence~\cite{Ma2011, Dai2020}, and the concentratable entanglement~\cite{Beckey2021}, which are all established entanglement monotones. Our aim is to derive easily computable bounds for them whose utility we will demonstrate with a variety of examples.

\textit{$n$-tangle.---}The $n$-tangle~\cite{Wong2001, Jaeger2003Invariance, Jaeger2003Mixedness, Teodorescu2003} is defined using the so-called ``Wootter's tilde'' of an $n$-qubit state $|\psi\rangle$ to be the state $|\tilde{\psi}\rangle \coloneqq \sigma_y^{\otimes n} |\psi^\ast\rangle$, where $|\psi^\ast \rangle$ is the complex conjugate of $|\psi\rangle$ with respect to the computational basis. Given an arbitrary $n$-qubit density matrix $\rho$, we give its Wooter's tilde by $\Tilde{\rho} \coloneqq \sigma_y^{\otimes n}\rho^\ast \sigma_y^{\otimes n}$, where $\rho^\ast$ is the complex conjugate of $\rho$ with respect to the computational basis. 
The $n$-tangle $\tau_n$ of an $n$-qubit state $|\psi\rangle$ is defined as
\begin{equation}\label{eq:n-tangle-def}
    \tau_n(|\psi\rangle) \coloneqq |\langle \psi|\Tilde{\psi}\rangle|.
\end{equation}
The $n$-tangle is generically zero when $n$ is odd, so we mainly use the $n$-tangle when the number $n$ of qubits is even. 
The localizable multipartite entanglement defined via \eqref{eq:localizable-multipartite-entanglement} in terms of the $n$-tangle will be denoted as $\Lt$.
Note that even though our definition of the $n$-tangle in \eqref{eq:n-tangle-def} is the square root of the definition commonly used in the literature (e.g., \cite{Wong2001,Jaeger2003Mixedness}), it still defines an entanglement monotone because the square root function is concave.\footnote{
To see this, note that a function $E$ is called an \emph{entanglement monotone} \cite{vidal2000monotones} if it is non-increasing on average under local operations and classical communication (LOCC): If a quantum state $\rho$ is mapped to the state $\rho_i$ with probability $p_i$ by some LOCC map, then $E(\rho) \geq \sum\nolimits_{i} p_i E(\rho_i)$.
By monotonicity and concavity of the square root we then also have $\sqrt{E}(\rho) \geq \sqrt{\sum_i p_i E(\rho_i)} \geq \sum_i p_i \sqrt{E}(\rho_i),$ and hence $\sqrt{E}$ is a valid entanglement monotone as well.
\label{fn:square-root-monotone}}
Our definition of the $n$-tangle is motivated by the fact that the expression~\eqref{eq:n-tangle-def} coincides with the concurrence on two-qubit pure states, allowing us to view~\eqref{eq:n-tangle-def} as a generalization of the concurrence. We find that the convenient analytical form~\eqref{eq:n-tangle-def} of the $n$-tangle, which requires no optimizations and may be interpeted as the fidelity of $|\psi\rangle$ and $|\tilde{\psi}\rangle$, makes it preferable for an analytical treatment of entanglement localization.

Additionally, the $n$-tangle is a particularly interesting entanglement measure from the perspective of the characterization of states equivalent under stochastic local operations and classical communication (SLOCC). We say that two states are SLOCC-equivalent if they may be interconverted with nonzero probability using LOCC. Two states are known to be SLOCC-equivalent if and only if, up to scalar multiples, one is obtained from the other via local $\mathrm{SL}(\mathbb{C}^2)$ (determinant-one) operators~\cite{Dur2000}. The $n$-tangle is an invariant under $\mathrm{SL}(\mathbb{C}^2)^{\otimes n}$~\cite{Jaeger2003Invariance}, and hence yields insight into the SLOCC equivalence classes of $(\mathbb{C}^2)^{\otimes n}$. As an example, if $\mathcal{L}^\tau(|\Psi\rangle) \approx 1$ for some $|\Psi\rangle \in \mathcal{H}_A \otimes \mathcal{H}_B$, with the $B$ system consisting of an even number of qubits, then we may only extract states in the SLOCC class of the W-state with low probability because $\tau(|\rm{W}\rangle) = 0$.

\textit{GME-concurrence.---}The GME-concurrence~\cite{Ma2011, Dai2020} of an $n$-partite state $|\psi\rangle$ is the quantity 
\begin{equation}\label{eq:gme-conc-def}
    C_{\gme}(|\psi\rangle) \coloneqq \min_{\varnothing \subsetneq \gamma \subsetneq \{1, \dots, n\}} \sqrt{2(1 - \tr(\psi_\gamma^2))},
\end{equation}
where $\psi_\gamma$ denotes the reduced state of $|\psi\rangle\langle \psi|$ over the systems in $\gamma$. Since the purity $\tr(\psi_\gamma^2)$ indicates the mixedness of the reduced state $\psi_\gamma$ and, hence, the correlations in $|\psi\rangle$ across the bipartition $\gamma|\tilde{\gamma}$, the RHS of~\eqref{eq:gme-conc-def} may be viewed intuitively as a reflection of the smallest degree of correlation across all bipartitions.
The corresponding localizable multipartite entanglement defined via \eqref{eq:localizable-multipartite-entanglement} in terms of $C_{\gme}$ will be denoted as $\Lgme$.

Both the $n$-tangle and GME-concurrence are desirable quantifiers of entanglement from the viewpoint of transforming entanglement into GHZ-type entanglement. In this scenario, we wish to extract from $|\Psi\rangle$ a state over the $B$ subsystems that approximates a Greenberger-Horne-Zeilinger (GHZ) state~\cite{Greenberger1989}, defined for an $n$-qubit system as $|\ghz_n\rangle = (|0\rangle^{\otimes n} + |1\rangle^{\otimes n})/\sqrt{2}$. For multi-qubit systems, both the $n$-tangle and the GME-concurrence attain their maximal value of 1 on GHZ states and, by their monontonicity, on any state that is locally unitarily equivalent to a GHZ state. In this way, $\mathcal{L}^\tau$ ($\mathcal{A}^\tau$) and $\mathcal{L}^C$ ($\mathcal{A}^C$) must be maximal for any state from which it is possible to extract a GHZ state via local (global) measurements on the $A$ subsystem and local unitary corrections on the $B$ subsystem. Therefore, designing GHZ-extraction protocols by singling out resource states $|\Psi\rangle$ for which either $\mathcal{L}^\tau(|\Psi\rangle)$ or $\mathcal{L}^C(|\Psi\rangle)$ is high is an appealing approach, as finding measurements to achieve an exact state transformation can be a daunting challenge~\cite{Dahlberg2020}. Moreover, many other quantifiers of ``GHZ-likeness'' of the post-measurement ensemble, such as average fidelity with the GHZ state, discriminate against states that are locally unitarily equivalent to the GHZ state even though such states could be converted into the GHZ state via inexpensive single-qudit unitary rotations.

Furthermore, the GME-concurrence has the special property that it is zero for any biseparable state, i.e., any state that can be written as a tensor product of two states over some subsystems. This is useful for localizing entanglement, since maximal  $\mathcal{L}^C(|\Psi\rangle)$ precludes the possibility of extracting biseparable states, which do not have a GHZ entanglement structure but can still maximize the $n$-tangle.
Thus, choosing $|\Psi\rangle$ to maximize $\Lt(|\Psi\rangle)$ defined via the $n$-tangle may well yield such non-GHZ states over the $B$ subsystems, while it is unlikely that we extract such states when $\mathcal{L}^C$ is maximized.

\textit{Concentratable entanglement.---}Finally, the concentratable entanglement $\CE{|\psi\rangle}{s}$~\cite{Beckey2021} of an $n$-partite state $|\psi\rangle$ with respect to a nonempty subset $s \subset [n]$ of the $n$ systems is defined to be 
\begin{equation}\label{eq:CE}
    \CE{|\psi\rangle}{s} \coloneqq 1 - \frac{1}{2^{|s|}}\sum_{\gamma\subseteq s} \tr(\psi_\gamma^2).
\end{equation}
In other words, the concentratable entanglement (CE) depends on some choice $s$ of the subsystems of $|\psi\rangle$. Note that for a given $s$ it is equal to the average of the entanglement, as measured by the linear subsystem entropy, between every bipartition of $M$ with one part contained in $s$. 
The corresponding LME and MEA defined via \eqref{eq:localizable-multipartite-entanglement} and \eqref{eq:eoa-def} will be denoted as $\Lce$ and $\mathcal{A}^{CE}$, respectively (see Table~\ref{table:entanglement-table}). We require that when the concentratable entanglement is used to define the localizable entanglement, the set $s$ is chosen so that $s \subseteq B$ in all measurement branches. If $s$ were chosen so that $s\not\subseteq B$, then the localizable entanglement would not be well-defined.
In our discussion, we will sometimes also use the square root of the concentratable entanglement, which is still a valid entanglement measure due to the concavity of the square root function (see Footnote~\ref{fn:square-root-monotone}). 

It is shown in~\cite{Beckey2021} that several well-studied entanglement measures may be recovered from the concentratable entanglement alone. 
Thus, it stands to reason that by appropriately choosing the subset $s$ in \eqref{eq:CE} to be some proper subset of $B$, we may get a more detailed picture of how entanglement is shared across the $N_B$ systems than the $n$-tangle alone can offer.

\subsection{Relation to prior work}

Sadhukhan et al.~\cite{sadhukhan2017multipartite} previously discussed a multipartite version of localizable entanglement based on the geometric measure of entanglement.
They proved analytical results for generalized GHZ states, generalized W-states and Dicke states, carried out a numerical study for Haar-random states on a few qubits, and analyzed critical phenomena in quantum spin models.
Banerjee et al.~\cite{banerjee2021localizing} introduced a version of LME in the continuous-variables setting, and Harikrishnan and Pal \cite{harikrishnan2023localizing} studied the LME based on the genuine multiparty concurrence for stabilizer states under various noise models.
Other studies have also considered the special case of localizing entanglement on bipartite systems, using (mixed-state) entanglement measures such as logarithmic negativity to define LME and MEA quantities \cite{amaro2018estimating,amaro2020scalable,banerjee2020uniform,banerjee2022hierarchies,krishnan2023controlling}. 

In contrast, our approach allows us to provide extensive analytical results for the localization of pure multipartite entanglement on \textit{arbitrary subsets}, and starting from \textit{arbitrary multipartite states}. As a result, our study goes beyond prior work that deals with localizing entanglement onto multipartite subsystems in the finite-dimensional setting mainly by either taking a numerical approach for arbitrary states or providing analytical results for restricted families of states.
Our choices of entanglement measures mentioned above yield computable upper and lower bounds on both LME and MEA, which we analyze for Haar-random states, state transformations of (weighted) graph states, and ground states of certain spin models.

\section{Easily computable upper and lower bounds}\label{sec:bounds}

Due to the variational approach to defining the localizable entanglement, it is useful to derive upper and lower bounds in terms of the state itself. In this way, one can estimate the localizable entanglement of a state more easily given its experimentally accessible reduced density matrices. To this end, we prove computable upper and lower bounds on the localizable entanglement.

\subsection{Upper bounds}

We first derive an upper bound on the $n$-tangle assuming $|\Psi\rangle\in \mathcal{H}_1 \otimes \dots \otimes \mathcal{H}_N$ is a multi-qubit pure state. From here on, we will use the abbreviation $\varphi\coloneqq |\varphi\rangle\langle\varphi|$ for the density matrix corresponding to a pure state $|\varphi\rangle$. For any pure state $|\varphi\rangle$ over some subset $s$ of the subsystems in $A\cup B$ and for any subset $\gamma \subseteq s$, we denote by $\varphi_\gamma \coloneqq \tr_{\overline{\gamma}}(\varphi)$ the reduced density operator of $\varphi$ on the subsystems in $\gamma$.  
\begin{theorem}\label{Th:n-tangle-upper-bound}
    Suppose that $|\Psi\rangle$ is a multi-qubit pure state and $N_B$ is even. When determined by the $N_B$-tangle, the localizable entanglement and entanglement of assistance of $|\Psi\rangle$ obey 
    \begin{equation}
        \Lt(|\Psi\rangle) \leq \mathcal{A}^\tau(|\Psi\rangle) = F(\Psi_{B}, \Tilde{\Psi}_{B}),
    \end{equation}
    where $F(\rho, \sigma) \coloneqq \textnormal{Tr}(\sqrt{\sqrt{\rho}\sigma\sqrt{\rho}})$ is the fidelity of quantum states $\rho$ and $\sigma$ and $\tilde{\Psi}_B \coloneqq \sigma_y^{\otimes N_B} \Psi_B^\star \sigma_y^{\otimes N_B}$ with $\Psi_B^\star$ denoting the complex conjugate of $\Psi_B$ with respect to the computational basis.
\end{theorem}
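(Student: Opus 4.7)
The inequality $\Lt(|\Psi\rangle) \leq \Lt_{\textnormal{global}}(|\Psi\rangle)$ is immediate, since local rank-one projective measurements form a subset of the global ones. For the substantive equality $\Lt_{\textnormal{global}}(|\Psi\rangle) = F(\Psi_B, \Tilde{\Psi}_B)$, my plan is to prove the upper bound via Uhlmann's theorem and match it by explicit construction using the Autonne--Takagi factorization of complex symmetric matrices, with the hypothesis that $N_B$ is even playing the decisive role.

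For the upper bound, I fix a rank-one projective measurement $\{|v_i\rangle\langle v_i|\}_i$ on $A$ producing $\{(p_i,|\phi_i\rangle)\}_i$ on $B$, and introduce the unnormalized vectors $|w_i\rangle \coloneqq \sqrt{p_i}\,|\phi_i\rangle = \langle v_i|_A|\Psi\rangle$, which satisfy $\sum_i |w_i\rangle\langle w_i| = \Psi_B$. Setting $T_B \coloneqq \sigma_y^{\otimes N_B}$ and $|\Tilde{\Psi}\rangle \coloneqq (I_A \otimes T_B)|\Psi^*\rangle$, which is a purification of $\Tilde{\Psi}_B$, a direct computation yields
\begin{equation*}
p_i\langle\phi_i|\Tilde{\phi}_i\rangle = \langle w_i|T_B|w_i^*\rangle = \langle\Psi|\bigl(|v_i\rangle\langle v_i^*|\otimes I_B\bigr)|\Tilde{\Psi}\rangle.
\end{equation*}
Choosing phases $e^{i\theta_i}$ so that each summand becomes nonnegative, I obtain $\sum_i p_i\tau_{N_B}(|\phi_i\rangle) = \langle\Psi|(U_A\otimes I_B)|\Tilde{\Psi}\rangle$ with $U_A \coloneqq \sum_i e^{i\theta_i}|v_i\rangle\langle v_i^*|$. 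Using orthonormality of both $\{|v_i\rangle\}$ and $\{|v_i^*\rangle\}$, one checks $U_AU_A^\dagger = I_A$, so Uhlmann's theorem gives the upper bound $F(\Psi_B,\Tilde{\Psi}_B)$.

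For achievability, I view the coefficients $\Psi_{ab}$ as entries of a $d_A \times d_B$ matrix $\Psi$, which gives $\Psi_B = \Psi^T\Psi^*$ and $\Tilde{\Psi}_B = (\Psi T_B)^\dagger(\Psi T_B)$. The identity $F(\rho,\sigma) = \|XY^\dagger\|_1$, valid for any factorizations $X^\dagger X = \rho$, $Y^\dagger Y = \sigma$, then yields $F(\Psi_B,\Tilde{\Psi}_B) = \|M\|_1$ with $M \coloneqq \Psi^*T_B\Psi^\dagger$. Because $N_B$ is even, $\sigma_y^T = -\sigma_y$ implies $T_B^T = T_B$, and therefore $M^T = M$ is complex symmetric. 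The Autonne--Takagi factorization now supplies a unitary $V$ on $\mathcal{H}_A$ and a diagonal matrix $D \geq 0$ with $M = VDV^T$ and $\tr(D) = \|M\|_1$. Taking the measurement basis $|v_i\rangle \coloneqq V^*|i\rangle$ and repeating the computation above, I find $p_i\langle\phi_i|\Tilde{\phi}_i\rangle = (V^\dagger M V^*)_{ii} = D_{ii} \geq 0$, so $\sum_i p_i\tau_{N_B}(|\phi_i\rangle) = \tr(D) = F(\Psi_B,\Tilde{\Psi}_B)$, matching the upper bound.

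The main obstacle is this achievability step: the unitaries $U_A$ expressible as $\sum_i e^{i\theta_i}|v_i\rangle\langle v_i^*|$ coincide exactly with the complex symmetric unitaries on $\mathcal{H}_A$, a proper subset of the full unitary group for $d_A \geq 2$. It is the evenness of $N_B$ that makes $T_B$ and hence $M$ symmetric; this is precisely what Autonne--Takagi requires in order to certify that the Uhlmann maximum over all unitaries is already attained within the restricted measurement-induced class.
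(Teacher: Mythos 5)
Your proposal is correct, and it reaches the result by a genuinely different route than the paper. The paper's proof of the upper bound rewrites $\tau_{N_B}(|\phi_i\rangle)$ as a pure-state fidelity $F(\phi_i,\tilde\phi_i)$ and invokes joint concavity of the fidelity; you instead observe that the average post-measurement tangle equals $\langle\Psi|(U_A\otimes I_B)|\tilde\Psi\rangle$ for a unitary $U_A=\sum_i e^{i\theta_i}|v_i\rangle\langle v_i^*|$ and appeal to Uhlmann's theorem. For achievability, the paper follows the Laustsen--Verstraete--Van den Nest / Wootters lineage: it proves an ensemble-decomposition lemma (its Lemma on $\rho=\sum_i|x_i\rangle\langle x_i|$ with $\langle x_i|\tilde x_j\rangle=l_i\delta_{ij}$, obtained by applying Takagi to the symmetric matrix $\tau_{ij}=\sqrt{\lambda_i\lambda_j}\langle v_i|\tilde v_j\rangle$ built from the spectral decomposition of $\Psi_B$) and then transports the resulting decomposition to a measurement basis via the unitary freedom of purifications. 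You shortcut this by applying Takagi directly to the $d_A\times d_A$ coefficient-matrix object $M=\Psi^* T_B\Psi^\dagger$ and using the identity $F(\rho,\sigma)=\|XY^\dagger\|_1$; I checked the computations ($\Psi_B=\Psi^T\Psi^*$, $\tilde\Psi_B=(\Psi T_B)^\dagger(\Psi T_B)$, $(V^\dagger MV^*)_{ii}=D_{ii}$, and the unitarity of $U_A$), and they are all correct, including the treatment of zero-probability outcomes. Both arguments ultimately rest on the same two pillars — Takagi factorization and the evenness of $N_B$ forcing symmetry of the relevant matrix — but your packaging is more streamlined and makes explicit a structural point the paper leaves implicit: the unitaries reachable by rank-one measurements in the Uhlmann formula are exactly the complex symmetric ones, and Takagi certifies that the unrestricted Uhlmann maximum is already attained in that subclass. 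The paper's route is more modular (its decomposition lemma is reused elsewhere and connects to the Wootters concurrence literature), whereas yours is shorter and self-contained.
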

For a complete proof, see Appendix~\ref{app:thm-1}. The basic approach of the proof is to apply the proof idea of~\cite{Divincenzo1998} to the multiqubit setting. More precisely, we use the fact that $\tau(|\psi\rangle) = F(\psi, \tilde{\psi})$ for any state $|\psi\rangle \in \mathcal{H}_B$ and the joint concavity of the fidelity functional to show that the average $n$-tangle of the optimal post-measurement ensemble for $|\Psi\rangle_{AB}$ is upper bounded by $F(\Psi_B, \tilde{\Psi}_B)$. To demonstrate that equality holds when global measurements are allowed, we exhibit an explicit measurement basis that yields an average $n$-tangle equal to $F(\Psi_B, \tilde{\Psi}_B)$. 

On a similar note, we are also able to derive an upper bound on the localizable entanglement for the GME-concurrence and concentratable entanglement in terms of the reduced density matrices of $|\Psi\rangle$. 
\begin{theorem}\label{Th:cgme-upper-bound}
    When determined by the GME-concurrence, the localizable entanglement and entanglement of assistance of $|\Psi\rangle$ obey
    \begin{equation}\label{eq:cgme-ub}
        \Lgme(|\Psi\rangle) \leq \mathcal{A}^C(|\Psi\rangle) 
        \leq \min_{\varnothing  \subsetneq \gamma \subsetneq B} \sqrt{2(1 - \textnormal{Tr}(\Psi_\gamma^2))}.
    \end{equation}
\end{theorem}
\begin{theorem}\label{Th:CE-upper-bound}
    When determined by the concentratable entanglement $\CE{|\varphi\rangle}{s}$ for some $\varnothing \subsetneq s \subseteq B$, the localizable entanglement and entanglement of assistance of $|\Psi\rangle$ obey 
    \begin{equation}\label{eq:LCE-bound}
        \Lce(|\Psi\rangle) \leq \mathcal{A}^{CE}(|\Psi\rangle) \leq \CE{|\Psi\rangle}{s}.
    \end{equation}
\end{theorem}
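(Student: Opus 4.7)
The first inequality $\Lce(|\Psi\rangle) \leq \Lce_{\textnormal{global}}(|\Psi\rangle)$ is immediate because $\mathcal{P}_{\mathrm{local}} \subseteq \mathcal{P}_{\mathrm{global}}$, so the maximum over local ensembles is bounded by the maximum over all rank-one projective measurement ensembles. The substantive content is the upper bound by $\CE{|\Psi\rangle}{s}$, and the plan is to show it via convexity of the purity.

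The key observation is that for any ensemble $\{(p_i,|\phi_i\rangle)\}_i \in \mathcal{P}_{\mathrm{global}}$ arising from a rank-one projective measurement on $A$, the average post-measurement state on $B$ equals the reduced state of $|\Psi\rangle$ on $B$, i.e.\ $\Psi_B = \sum_i p_i \phi_i$. Tracing out further subsystems then gives, for every $\gamma \subseteq s \subseteq B$,
\begin{equation}
\Psi_\gamma = \sum_i p_i (\phi_i)_\gamma.
\end{equation}
The map $\rho \mapsto \tr(\rho^2)$ is convex on Hermitian operators, as can be seen from the elementary identity $\tr(\rho^2) + \tr(\sigma^2) - 2\tr(\rho\sigma) = \tr((\rho-\sigma)^2) \geq 0$. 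Applied to the above convex decomposition, this yields
\begin{equation}
\tr(\Psi_\gamma^2) \leq \sum_i p_i \tr\bigl((\phi_i)_\gamma^2\bigr).
\end{equation}

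Substituting this into the definition \eqref{eq:CE} of the concentratable entanglement, summing over $\gamma \subseteq s$, and using $\sum_i p_i = 1$, I obtain
\begin{equation}
\sum_i p_i \CE{|\phi_i\rangle}{s} = 1 - \frac{1}{2^{|s|}} \sum_{\gamma \subseteq s} \sum_i p_i \tr\bigl((\phi_i)_\gamma^2\bigr) \leq 1 - \frac{1}{2^{|s|}} \sum_{\gamma \subseteq s} \tr(\Psi_\gamma^2) = \CE{|\Psi\rangle}{s}.
\end{equation}
Since this inequality holds for every global measurement ensemble, taking the maximum on the left-hand side gives $\Lce_{\textnormal{global}}(|\Psi\rangle) \leq \CE{|\Psi\rangle}{s}$, completing the argument.

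There is no real obstacle here: once one recognizes that CE is (up to an affine transformation) a sum of negative purities of reduced states contained entirely in $B$, the bound reduces to a one-line application of convexity of $\rho \mapsto \tr(\rho^2)$ to the ensemble decomposition of $\Psi_\gamma$. The only subtlety worth flagging is the requirement $s \subseteq B$ from the footnote: if $s$ intersected $A$ then the reductions $(\phi_i)_\gamma$ would not even be defined on the same systems as $\Psi_\gamma$, and the convex-combination identity on which the whole argument rests would fail.
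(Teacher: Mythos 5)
Your proof is correct and follows essentially the same route as the paper's: both reduce the bound to the convexity of $\rho \mapsto \tr(\rho^2)$ applied to the ensemble decomposition $\Psi_\gamma = \sum_i p_i (\phi_i)_\gamma$ for each $\gamma \subseteq s$. The only cosmetic difference is that you supply an elementary proof of the convexity of the purity, where the paper cites a reference.
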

For full proofs of Theorems~\ref{Th:cgme-upper-bound} and~\ref{Th:CE-upper-bound}, see Appendix~\ref{app:ub-proofs}. We prove these theorems essentially by noting that average GME-concurrence and concentratable entanglement, respectively, of an optimal post-measurement ensemble may be viewed as an average of the values of maps derived from the functional $\rho \mapsto 1 - \tr(\rho^2)$, and then using the concavity of this functional. In doing so, we bring convex combinations of states in the post-measurement ensemble, which are just the reduced density matrices of $|\Psi\rangle$, into this functional. Since the reduced density matrices are independent of the optimization over  ensembles, so are the RHS of~\eqref{eq:cgme-ub} and~\eqref{eq:LCE-bound}.

It might be tempting to use the monotonicity of entanglement measures to get an upper bound of the form $L^E(|\Psi\rangle) \leq E(|\Psi\rangle)$, where $E$ is either the GME concurrence, $N_B$-tangle, or the concentratable entanglement and $L^E(|\Psi\rangle)$ is determined by the entanglement measure $E$. However, the $N_B$-tangle and GME-concurrence have only been shown to be entanglement monotones in the following sense: if an LOCC protocol produces from a state $\rho$ an ensemble of states $\{(p_i, \sigma_i)\}$ in a Hilbert space of the \textit{same dimension} as that of $\rho$, then $\sum_i p_i E(\sigma_i) \leq E(\rho)$, where $E$ denotes the convex roof extension~\cite{Horodecki2009} of either $C_{\gme}$ or $\tau_M$. A projective measurement of the sort we consider takes $|\Psi\rangle$ in $\mathcal{H}_A\otimes \mathcal{H}_B$ to a state $|\varphi_i\rangle$ in $\mathcal{H}_B$, which of course has a smaller dimension. As a result, we cannot use monotonicity to bound $L(|\Psi\rangle)$. Towards the contrary,~\cite{sadhukhan2017multipartite} exhibits several examples of states whose entanglement, when measured by the so-called $k$-separability based geometric measure, is exceeded by their localizable entanglement with respect to some subsystem. 

We now turn our attention to lower-bounding the localizable entanglement. We note that while known lower bounds on the GME-concurrence, namely Theorem 2 of~\cite{Dai2020} and the lower bound from Eq. 17 of~\cite{Ma2011}, may be converted into lower bounds on $\Lgme$ by using convexity arguments, we have yet to numerically find an example of a state for which these bounds give a positive value. Hence, we will not pursue this direction here and leave open the possibility of exploring these bounds further via numerics. 

In contrast, there is a useful lower bound on the localizable entanglement defined in terms of the concentratable entanglement in terms of two-point spin correlation functions.
This approach is inspired by \cite{Lausten2003}.
To define this lower bound, let $\vec{a}, \vec{b}\in \mathbb{R}^3$ be arbitrary vectors of unit norm, and define for any two $i\neq j\in B$ the two-point spin correlation function 
\begin{equation}
		Q_{\vec{a}, \vec{b}}^{ij}(|\Psi\rangle) = \langle \Psi|
		(\vec{a}\cdot\vec{\sigma}^{(i)})\otimes (\vec{b}\cdot \vec{\sigma}^{(j)})|\Psi\rangle - \langle \Psi|(\vec{a}\cdot\vec{\sigma}^{(i)})|\Psi\rangle\langle\Psi|(\vec{b}\cdot\vec{\sigma}^{(j)})|\Psi\rangle, \label{eq:Q-function}
\end{equation}
where $\vec{\sigma}^{(i)} = \sum_{k=1}^3 \sigma_k^{(i)} \hat{e_i}$, the set $\{\hat{e_i}\}_{i=1}^3$ is the standard basis for $\mathbb{R}^3$, and $\sigma_k^{(i)}$ is the $k$-th Pauli matrix acting on system $i$. We then have the following bound:

\begin{theorem}\label{Th:corr-func-lower-bound}
    The localizable entanglement defined by the square root concentratable entanglement $\sqrt{\CE{|\varphi\rangle}{s}}$ for $\varnothing \subsetneq s \subseteq B$ satisfies     
    \begin{align}\label{eq:corr-func-lb}
       \sqrt{\Lce(|\Psi\rangle)} \geq L^{\sqrt{\mathrm{CE}}}(|\Psi\rangle) \geq 
       \frac{1}{2} \max_{\substack{i\neq j \in s\\\hat{a}, \hat{b} \in S^2}}  \left|Q_{\hat{a}, \hat{b}}^{ij}(|\Psi\rangle)\right|,
    \end{align}
	where $Q_{\hat{a}, \hat{b}}^{ij}$ is the function defined in \eqref{eq:Q-function} and $S^2$ is the unit sphere in $\mathbb{R}^3$. Furthermore, the right-hand side of~\eqref{eq:corr-func-lb} is equal to the maximum singular value of the $3\times 3$ matrix $\mathbf{Q}^{ij}(|\Psi\rangle)$ defined by $(\mathbf{Q}^{ij}(|\Psi\rangle))_{pq} = Q^{ij}_{\hat{e}_p, \hat{e}_q}(|\Psi\rangle)$ for the optimal $i\neq j \in s$.
\end{theorem}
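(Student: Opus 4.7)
The plan is to prove the two inequalities of \eqref{eq:corr-func-lb} separately and then recognize the right-hand side as an operator norm. The first inequality $\sqrt{\Lce(|\Psi\rangle)} \geq \Lsqrtce(|\Psi\rangle)$ follows from Jensen's inequality: for any local ensemble $\{(p_k, |\phi_k\rangle)\} \in \mathcal{P}_{\mathrm{local}}$, concavity of the square root gives
\begin{equation*}
\sum_k p_k \sqrt{\CE{|\phi_k\rangle}{s}} \leq \sqrt{\sum_k p_k \CE{|\phi_k\rangle}{s}} \leq \sqrt{\Lce(|\Psi\rangle)},
\end{equation*}
and taking the maximum of the left-hand side over $\mathcal{P}_{\mathrm{local}}$ produces the bound.

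For the second inequality, the plan is a two-step reduction to a classical correlator bound. Fix $i \neq j \in s$ and let $L^{\mathrm{Conc}}_{ij}(|\Psi\rangle)$ denote the maximum average two-qubit concurrence of the state on $\{i,j\}$ produced by local projective measurements on $A \cup (B \setminus \{i,j\})$. Starting from any local measurement on $A$ with outcomes $\{(p_k, |\phi_k\rangle)\}$, I would refine it by a further local projective measurement on $B \setminus \{i,j\}$, producing $\{(p_k q_{l|k}, |\chi_{k,l}\rangle)\}$ with $|\chi_{k,l}\rangle = |\xi_{k,l}\rangle_{ij} \otimes |m_{k,l}\rangle_{B \setminus \{i,j\}}$ and $|\xi_{k,l}\rangle$ a pure two-qubit state. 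Since $\sqrt{\mathrm{CE}}$ is an entanglement monotone by Footnote~\ref{fn:square-root-monotone} applied to the CE monotone, this refinement obeys
\begin{equation*}
\sqrt{\CE{|\phi_k\rangle}{s}} \geq \sum_l q_{l|k}\sqrt{\CE{|\chi_{k,l}\rangle}{s}}.
\end{equation*}
Expanding \eqref{eq:CE} for $|\chi_{k,l}\rangle$ using $\{i,j\} \subseteq s$ and the product structure, the key observation is that $\tr(\chi_\gamma^2)$ depends on $\gamma \subseteq s$ only through $\gamma \cap \{i,j\}$: each of the four intersection types $\varnothing, \{i\}, \{j\}, \{i,j\}$ is realized by $2^{|s|-2}$ subsets, with the $\varnothing$- and $\{i,j\}$-types contributing $1$ and the $\{i\}$- and $\{j\}$-types each contributing $\tr(\xi_i^2)$. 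A short calculation then collapses to $\sqrt{\CE{|\chi_{k,l}\rangle}{s}} = \tfrac{1}{2}\mathrm{Conc}(|\xi_{k,l}\rangle)$, and maximizing over the combined local measurement on $A \cup (B\setminus\{i,j\})$ yields $\Lsqrtce(|\Psi\rangle) \geq \tfrac{1}{2} L^{\mathrm{Conc}}_{ij}(|\Psi\rangle)$.

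To finish, I would invoke the classical correlator bound of \cite{Lausten2003}, which gives $L^{\mathrm{Conc}}_{ij}(|\Psi\rangle) \geq |Q^{ij}_{\hat a,\hat b}(|\Psi\rangle)|$ for any unit vectors $\hat a,\hat b \in S^2$. Maximizing over $i \neq j \in s$ and $\hat a, \hat b$ delivers the second inequality of \eqref{eq:corr-func-lb}. The matrix statement is pure linear algebra: bilinearity of $Q^{ij}_{\hat a,\hat b}$ in $(\hat a,\hat b)$ gives $Q^{ij}_{\hat a,\hat b}(|\Psi\rangle) = \hat a^{\top}\mathbf{Q}^{ij}(|\Psi\rangle)\hat b$, and $\max_{\hat a, \hat b \in S^2}|\hat a^{\top}\mathbf{Q}^{ij}\hat b|$ is the standard variational characterization of the largest singular value of $\mathbf{Q}^{ij}$. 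The main technical step, and the only one requiring more than a line of bookkeeping, is the collapse identity $\sqrt{\CE{|\chi_{k,l}\rangle}{s}} = \tfrac{1}{2}\mathrm{Conc}(|\xi_{k,l}\rangle)$; once this is established, the argument is a chain of Jensen's inequality, monotonicity of $\sqrt{\mathrm{CE}}$, and an appeal to the known two-qubit correlator bound.
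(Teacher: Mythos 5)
Your proposal is correct and follows essentially the same route as the paper's proof: a two-stage measurement decomposition (first $A$, then $B\setminus\{i,j\}$), the identification of $\sqrt{\mathrm{CE}}$ with half the two-qubit concurrence on the fully measured states, LOCC monotonicity of $\sqrt{\mathrm{CE}}$, and the Verstraete--Popp correlator bound on the localizable concurrence (which the paper attributes to those works rather than to Laustsen et al., a minor citation point). The only cosmetic difference is that you compute $\CE{|\chi_{k,l}\rangle}{s}$ directly for the product post-measurement states (your ``collapse identity,'' which checks out), whereas the paper evaluates $\CE{\cdot}{\{l,m\}}$ and then invokes the monotonicity of the concentratable entanglement under enlarging $s$; both close the same gap.
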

Theorem~\ref{Th:corr-func-lower-bound} shows that the presence of classical correlations serves as an indicator for the ability to create quantum correlations. To prove it, we use the known fact~\cite{Verstraete-LE} that for a target system consisting of two qubits, the localizable entanglement defined in terms of the two-qubit concurrence is lower bounded by absolute values of all spin correlation functions across the target qubit pair. By writing the two-qubit concurrence in terms of the CE of a qubit pair and employing concavity and monotonicity arguments, we show that the localizable entanglement defined by the concurrence lower bounds $\Lsqrtce$ for any qubit pair. Hence, the magnitudes of the spin correlation functions between any pair of qubits in $B$ lower bound $\Lsqrtce$ and $\sqrt{\Lce}$ as in~\eqref{eq:corr-func-lb}. For the full proof, see Appendix~\ref{app:LCE-corr-func-LB}. Note that the square root CE is indeed an entanglement measure by the concavity of the square root function. We introduce the auxiliary quantity $L^{\sqrt{\mathrm{CE}}}$ to obtain a tighter lower bound on the localizable entanglement. 

In summary, we have presented a series of general bounds on the entanglement one may localize onto part of a quantum state via projective measurements on its subsystems. These bounds hold for arbitrary multipartite pure states, and their main significance is that they are computable purely in terms of the experimentally accessible reduced state of $|\Psi\rangle$ over the $B$ systems. We thereby avoid complicated bounds in terms of optimization problems over the state space.
Finding easily computable and non-trivial lower bounds for the LME based on the $n$-tangle and the GME-concurrence remain interesting open questions.

\section{Continuity bounds}
\label{sec:continuity-bounds}

We now turn to analyzing the continuity of the localizable entanglement and entanglement of assistance. It is useful to produce uniform continuity bounds on these functionals for both physical and mathematical reasons. Firstly, continuity places functions of quantum states on firmer physical footing. One would not expect a physical quantity to change drastically when the underlying system is perturbed by a small amount of noise. Secondly, if the value of the localizable entanglement is known or may be estimated at a particular state, then a uniform continuity bound automatically gives us upper and lower bounds for our figures of merit that are nontrivial locally in a neighborhood of the given state, thereby fulfilling one of our goals of bounding the localizable entanglement and entanglement of assistance. Finally, the \textit{Lipschitz} continuity of these functionals will form a key tool in our analysis of the concentration phenomena that they display for large Hilbert spaces (Section~\ref{sec:concentration}). We add that while it is certainly true that the average post-measurement entanglement for a fixed measurement is continuous by virtue of it being a composition of continuous functions, this does not automatically imply a uniform continuity bound on the localizable entanglement and entanglement of assistance. 

Let $E\colon \mathcal{H}_B \to [0,1]$ be an entanglement measure, and for a linear operator $L$ denote by $\|L\|_1 \coloneqq \tr(\sqrt{L^\dagger L})$ its trace norm. Suppose $E$ has the property that
there exists a concave, monotonically increasing function $f\colon \mathbb{R}_{\geq 0} \to \mathbb{R}_{\geq 0 }$ such that 
\begin{align}
    |E(|\psi\rangle) - E(|\psi'\rangle)| &\leq f(\|\psi - \psi' \|_1) \label{eq:modified-lipschitz}
\intertext{and}
    E(|\psi\rangle) &\leq f(\|\psi\|_1) \label{ineq:entanglement-bound}
\end{align}
for all states $|\psi\rangle, |\psi'\rangle \in \mathcal{H}_B$. Our next theorem then allows us to bound the variations in $L^E$ and, if $f(x) \to 0$ as $x \to 0^+$, implies that $L^E$ is uniformly continuous.
\begin{theorem}\label{Th:LE-continuity-bound}
     For any $|\Psi\rangle, |\Psi'\rangle \in \mathcal{H}_A \otimes \mathcal{H}_B$, we have
    \begin{align}\label{ineq:main-result}
        |\mathcal{L}^E(|\Psi\rangle) - \mathcal{L}^E(|\Psi'\rangle)| &\leq f(2\|\Psi - \Psi'\|_1) + \|\Psi - \Psi'\|_1 \\ |\mathcal{A}^E(|\Psi\rangle) - \mathcal{A}^E(|\Psi'\rangle)| &\leq f(2\|\Psi - \Psi'\|_1) + \|\Psi - \Psi'\|_1
    \end{align}
\end{theorem}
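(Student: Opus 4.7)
The plan is a measurement-swapping argument that proves the bound for both $L^E$ and $L^E_{\textnormal{global}}$ uniformly. Fix a rank-one projective measurement $\{M_i = |m_i\rangle\langle m_i|\}_i$ attaining (or nearly attaining) the maximum in the definition of $L^E(|\Psi\rangle)$, from the appropriate class (local product or arbitrary). Applying the \emph{same} measurement to $|\Psi'\rangle$ yields a feasible but generally suboptimal ensemble, so writing $\{(p_i, |\phi_i\rangle)\}_i$ and $\{(p_i', |\phi_i'\rangle)\}_i$ for the resulting ensembles,
\begin{equation}
    L^E(|\Psi\rangle) - L^E(|\Psi'\rangle) \leq \sum_i \bigl[p_i E(|\phi_i\rangle) - p_i' E(|\phi_i'\rangle)\bigr].
\end{equation}
Exchanging $|\Psi\rangle \leftrightarrow |\Psi'\rangle$ at the end delivers the absolute-value bound, and the argument never distinguishes local from global measurements, since both classes are closed under applying a fixed measurement to a different input state.

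The core technical tool is an ``ensemble distance'' inequality. Consider the pinching channel $\Phi(X) \coloneqq \sum_i (M_i \otimes I_B)\, X\, (M_i \otimes I_B)$; a direct computation shows that $\Phi(\Psi) = \sum_i |m_i\rangle\langle m_i|_A \otimes p_i|\phi_i\rangle\langle\phi_i|_B$ is block-diagonal, with the analogous form for $\Phi(\Psi')$. Since $\{|m_i\rangle\}$ is orthonormal, the trace norm of such a block-diagonal operator equals the sum of its block-wise trace norms, and contractivity of the trace norm under the CPTP map $\Phi$ gives
\begin{equation}
    \sum_i \bigl\|p_i|\phi_i\rangle\langle\phi_i| - p_i'|\phi_i'\rangle\langle\phi_i'|\bigr\|_1 = \|\Phi(\Psi - \Psi')\|_1 \leq \|\Psi - \Psi'\|_1.
\end{equation}
Two consequences follow: using $|\tr(L)| \leq \|L\|_1$ gives $\sum_i |p_i - p_i'| \leq \|\Psi - \Psi'\|_1$; and a triangle-inequality step inserting $\pm p_i'|\phi_i'\rangle\langle\phi_i'|$ yields $p_i \|\phi_i - \phi_i'\|_1 \leq 2\bigl\|p_i|\phi_i\rangle\langle\phi_i| - p_i'|\phi_i'\rangle\langle\phi_i'|\bigr\|_1$, whence $\sum_i p_i \|\phi_i - \phi_i'\|_1 \leq 2\|\Psi - \Psi'\|_1$.

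The last step is the algebraic split
\begin{equation}
    p_i E(|\phi_i\rangle) - p_i' E(|\phi_i'\rangle) = p_i \bigl[E(|\phi_i\rangle) - E(|\phi_i'\rangle)\bigr] + (p_i - p_i')\, E(|\phi_i'\rangle).
\end{equation}
The modified Lipschitz hypothesis~\eqref{eq:modified-lipschitz} and Jensen's inequality (concavity of $f$, $\sum_i p_i = 1$) bound the summed first term by $f\bigl(\sum_i p_i \|\phi_i - \phi_i'\|_1\bigr) \leq f(2\|\Psi - \Psi'\|_1)$, while~\eqref{ineq:entanglement-bound} (equivalently $E \leq 1$) combined with the bound on $\sum_i |p_i - p_i'|$ controls the summed second term by $\|\Psi - \Psi'\|_1$. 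The main obstacle is the factor-of-$2$ triangle-inequality trick: a tighter argument would need an inequality relating the weighted distance $p\|\phi - \phi'\|_1$ directly to $\|p\phi - p'\phi'\|_1$ without slack, which does not appear achievable in general and is precisely the reason the final bound carries $f(2\|\Psi - \Psi'\|_1)$ rather than $f(\|\Psi - \Psi'\|_1)$.
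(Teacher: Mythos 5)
Your proposal is correct and follows essentially the same route as the paper: fix the optimal measurement for one state, reuse it on the other, split $p_i E(|\phi_i\rangle) - p_i' E(|\phi_i'\rangle)$ into a state-difference term and a probability-difference term, bound $\sum_i \|p_i\phi_i - p_i'\phi_i'\|_1$ and $\sum_i |p_i - p_i'|$ by $\|\Psi-\Psi'\|_1$ via a dephasing/pinching channel, and invoke concavity and monotonicity of $f$ to obtain the factor $f(2\|\Psi-\Psi'\|_1)$ exactly as in the paper's Lemma~\ref{lemma:variation-bounds}. The only point worth adding is that condition~\eqref{ineq:entanglement-bound} is also needed in the \emph{first} term to cover measurement branches where $p_i$ or $p_i'$ vanishes (so that $|\phi_i'\rangle$ is the zero vector and \eqref{eq:modified-lipschitz} alone does not apply), a degenerate case the paper handles explicitly.
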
 
While the full proofs of Theorem~\ref{Th:LE-continuity-bound} and other results in this section are given in Appendix~\ref{app:continuity-results}, we provide some intuition for these proofs here. We prove Theorem~\ref{Th:LE-continuity-bound} by deriving analogous continuity bounds on the average post-measurement entanglement for a fixed measurement basis, as opposed to $L^E$ or $\mathcal{A}^E$ directly. The two terms on the RHS of the continuity bounds of Theorem~\ref{Th:LE-continuity-bound} derive from the triangle inequality, with the first term describing variations in the entanglement values of states in the post-measurement ensembles of $|\Psi\rangle, |\Psi'\rangle$ and the latter deriving from the variations in their respective measurement probabilities. For the first term, we use~\eqref{eq:modified-lipschitz} to upper bound the deviations in the entanglement value $E$ of states in the post-measurement ensemble by deviations in their trace distance, which we relate to $\|\Psi - \Psi'\|$ via the convexity of $f$. Simultaneously, we use~\eqref{ineq:entanglement-bound} to bound the latter term by the total variation distance of the distributions of measurement outcomes for $|\Psi\rangle, |\Psi'\rangle$, which we bound by $\|\Psi - \Psi'\|_1$ via an argument invoking the monotonicity of the trace distance under CPTP maps.

The following lemma guarantees that the $N_B$-tangle, the GME-concurrence, and the concentratable entanglement all satisfy condition~\eqref{eq:modified-lipschitz}. 

\begin{lemma}\label{lemma:3-continuity-bounds}
    We have 
    \begin{align}
        |\tau_{N_B}(|\psi\rangle) - \tau_{N_B}(|\psi'\rangle)| &\leq\label{ineq:n-tangle-continuity-bound} \sqrt{2}\| \psi - \psi' \|_1 \\
        |C_{\gme}(|\psi\rangle) - C_{\gme}(|\psi'\rangle)| &\leq\label{ineq:gme-concurrence-continuity-bound} 2^{3/4} \sqrt{\|\psi - \psi'\|_1} \\
        |C(|\psi\rangle; s) - C(|\psi'\rangle; s)| &\leq\label{ineq:ce-continuity-bound} \sqrt{2}\|\psi - \psi'\|_1
    \end{align}
    for all quantum states $|\psi\rangle, |\psi'\rangle \in \mathcal{H}_B$.
\end{lemma}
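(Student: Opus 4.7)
The plan is to prove the three bounds independently. Each measure is a bilinear functional of the state (or a sum/min of such), and in each case the continuity bound is obtained by splitting the difference of the bilinear form via the identity $ab - a'b' = a(b-b') + (a-a')b'$ and applying Cauchy--Schwarz (plus, for $C_{\gme}$, the standard inequality $|\sqrt{a} - \sqrt{b}|\leq\sqrt{|a-b|}$). A common key ingredient is the pure-state estimate $\|\psi^{\otimes 2} - (\psi')^{\otimes 2}\|_1 \leq \sqrt{2}\|\psi-\psi'\|_1$, which follows from $\|\phi - \phi'\|_1 = 2\sqrt{1 - |\langle\phi|\phi'\rangle|^2}$ applied with $\phi = \psi^{\otimes 2}$ (so the inner product becomes $F^2$, where $F = |\langle\psi|\psi'\rangle|$) together with $\sqrt{1+F^2}\leq\sqrt{2}$.

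For the $N_B$-tangle, I would exploit the phase invariance of $\tau_{N_B}(|\psi\rangle) = |\langle\psi|\tilde\psi\rangle|$ to fix the relative phase between $|\psi\rangle$ and $|\psi'\rangle$ optimally, which gives $\||\psi\rangle - |\psi'\rangle\|_2 \leq \|\psi-\psi'\|_1/\sqrt{2}$ (because $\||\psi\rangle - |\psi'\rangle\|_2^2 = 2(1-F)$ after the optimal phase choice, while $\|\psi-\psi'\|_1^2 = 4(1-F^2)$). Splitting $\langle\psi|\tilde\psi\rangle - \langle\psi'|\tilde\psi'\rangle$ as above, applying $||a|-|b||\leq|a-b|$ and Cauchy--Schwarz, and using that $\sigma_y^{\otimes N_B}$ is unitary and complex conjugation preserves the $2$-norm (so $\|\tilde\psi - \tilde\psi'\|_2 = \||\psi\rangle - |\psi'\rangle\|_2$), one obtains the upper bound $2\||\psi\rangle - |\psi'\rangle\|_2 \leq \sqrt{2}\|\psi-\psi'\|_1$.

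For the GME-concurrence, I would use $C_{\gme}(|\psi\rangle) = \min_\gamma\sqrt{2(1-\tr(\psi_\gamma^2))}$ together with $|\min f - \min g|\leq\max|f-g|$ and $|\sqrt{a} - \sqrt{b}|\leq\sqrt{|a-b|}$ to reduce to a uniform estimate of $|\tr(\psi_\gamma^2) - \tr((\psi'_\gamma)^2)|$. Invoking the swap trick $\tr(\rho_\gamma^2) = \tr(F_\gamma\, \rho^{\otimes 2})$, where $F_\gamma$ swaps only the $\gamma$ subsystem in the doubled Hilbert space, gives $|\tr(\psi_\gamma^2) - \tr((\psi'_\gamma)^2)| = |\tr(F_\gamma(\psi^{\otimes 2} - (\psi')^{\otimes 2}))| \leq \|\psi^{\otimes 2} - (\psi')^{\otimes 2}\|_1 \leq \sqrt{2}\|\psi-\psi'\|_1$ by the pure-state estimate above. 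Combining factors produces $|C_{\gme}(|\psi\rangle) - C_{\gme}(|\psi'\rangle)| \leq \sqrt{2\cdot\sqrt{2}\|\psi-\psi'\|_1} = 2^{3/4}\sqrt{\|\psi-\psi'\|_1}$. For the concentratable entanglement, the bound is quoted from Schatzki et al.~\cite{schatzki2021entangled}; it can also be reproduced directly by applying the triangle inequality to the affine expression $C(|\psi\rangle; s) = 1 - 2^{-|s|}\sum_{\gamma\subseteq s}\tr(\psi_\gamma^2)$ and reusing the per-$\gamma$ purity estimate above, with the $\gamma=\varnothing$ term dropping out.

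The main obstacle is securing the sharp factor $\sqrt{2}$ in the pure-state estimate $\|\psi^{\otimes 2} - (\psi')^{\otimes 2}\|_1 \leq \sqrt{2}\|\psi-\psi'\|_1$; it is precisely this purity-specific savings, unavailable for the generic Hölder estimate $\|\rho^{\otimes 2} - \sigma^{\otimes 2}\|_1 \leq 2\|\rho-\sigma\|_1$, that converts the naive constant $2$ in the GME-concurrence bound into the claimed $2^{3/4}$.
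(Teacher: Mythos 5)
Your proof is correct, and the constants all come out right. The $N_B$-tangle argument is essentially the paper's: the paper also fixes the global phase so that $\langle\psi|\psi'\rangle=|\langle\psi|\psi'\rangle|$, splits $\langle\psi|\tilde\psi\rangle-\langle\psi'|\tilde\psi'\rangle$ into two terms, applies Cauchy--Schwarz to get $2\,\||\psi\rangle-|\psi'\rangle\|_2=2\sqrt{2-2F}$, and then passes to $2\sqrt{2-2F^2}=\sqrt{2}\,\|\psi-\psi'\|_1$ via the pure-state Fuchs--van de Graaf identity — exactly your $(1-F)\le(1-F)(1+F)$ step in different clothing. Where you genuinely diverge is the GME-concurrence (and your optional direct CE argument): the paper bounds $|\tr(\psi_\gamma^2)-\tr((\psi')_\gamma^2)|=|\tr[(\psi'_\gamma+\psi_\gamma)(\psi'_\gamma-\psi_\gamma)]|$ by Cauchy--Schwarz in the Hilbert--Schmidt inner product, uses $\|\psi_\gamma\|_2\le 1$, the norm comparison $\|\rho-\sigma\|_2\le\tfrac{1}{\sqrt2}\|\rho-\sigma\|_1$, and monotonicity of the trace norm under partial trace, arriving at $2^{3/4}\sqrt{\|\psi-\psi'\|_1}$; you instead write $\tr(\psi_\gamma^2)=\tr(F_\gamma\,\psi^{\otimes2})$ and control $\|\psi^{\otimes2}-(\psi')^{\otimes2}\|_1=2\sqrt{1-F^4}\le\sqrt2\,\|\psi-\psi'\|_1$ via the tensor-square overlap. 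Both routes land on the same constant because the same factor of $\sqrt2$ savings appears in each — in the paper it comes from the pure-state tightening of the $1$-norm/$2$-norm comparison, in yours from $\sqrt{1+F^2}\le\sqrt2$. Your swap-trick route has the mild aesthetic advantage of handling all subsets $\gamma$ with a single operator-norm bound and of making the purity-specific savings explicit; the paper's route is more elementary (no doubled Hilbert space) and its data-processing step generalizes immediately to mixed marginals. For the CE bound both you and the paper ultimately defer to Schatzki et al., and your sketched direct derivation (dropping the $\gamma=\varnothing$ term and reusing the per-$\gamma$ purity estimate) is sound.
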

The continuity bound for the concentratable entanglement in Lemma~\ref{lemma:3-continuity-bounds} is taken from \cite{schatzki2021entangled}. The continuity bound for the $n$-tangle may be derived via a straightfoward argument involving the Cacuhy-Schwartz inequality. Finally, our derivation of the bound for the GME-concurrence resembles the derivation of the continuity bound for the CE from \cite{schatzki2021entangled} in that it relates the deviations in subsystem purities to the Hilbert-Schmidt distance of reduced states, at which point norm inequalities and the monotonicity of the trace distance under CPTP maps may be used to complete the proof.

For the choices of $f$ implied by Lemma~\ref{lemma:3-continuity-bounds}, we see that condition~\eqref{ineq:entanglement-bound} is also satisfied because our entanglement measures of choice all assume values in $[0,1]$, while $f(\|\psi\|_1) = f(1) > 1$ for all $|\psi\rangle \in \mathcal{H}_B$. Consequently, we receive from Theorem~\ref{Th:LE-continuity-bound} continuity bounds on the localizable entanglement defined by each of the three entanglement measures.
\begin{corollary}\label{cor:LE-local-bounds}
    Let $\mathcal{L}^\tau$, $\Lgme$, and $\Lce$ be the localizable entanglement when defined by the $n$-tangle, GME-concurrence, and the concentratable entanglement, respectively. Then for any quantum states $|\Psi\rangle, |\Psi'\rangle\in\mathcal{H}_A \otimes \mathcal{H}_B$, we have 
    \begin{align}
        |L^{\tau}(|\Psi\rangle) - L^{\tau}(|\Psi'\rangle)| &\leq (2\sqrt{2} + 1)\|\Psi - \Psi'\|_1 \\
        |\Lgme(|\Psi\rangle) - \Lgme(|\Psi'\rangle)| &\leq 2^{5/4}\sqrt{\|\Psi - \Psi'\|_1} + \|\Psi - \Psi'\|_1 \\
        |\Lce(|\Psi\rangle) - \Lce(|\Psi'\rangle)| &\leq (2\sqrt{2} + 1)\|\Psi - \Psi'\|_1
    \end{align}
    The same bounds hold for the entanglements of assistance $\mathcal{A}^\tau, \mathcal{A}^C,$ and $\mathcal{A}^{CE}$ as well.
\end{corollary}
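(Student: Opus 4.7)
The plan is to derive the three bounds in Corollary~\ref{cor:LE-local-bounds} as immediate consequences of Theorem~\ref{Th:LE-continuity-bound} applied to the three entanglement measures $E \in \{\tau_{N_B}, C_{\gme}, C(\cdot;s)\}$. For each choice of $E$, I would first read off a candidate function $f\colon \mathbb{R}_{\geq 0}\to[0,\infty)$ from the corresponding inequality in Lemma~\ref{lemma:3-continuity-bounds}, then verify that $f$ satisfies the two hypotheses~\eqref{eq:modified-lipschitz} and~\eqref{ineq:entanglement-bound} required by Theorem~\ref{Th:LE-continuity-bound}, and finally substitute $f$ into the theorem's conclusion~\eqref{ineq:main-result}.

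Concretely, for $E = \tau_{N_B}$ and $E = C(\cdot;s)$ the natural choice is the linear function $f(x) = \sqrt{2}\,x$, whereas for $E = C_{\gme}$ I would take $f(x) = 2^{3/4}\sqrt{x}$. All three candidates are concave and monotonically increasing on $\mathbb{R}_{\geq 0}$, and Lemma~\ref{lemma:3-continuity-bounds} directly supplies the modified-Lipschitz condition~\eqref{eq:modified-lipschitz}. The auxiliary condition~\eqref{ineq:entanglement-bound} requires $E(|\psi\rangle) \leq f(\|\psi\|_1) = f(1)$ for every normalized state; since the $N_B$-tangle, GME concurrence, and concentratable entanglement all take values in $[0,1]$, and $f(1) \in \{\sqrt{2},\,2^{3/4}\}$, both of which exceed $1$, this holds with room to spare.

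Plugging into~\eqref{ineq:main-result} reduces each bound to a one-line calculation. Writing $t \coloneqq \|\Psi - \Psi'\|_1$, for the $N_B$-tangle and CE I get $f(2t) + t = 2\sqrt{2}\,t + t = (2\sqrt{2}+1)\,t$; for the GME concurrence I get $f(2t) + t = 2^{3/4}\sqrt{2t} + t = 2^{5/4}\sqrt{t} + t$. These match the three stated inequalities. Because Theorem~\ref{Th:LE-continuity-bound} proves the analogous bound simultaneously for $L^E$ and $L^E_{\rm global}$, the corresponding bounds on $L^{\tau}_{\rm global}$, $\Lgme_{\rm global}$, and $\Lce_{\rm global}$ follow at no additional cost.

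There is no real obstacle here: the proof is pure bookkeeping once Theorem~\ref{Th:LE-continuity-bound} and Lemma~\ref{lemma:3-continuity-bounds} are in place. The only items that deserve an explicit sentence are (i) concavity and monotonicity of each candidate $f$, which are immediate from the forms $x\mapsto\sqrt{2}\,x$ and $x\mapsto 2^{3/4}\sqrt{x}$, and (ii) the verification that $f(1)\geq 1$ in each case so that~\eqref{ineq:entanglement-bound} is automatically satisfied on normalized state vectors. I would therefore expect the written proof of the corollary to be essentially a table of substitutions, occupying at most a few lines in the appendix.
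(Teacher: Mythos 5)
Your proposal is correct and follows exactly the route the paper takes: the corollary is obtained by reading off $f(x)=\sqrt{2}\,x$ (for $\tau_{N_B}$ and the CE) and $f(x)=2^{3/4}\sqrt{x}$ (for $C_{\gme}$) from Lemma~\ref{lemma:3-continuity-bounds}, checking that condition~\eqref{ineq:entanglement-bound} holds because the measures take values in $[0,1]$ while $f(1)>1$, and substituting into Theorem~\ref{Th:LE-continuity-bound}. Your arithmetic for $f(2t)+t$ in each case matches the stated constants, and the extension to the global quantities is handled the same way as in the paper.
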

We note that the numerical constants appearing on the RHS of the inequalities in Lemma~\ref{lemma:3-continuity-bounds} and Corollary~\ref{cor:LE-local-bounds} are merely artifacts of our proof technique and have no particular conceptual or technical significance here.

Corollary~\ref{cor:LE-local-bounds} shows that the localizable entanglement and entanglement of assistance defined by the $N_B$-tangle and concentratable entanglement are \emph{Lipschitz}-continuous, a stronger form of continuity. As a result, coherent state preparation errors that perturb a state by a small amount $\varepsilon$ in trace distance will only shift these figures of merit by $\mathcal{O}(\varepsilon)$. Also, in contrast to continuity bounds on various quantum entropies, which often depend on the underlying dimension, the Lipschitz constant here is independent of dimension. This dimension independence is particularly important in the context of scalability.

\section{Bounds via measure concentration}\label{sec:concentration}

We now analyze the multipartite localizable entanglement and entanglement of assistance for Haar-random states. Due to the phenomenon of measure concentration, the values of these quantities tend to concentrate around their mean values for large Hilbert spaces. This allows us to make a few interesting remarks about the behavior of multipartite entanglement in large systems. Furthermore, measure concentration implies that we can bound the deviations of the multipartite localizable entanglement and entanglement of assistance from their mean value, at least probabilistically. In this way, we may fulfill our stated goal of deriving bounds on these figures of merit.  

In our discussion, we will assume that $|\Psi\rangle$ is distributed according to the Haar measure on $\mathcal{H}_A \otimes \mathcal{H}_B$. The following theorem, which we prove in detail in Appendix~\ref{app:concentration-results}, tells us that the MEA defined by the $N_B$-tangle will be near maximal for almost all quantum states if the size of the measured subsystem is much larger than the size of the surviving subsystem.  

\begin{theorem}\label{Th:EoA-concentration}
    Assume $N_B$ is even. For any $\varepsilon > 0$, the probability that the multipartite entanglement of assistance $\mathcal{A}^\tau$ is bounded away from one satisfies
    \begin{equation}
    \Pr\bigg(\mathcal{A}^\tau(|\Psi\rangle_{AB}) \leq 1 - \sqrt{\frac{d_B}{2d_A}} - \varepsilon \bigg) \leq 2\exp\left(-\frac{2d_A d_B \varepsilon^2}{9\pi^3 (4\sqrt{2} + 2)^2} \right) \label{eq:MEA-concentration}
    \end{equation}
\end{theorem}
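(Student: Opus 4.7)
The plan is to combine Lipschitz concentration of measure on the unit sphere of $\mathcal{H}_A\otimes\mathcal{H}_B$ with a lower bound on the Haar expectation $\mathbb{E}[L^{\tau}_{\textnormal{global}}(|\Psi\rangle)]$. First I would invoke Levy's lemma: for a function on $S^{2d_A d_B - 1}$ with Euclidean Lipschitz constant $\eta$, the probability of deviating by more than $\varepsilon$ from the mean is at most $2\exp(-2 d_A d_B \varepsilon^2/(9\pi^3 \eta^2))$. Corollary~\ref{cor:LE-local-bounds} gives the trace-norm Lipschitz constant $2\sqrt{2}+1$ for $L^{\tau}_{\textnormal{global}}$, which I would convert using the pure-state identity $\|\Psi - \Psi'\|_1 = 2\sqrt{1 - |\langle\Psi|\Psi'\rangle|^2} \leq 2\||\Psi\rangle - |\Psi'\rangle\|_2$ (exploiting the phase-invariance of $L^{\tau}_{\textnormal{global}}$, which allows distances between state vectors to be compared to distances between their density matrices). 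This yields the Euclidean Lipschitz constant $\eta = 4\sqrt{2}+2$, producing exactly the exponent in~\eqref{eq:MEA-concentration}.

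Next I would prove the expectation lower bound $\mathbb{E}[L^{\tau}_{\textnormal{global}}(|\Psi\rangle)] \geq 1 - \sqrt{2 d_B/d_A}$. By Theorem~\ref{Th:n-tangle-upper-bound}, $L^{\tau}_{\textnormal{global}}(|\Psi\rangle) = F(\Psi_B, \widetilde{\Psi}_B)$. Since the Wootter's tilde map (unitary conjugation composed with complex conjugation) fixes the maximally mixed state $I/d_B$, the triangle inequality together with the unitary invariance of $\|\cdot\|_1$ and the identity $\|X^*\|_1 = \|X\|_1$ yields $\|\Psi_B - \widetilde{\Psi}_B\|_1 \leq 2\|\Psi_B - I/d_B\|_1$. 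Combining this with the Fuchs-van de Graaf inequality $F(\rho,\sigma) \geq 1 - \tfrac12\|\rho - \sigma\|_1$, the rank-based bound $\|X\|_1 \leq \sqrt{d_B}\,\|X\|_2$, and Jensen's inequality reduces the task to evaluating $\mathbb{E}\|\Psi_B - I/d_B\|_2^2 = \mathbb{E}\,\tr(\Psi_B^2) - 1/d_B = (d_B^2 - 1)/(d_B(d_A d_B + 1))$, using the standard Haar moment formula $\mathbb{E}\,\tr(\Psi_B^2) = (d_A + d_B)/(d_A d_B + 1)$. Straightforward algebraic simplification then produces the $\sqrt{2 d_B/d_A}$ appearing in the statement.

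Combining the two steps, on the complement of the Levy tail event we have $L^{\tau}_{\textnormal{global}}(|\Psi\rangle) \geq \mathbb{E}[L^{\tau}_{\textnormal{global}}] - \varepsilon \geq 1 - \sqrt{2 d_B/d_A} - \varepsilon$, which is the claim. The main technical obstacle lies in the expectation step: the Lipschitz-to-Levy conversion is a routine plug-in of Corollary~\ref{cor:LE-local-bounds}, but cleanly chaining Fuchs-van de Graaf with the $\widetilde{\cdot}$-invariance of $I/d_B$ and then extracting the clean $\sqrt{2 d_B/d_A}$ form from the Haar second-moment integral requires careful bookkeeping. A direct approach through $\mathbb{E}\|\Psi_B - \widetilde{\Psi}_B\|_2^2$ (without re-centering at $I/d_B$) would require evaluating $\mathbb{E}\,\tr(\Psi_B\widetilde{\Psi}_B)$ via Weingarten calculus, so routing through $I/d_B$ is essential for a short proof.
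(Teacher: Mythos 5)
Your proposal is correct and follows the same overall skeleton as the paper's proof: convert the trace-norm Lipschitz bound of Corollary~\ref{cor:LE-local-bounds} into a Euclidean Lipschitz constant $4\sqrt{2}+2$ on the unit sphere, lower-bound the Haar mean of $L^{\tau}_{\rm global}$ by $1-\sqrt{2d_B/d_A}$, and apply Levy's lemma. The difference lies in how the expectation bound is obtained. The paper (Lemma~\ref{lemma:EoA-tau-LB}) bounds $\mathbb{E}\|\Psi_B-\tilde{\Psi}_B\|_2$ directly via Jensen, which forces it to evaluate the cross term $\mathbb{E}[\tr(\Psi_B\tilde{\Psi}_B)]$ using the second-moment formula for $\mathbb{E}[\Psi\otimes\Psi^{\ast}]$ and a swap-trick computation, yielding $(d_A+1)/(d_Ad_B+1)$. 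You instead re-center at the maximally mixed state: since the tilde map is a trace-norm isometry fixing $I/d_B$, the triangle inequality gives $\|\Psi_B-\tilde{\Psi}_B\|_1\le 2\|\Psi_B-I/d_B\|_1\le 2\sqrt{d_B}\,\|\Psi_B-I/d_B\|_2$, and then only the standard purity formula $\mathbb{E}[\tr(\Psi_B^2)]=(d_A+d_B)/(d_Ad_B+1)$ is needed; the resulting bound $1-\sqrt{(d_B^2-1)/(d_Ad_B+1)}\ge 1-\sqrt{d_B/d_A}$ is in fact marginally stronger than the stated $1-\sqrt{2d_B/d_A}$, so the claim follows. Your route buys a shorter computation that avoids the $\Psi\otimes\Psi^{\ast}$ moment altogether, at the cost of the factor-of-two loss from the triangle inequality (which is absorbed by the slack in the final simplification); the paper's route is more direct and generalizes to situations where the relevant involution does not fix a convenient reference state. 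One minor remark: the phase-invariance of $L^{\tau}_{\rm global}$ that you invoke in the Lipschitz conversion is not actually needed, since $2\sqrt{1-|\langle\Psi|\Psi'\rangle|^2}\le 2\sqrt{2-2\,\mathrm{Re}\langle\Psi|\Psi'\rangle}=2\||\Psi\rangle-|\Psi'\rangle\|_2$ holds for arbitrary unit vectors.
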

Theorem~\ref{Th:n-tangle-upper-bound} states that $\mathcal{A}^\tau(|\Psi\rangle) = F(\Psi_B, \tilde{\Psi}_B)$, so by deriving a lower bound for the expectation value of $F(\Psi_B, \tilde{\Psi}_B)$ for Haar-random $|\Psi\rangle$, we obtain a lower bound for the expectation value of $\mathcal{A}^\tau(|\Psi\rangle)$. Since $\mathcal{A}^\tau$ is Lipschitz-continuous by Corollary~\ref{cor:LE-local-bounds}, we may invoke Levy's Lemma (see Appendix~\ref{app:concentration-results}) to derive the concentration inequality of Theorem~\ref{Th:EoA-concentration}.

We may glean from this theorem an important observation about the nature of multipartite entanglement. The entanglement of assistance of a state basically describes the maximal average value of the entanglement of states in an ensemble produced by a projective measurement on part of a multipartite system. On the other hand, one may consider the entanglement of the density matrix obtained by averaging all states produced by measuring out a state in part, i.e., the \textit{reduced density matrix}. The difference between these two approaches of quantifying the entanglement between the constituents \textit{within} a given subsystem essentially lies in the order in which the averaging and the computation of the entanglement measure are carried out. 

Naturally, we would expect the former approach to be more sensitive to detecting entanglement within a subsystem because it accounts for the correlations within each \textit{individual} state of the projected ensemble. However, the gap between the two approaches is most compellingly illustrated by the characteristic example of graph states.
The two-qubit marginals of graph states corresponding to connected graphs are separable, meaning that they only show classical correlations. On the other hand, the localizable entanglement between any two qubits of a connected graph state is maximal~\cite{Hein2006}, which implies the same for the entanglement of assistance. For this reason, graph states have served as an important motivator for the study of localizable entanglement (see the introductory discussions of~\cite{Popp-LE, krishnan2023controlling, harikrishnan2023localizing}).

Theorem~\ref{Th:EoA-concentration} shows that the maximal gap between the entanglement-of-average and average-of-entanglement approaches illustrated by graph states holds not only for special examples, but for \textit{almost all states} in a large Hilbert space. To see this, note that the bipartite entanglement entropies of quantum states cluster tightly around their maximal value~\cite[Thm.~III.3]{Hayden2006}. More directly, we have for arbitrary $\varepsilon, \delta > 0$ that 
\begin{equation}
    \Pr_{|\Psi\rangle \sim \Haar(d)}\left(S(\Psi_B) < \ln d_B - \varepsilon \right) \leq \delta,
\end{equation}
if $d_A$ is sufficiently larger than $d_B$. 
Since typical states in $\mathcal{H}_A\otimes \mathcal{H}_B$ have near-maximal entanglement entropy with respect to the partition $A|B$, it follows that their reduced states are nearly maximally mixed over $\mathcal{H}_B$. Since the maximally mixed state over $\mathcal{H}_B$ exhibits no entanglement, we conclude that when the system size of $A$ is large, the entanglement-of-average approach detects only near-minimal entanglement while the average-of-entanglement approach simultaneously detects near-maximal entanglement within $B$ for nearly all states.  

Of course, since the MEA may be defined through various multipartite entanglement measures that behave in distinct ways, Theorem~\ref{Th:EoA-concentration} does not automatically guarantee that the MEA defined by other measures will behave similarly.  However, our next result, which we also prove in Appendix~\ref{app:concentration-results}, guarantees that the MEA also concentrates near its maximal value for the GME-concurrence when $B$ is a multi-qubit system and for the concentratable entanglement when $s=B$, provided that $d_B$ is sufficiently large. To formulate the statement, let $\overline{C_{\rm GME, \beta}
}(|\Psi\rangle)$ and $\overline{C}_\beta(|\Psi\rangle;s)$ respectively denote the value of the GME-concurrence and CE averaged over the ensemble produced by a projective measurement $\beta$ performed on $|\Psi\rangle$ over the $A$ subsystem. 

\begin{theorem}\label{Th:EoA-concentration-2}    
    Fix an orthonormal basis $\beta$ of $\mathcal{H}_A$. 
    \begin{enumerate}
    \item For any subset $\gamma \subset B$, let $d_\gamma$ denote the Hilbert space dimension of the subsystem associated with $\gamma$. For all $\varepsilon > 0$, the probability that the average post-measurement GME-concurrence is bounded away from one satisfies
    \begin{multline}
        \Pr_{|\Psi\rangle \sim \Haar(d)}\left(\overline{C_{\rm{GME}, \beta}}(|\Psi\rangle) \leq 2 - 2f(B) - 4d_B\exp\left(-\frac{\sqrt{d_B}}{72\pi^3}\right) - 2d_B^{-1/4} - \varepsilon \right)\\
        \leq 2\exp\left( - \frac{2d_Ad_B\varepsilon^2}{9\pi^3(2^{7/2}+2)^2} \right)
        \label{eq:Cgme-concentration}
    \end{multline}
    where $f(B)\coloneqq \max_{\varnothing\subsetneq \gamma \subsetneq B} (d_\gamma + d_{B\setminus\gamma})/(d_B + 1) = 1/2$ when $B$ is a multi-qubit system. 
    \item For all $\varepsilon>0$, the probability that the average post-measurement CE is bounded away from one satisfies    
    \begin{equation}
        \Pr_{|\Psi\rangle \sim \Haar(d)}\bigg(\overline{C}_\beta(|\Psi\rangle; s) \leq 1 - \frac{3^{|s|}(2^{N_B - |s|} + 1)}{2^{|s|}(2^{N_B} + 1)} - \varepsilon \bigg) 
        \leq 2\exp\left(-\frac{2d_A d_B \varepsilon^2}{9\pi^3 (4\sqrt{2} + 2)^2} \right).
        \label{eq:CE-concentration}
    \end{equation}
    \end{enumerate}
\end{theorem}

The proof of Theorem~\ref{Th:EoA-concentration-2} proceeds in a way similar to that of Theorem~\ref{Th:EoA-concentration}.
Note that since $\overline{C_{\rm{GME, \beta}}}(|\Psi\rangle) \leq \mathcal{L}^C(|\Psi\rangle) \leq \mathcal{A}^C(|\Psi\rangle)$ and $\overline{C}_\beta(|\Psi\rangle;s) \leq \Lce(|\Psi\rangle) \leq \mathcal{A}^{\rm CE}(|\Psi\rangle)$ for any orthonormal basis $\beta$, the statement of the theorem automatically holds when $\overline{C_{\rm{GME}}^{(\beta)}}$ is swapped by $\mathcal{L}^C$ or $\mathcal{A}^C$ and when $\overline{C}_\beta$ is swapped by $\mathcal{L}^{CE}$ or $\mathcal{A}^{CE}$. Thus, the maximal gap between the average-of-entanglement and entanglement-of-average approaches also holds for typical states when the GME-concurrence (when $B$ consists of $N_B$ qubits) or CE (when $s=B$) are used to quantify the entanglement of states in the projected ensemble for high-dimensional Hilbert spaces. 

It is worth noting how the concentration of the MEA for the $N_B$-tangle is dependent on the optimal measurement basis. For any orthonormal basis $\beta$ of $\mathcal{H}_A$, let $\overline{\tau}_\beta(|\Psi\rangle)$ denote the value of the $N_B$-tangle averaged over the ensemble produced by a projective measurement of $\beta$ performed on $|\Psi\rangle$ over $A$. For a fixed measurement basis $\beta$, the average entanglement $\overline{\tau}_\beta$ concentrates near zero for large system sizes. In particular, for any $\varepsilon, \delta >0$, we have
\begin{equation}\label{eq:tau-beta-concentration}
    \Pr_{|\Psi\rangle\sim \Haar(d)}(\overline{\tau}_\beta(|\Psi\rangle) \geq \varepsilon) \leq \delta.
\end{equation}
if $d_A$ and $d_B$ are sufficiently large (for details, see end of appendix~\ref{app:concentration-results}). Thus, the fact that the values of $\mathcal{A}^\tau$ cluster near one when $d_A \gg d_B$ relies on an understanding of how the average value of the $N_B$-tangle behaves for the optimal measurement basis. For this purpose, we used Theorem~\ref{Th:n-tangle-upper-bound} in our proof of Theorem~\ref{Th:EoA-concentration}. 

At the same time, the fact that the values of $\overline{C_{\rm GME, \beta}}$ and $\overline{C}_\beta$ concentrate near one for multi-qubit systems and $s = B$, respectively, may be seen as a result of the fact that the values of the GME-concurrence and concentratable entanglement themselves cluster near one for typical states in $\mathcal{H}_B$ under the same conditions~\cite{Schatzki2024}.
In other words, if almost every quantum state in $\mathcal{H}_B$ has near maximal entanglement, then intuitively the projected ensemble produced by almost every state in $\mathcal{H}_A \otimes \mathcal{H}_B$ will have states with near maximal entanglement occuring with high probability.

Furthermore, our continuity bounds on the \textit{localizable entanglement} $\mathcal{L}^\tau$ imply that the values of $\mathcal{L}^\tau$ should concentrate around their mean value. To see this, note that Theorem~\ref{Th:LE-continuity-bound} shows that $\mathcal{L}^\tau$ is Lipschitz-continuous.
This allows us to use Levy's Lemma, which describes how the values of a function concentrate exponentially tightly around its mean value on Haar-random states. 
We are thus guaranteed that the values of $\mathcal{L}^\tau$ will concentrate around its mean for large system sizes. In fact, we used Levy's Lemma in Appendix~\ref{app:concentration-results} to derive the other concentration inequalities of this section in such a way. Unfortunately, computing the mean value of $\mathcal{L}^\tau$ seems to be a more challenging task than computing the mean of $\mathcal{A}^\tau$ and $\mathcal{L}^{CE}$ due to the delicate dependence of $\overline{\tau}_\beta$ on the optimal measurement basis and the optimization over specifically \textit{local} measurement bases. However, in Fig.~\ref{fig:l-tau-decay}, we present numerical evidence that the values of $\mathcal{L}^\tau$ concentrate away from one for large system sizes, reinforcing the significance of the optimal measurement in the study of $\mathcal{L}^\tau$ and $\mathcal{A}^\tau$.

\begin{figure}[h!]
    \centering
    \includegraphics[width = 0.55\columnwidth]{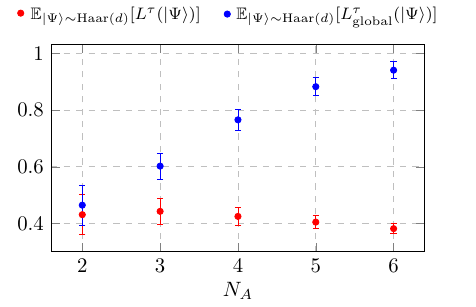}
    \caption{Each red point indicates the localizable entanglement $\mathcal{L}^\tau$ averaged across $10^4$ numerically sampled Haar-random states for $N_A = 2,3$ and across $10^3$ states for $N_A = 4,5,6$, all with $N_B = 4$. The blue points indicate the entanglement of assistance $\mathcal{A}^\tau$ averaged across the same samples. Error bars represent one standard deviation for each sample.}\label{fig:l-tau-decay}
\end{figure}

We add that the importance of the optimal global measurement basis in the asymptotic concentration of the average post-measurement value of the $N_B$-tangle around its maximal value parallels the situation of our graph state example closely. Indeed, the sensitivity to the optimal measurement seems to capture the situation of the example more than our concentration result for the concentratable entanglement does. For a connected graph state, there exist many measurements such that if two qubits remain after the measurement, their entanglement is zero or near minimal. Only for select choices of measurements is the entanglement on the surviving qubit pair maximal. Similarly, the values of $\overline{\tau}_\beta$ concentrate towards zero when $d_A \gg d_B$; it is only the value of the average post-measurement entanglement for the optimal global measurement that we can guarantee concentrates towards one. 

Finally, we note that while previous works~\cite{krishnan2023controlling, sadhukhan2017multipartite, banerjee2020uniform} have studied measure concentration phenomena in various formulations of the localizable entanglement, our work differs from theirs in a few key respects. Our results are primarily analytical, whereas prior results are mainly numerical. Our formulation also uses seed entanglement measures that are fundamentally multipartite. Finally, our work results in an important conceptual observation about the problem of quantifying internal correlations within a subsystem. 

\section{Numerics for Haar-random states}\label{sec:numerics}

To visualize how much entanglement can typically be  localized with local measurements, we plot the values of $\mathcal{L}^\tau$, $\Lgme$, and $\Lce$ for states numerically sampled uniformly at random from the Haar measure over a multi-qubit Hilbert space consisting of $N_A$ qubits in the measured-out system $A$ and $N_B$ qubits in the surviving system $B$. Thus, computing the LME requires optimizing over $2N_{A}$ real parameters. In order to perform this task, we made use of particle swarm optimization (PSO) implemented via the package \textit{pyswarms}. In Fig.~\ref{fig:comprehensive-Haar-numerics}, the $y$-coordinate of each point corresponds to the localizable entanglement of a Haar-random state while the $x$-coordinate corresponds to its value of the associated upper bound from Theorems~\ref{Th:n-tangle-upper-bound},~\ref{Th:cgme-upper-bound}, and~\ref{Th:CE-upper-bound}.

\begin{figure}[h!]
    \includegraphics[scale=0.9]{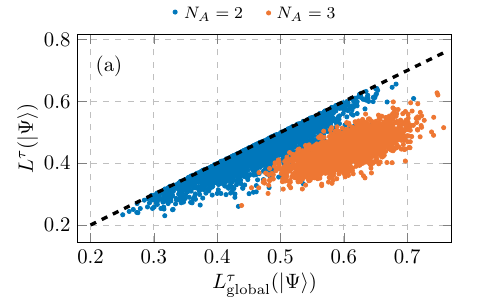}
    \includegraphics[scale=0.9]{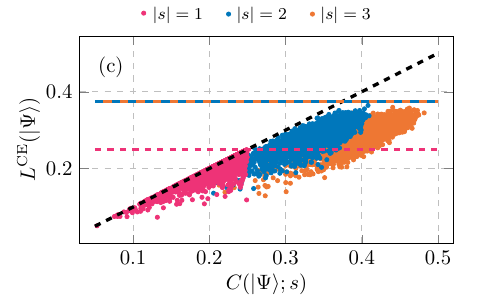}    
    \includegraphics[scale=0.9]{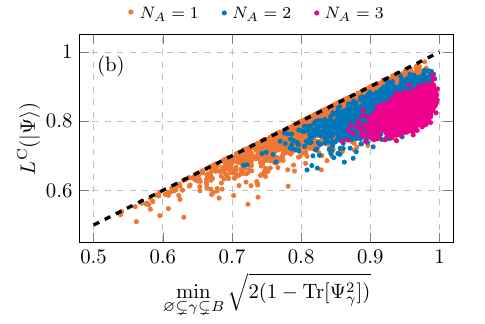}
    \includegraphics[scale=0.9]{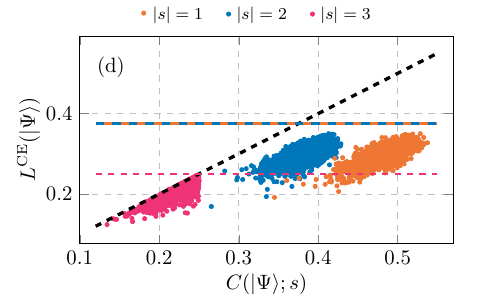}
    \caption{We compare values of the LME to the values of our upper bounds from Section~\ref{sec:bounds} for Haar-random states. Each color corresponds to a sample size of 2500 states. The dashed black lines are plots of the line $y=x$. Tightness of the bounds from Theorems~\ref{Th:n-tangle-upper-bound},~\ref{Th:cgme-upper-bound}, and~\ref{Th:CE-upper-bound} is judged by proximity of the sampled data to this line. (a) Data obtained for the LME defined by the $n$-tangle with $N_B = 4$ held fixed. Values of $\mathcal{A}^\tau(|\Psi\rangle)$ were obtained by using the fact that $\mathcal{A}^\tau(|\Psi\rangle) = F(\Psi_B, \tilde{\Psi}_B)$ from Theorem~\ref{Th:n-tangle-upper-bound}. (b) A visualization of the LME defined by the GME-concurrence with $N_B = 3$ held fixed. (c) Values of the LME defined by the CE for varying sizes of $s$ and $N_A = 1, N_B = 3$ held constant. Colored dashed lines represent simple upper bounds on $\Lce(|\Psi\rangle)$ obtained by assuming that every state in the post-measurement ensemble is \textit{absolutely maximally entangled} - that is, each marginal obtained by tracing out at least half of the subsystems in $B$ is maximally mixed. Note that these bounds coincide for $|s| = 2$ and $|s| = 3$, as shown by the overlapping orange and blue lines. (d) The situation is essentially the same as in (c), with the only difference being that $N_A = 2$ is held fixed. 
 }\label{fig:comprehensive-Haar-numerics}
\end{figure}

We find that when the localizable entanglement is defined by the $N_B$-tangle, our upper bound on $\mathcal{L}^\tau$ is relatively tight and even saturated by many states when $N_A = 2, 3$ (see Fig.~\ref{fig:comprehensive-Haar-numerics}a). However, as the number of measured qubits grows, the separation between $\mathcal{L}^\tau$ and its upper bound becomes pronounced (Fig.~\ref{fig:l-tau-decay}). This implies that as the number of measured qubits grows, global measurements gain a distinct advantage. Measure concentration effects are also displayed by the tighter clustering of points for higher $N_A$ in Fig.~\ref{fig:comprehensive-Haar-numerics}a and the decay in error bar size in Fig.~\ref{fig:l-tau-decay}. 

When defined by the GME-concurrence, the LE also seems to saturate the upper bound often for $N_A = 1$ (see Fig.~\ref{fig:comprehensive-Haar-numerics}b). Unlike $\mathcal{L}^\tau$, the separation between the values of $\Lgme$ and its upper bound seems to decay more gently as $N_A$ grows. Since the upper bound on $\Lgme$ from Theorem~\ref{Th:cgme-upper-bound} also holds for global measurements, this implies that global measurements gain an advantage more gradually in localizing GME-concurrence. Despite our lack of a formal concentration inequality for $\Lgme$ in Section~\ref{sec:concentration}, the tighter clustering of points for higher $N_A$ is suggestive of measure concentration. 

In Fig.~\ref{fig:comprehensive-Haar-numerics}c, we plot the values of $\Lce$ for random states with $N_A = 1, N_B = 3$ for varying sizes of $s\subseteq B$. The upper bound of Theorem~\ref{Th:CE-upper-bound} seems to remain fairly tight as $|s|$ grows. A simple upper bound on the values of the CE is obtained by assuming that each state in the post-measurement ensemble has the property that tracing out at least half of the subsystems in $B$ yields a maximally mixed state. For example, when $|s| = 2$, this yields $1 - \frac{1}{2^2}(1 + \frac{1}{2} + \frac{1}{2} + \frac{1}{2}) = 0.375$ since maximally mixed states over single qubits have a purity of 1/2 and the purity of the marginal over the entirety of $s$ is equal to the purity of the marginal over the single qubit in $B \setminus s$. These upper bounding values are represented as horizontal dashed lines for varying sizes of $|s|$ and are relatively tight for $|s| = 1,2,3$.

However, when we increase the number of measured qubits to $N_A = 2$ (see Fig.~\ref{fig:comprehensive-Haar-numerics}d), we find that the upper bound of Theorem~\ref{Th:CE-upper-bound} becomes markedly less tight. The performance of the upper bound is particularly diminished when $|s|$ is maximized. This can be inferred from the fact that our proof of the upper bound relies on the convexity of the function $\rho \mapsto \tr(\rho^2)$, which provides a  bound whose tightness decreases as the number of terms involved increases. This suggests that to characterize the amount of entanglement that can be localized with local measurements in terms of the CE, it is best to consider smaller sizes of the qubit label $s$.

\section{Applications to graph states}\label{sec:graph-states}

\subsection{Overview}

Graph states are a theoretically rich class of multi-qubit states that serve as a paradigm to study properties of multipartite entanglement and nonlocality. 
Furthermore, many quantum information-processing tasks such as measurement-based quantum computing \cite{raussendorf2001oneway,raussendorf2003measurement} rely on them as a resource.

Given a simple graph $G=(V,E)$ with vertex set $V$ labeling a collection of qubits and edge set $E\subseteq \{\{a,b\}: a\neq b\in V\}$, we define the graph state 
\begin{equation}\label{eq:ord-graph-state}
    |G\rangle \coloneqq \prod_{\{a,b\} \in E} \text{CZ}_{ab} |+\rangle^{\otimes N},
\end{equation}
where $|+\rangle \coloneqq (|0\rangle + |1\rangle)/\sqrt{2}$ and $\text{CZ}_{ab}$ is a controlled-Z gate acting between qubits $a$ and $b$. Note that the controlled-Z operator between two fixed qubits does not depend on the choice of control and target qubit, preventing ambiguity in the above definition. 

A problem of considerable interest is characterizing the possible transformations from one graph state to another.
However, characterizing one-way transformations allowing an arbitrary sequence of local measurements and local unitaries is a rather challenging task. On the other hand, local Pauli Measurements (LPM) and Local Clifford (LC) operations are convenient to analyze because these operations transform the underlying graph structure in a simple way when they are applied to graph states. 
Nevertheless, it was shown in~\cite{Dahlberg2020} that deciding whether a source graph state $|G_1\rangle$ may be transformed into a target graph state $|G_2\rangle$ on some subsystem using even the well-behaved set of local Clifford operations, local Pauli measurements, and classical communication is an NP-complete problem.
A similar result was shown for the problem of transforming an arbitrary graph state into a tensor product of Bell pairs~\cite{Dahlberg2020-2}. Thus, deciding the allowable graph state transformations even under a set of relatively simple operations is a challenging problem.

Our work sheds light on this problem by utilizing the framework from Section~\ref{sec:bounds}. This analysis allows insights into what graph state transformations are possible using projective measurements and general local unitary operations, going beyond the restrictions to LC + LPM from prior studies. Our work also provides a simple criterion for deciding whether a certain subset of graph state transformations are possible. This criterion is phrased in terms of a matrix equation, which is solvable in polynomial time using, for example, Gaussian elimination. In this way, our work circumvents the bottleneck imposed by the NP-completeness results of~\cite{Dahlberg2020, Dahlberg2020-2}. Our study extends the previous work of~\cite{harikrishnan2023localizing}, which highlights various properties of the post-measurement ensemble produced by LC + LPM on graph states with Pauli noise, and, which~\cite{amaro2018estimating} provides estimates for the localizable entanglement defined by the bipartite logarithmic negativity on noisy graph states.

\subsection{Graph state post-measurement ensembles}\label{sec:post-meas-ensembles}
We now establish some notation for the remainder of Section~\ref{sec:graph-states}. Let $G = (V, E)$ be a graph whose vertex set $V$ is partitioned into two nonempty sets $A$ and $B$. When discussing the LME or MEA, we assume as before that measurements are performed on the subsystem $A$, which is then subsequently discarded. We again take $N \coloneqq |V|$, $N_A \coloneqq |A|$, and $N_B \coloneqq |B|$. Let $G - A$ denote the subgraph of $G$ induced by $B$. That is, $G-A$ is the graph having $B$ as a vertex set and whose edge set is the collection of all edges of $G$ connecting only vertices in $B$. Let $\mathbf{\Gamma}_{BA}$ denote the $N_B\times N_A$ block of the adjacency matrix of $G$ describing the connectivity between $A$ and $B$. That is, for vertices $b\in B$ and $a\in A$, the matrix element $(\mathbf{\Gamma})_{ba}$ is one if there is an edge connecting $a$ and $b$ and zero otherwise. We denote by $\mathbb{F}_2^A$ and $\mathbb{F}_2^B$ the fields of binary vectors with $N_A$ and $N_B$ components, respectively.
Let $\sigma_i^{\mathbf{z}}$ for $\mathbf{z} \in \mathbb{F}_2^B$ denote the Pauli string consisting of the Pauli operator $\sigma_i$ acting on all systems in $B$ whose component in $\mathbf{z}$ is one and the identity operator acting on all other systems. Finally, let $\mathbf{D} \in \mathbb{F}_2^B$ denote the binary vector whose $b$-th component is one if vertex $b\in B$ has even degree in the subgraph $G-A$ and zero otherwise. We will refer to $\mathbf{D}$ as the \textit{degree vector} of $G - A$. 

First, we will apply our analysis of the entanglement of assistance from prior sections to graph states. To facilitate our discussion, we will refer to a quantum state $|\psi\rangle \in \mathcal{H}_B$ as a $\tau = 1$ state if $\tau_{N_B}(|\psi\rangle) = 1$ and as a $\tau = 0$ state if $\tau_{N_B}(|\psi\rangle) = 0$. Technical details and proofs are given in Appendix~\ref{app:unweighted-graph-results}.
The following lemma may be derived by using standard tools from the stabilizer formalism~\cite{Hein2004}.

\begin{lemma}\label{lemma:fidelity-graph-lemma}
    Suppose $N_B$ is even. We have $F(G_B, \tilde{G}_B) = 1$ if the matrix equation $\mathbf{\Gamma}_{BA}\mathbf{x} = \mathbf{D}$ has a solution $\mathbf{x}\in \mathbb{F}_2^A$ and $F(G_B, \tilde{G}_B) = 0$ otherwise. Here, the matrix multiplication is understood to be modulo 2. 
\end{lemma}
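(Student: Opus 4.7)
The plan is to exploit the rich structure of the reduced state $G_B$ in the graph-state basis associated to $G-A$, show that $\tilde{G}_B$ differs from $G_B$ only by a coset shift in $\mathbb{F}_2^B$, and then read off the fidelity dichotomy from the mutual orthogonality of distinct graph-state basis vectors.

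First, I would analyze the post-measurement ensemble obtained by measuring the qubits in $A$ in the computational basis. A standard stabilizer computation shows that outcome $\mathbf{a} \in \mathbb{F}_2^A$ occurs with uniform probability $2^{-N_A}$ and leaves $B$ (up to an irrelevant phase) in the state $|\phi_\mathbf{s}\rangle := \sigma_z^\mathbf{s}|G-A\rangle$, where $\mathbf{s} = \mathbf{\Gamma}_{BA}\mathbf{a}$ records the parity of cross-edges incident to each vertex of $B$. The linear map $\mathbf{a} \mapsto \mathbf{\Gamma}_{BA}\mathbf{a}$ has image $S := \mathrm{colspan}(\mathbf{\Gamma}_{BA}) \subseteq \mathbb{F}_2^B$ with uniform fibers, so
\begin{equation*}
	G_B = \frac{1}{|S|}\sum_{\mathbf{s}\in S} |\phi_\mathbf{s}\rangle\langle\phi_\mathbf{s}|.
\end{equation*}
Since $\{|\phi_\mathbf{t}\rangle\}_{\mathbf{t}\in\mathbb{F}_2^B}$ is the graph-state basis of $G-A$ and therefore orthonormal, this is already a spectral decomposition: $G_B = |S|^{-1} P_S$, with $P_S$ the orthogonal projector onto $\mathrm{span}\{|\phi_\mathbf{s}\rangle:\mathbf{s}\in S\}$.

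Next, I would compute $|\tilde{\phi}_\mathbf{s}\rangle = \sigma_y^{\otimes N_B}|\phi_\mathbf{s}^*\rangle$. Because $|G-A\rangle$ and $\sigma_z$ are real in the computational basis, $|\phi_\mathbf{s}^*\rangle = |\phi_\mathbf{s}\rangle$. Using $\sigma_y^{\otimes N_B} = i^{N_B}\sigma_x^{\otimes N_B}\sigma_z^{\otimes N_B}$ together with the stabilizer identity $\prod_{v\in B} K_v|G-A\rangle = |G-A\rangle$, which after ordering all $X$'s to the left of all $Z$'s yields $\sigma_x^{\otimes N_B}\sigma_z^{\mathbf{d}}|G-A\rangle = \pm|G-A\rangle$ for $\mathbf{d}_b := \deg_{G-A}(b) \bmod 2$, I would obtain
\begin{equation*}
	|\tilde{\phi}_\mathbf{s}\rangle \;\propto\; \sigma_z^{\mathbf{s}+\mathbf{D}}|G-A\rangle \;=\; |\phi_{\mathbf{s}+\mathbf{D}}\rangle,
\end{equation*}
where addition is in $\mathbb{F}_2^B$ and $\mathbf{D} = \mathbf{1} + \mathbf{d}$ is precisely the paper's even-degree indicator vector of $G-A$. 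Consequently $\tilde{G}_B = |S|^{-1} P_{\mathbf{D}+S}$.

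Finally, the sets $S$ and $\mathbf{D}+S$ are either equal (iff $\mathbf{D}\in S$) or disjoint. If $\mathbf{\Gamma}_{BA}\mathbf{x} = \mathbf{D}$ has a solution $\mathbf{x}\in\mathbb{F}_2^A$, then $\mathbf{D}\in S$, the two projectors coincide, $G_B = \tilde{G}_B$, and $F(G_B,\tilde{G}_B) = 1$. Otherwise, $P_S P_{\mathbf{D}+S} = 0$, hence $\sqrt{G_B}\,\tilde{G}_B\sqrt{G_B} = 0$ and $F(G_B,\tilde{G}_B) = 0$. The main technical care lies in correctly tracking the $X$\nobreakdash-$Z$ anticommutation signs and the prefactor $i^{N_B}$ when reducing $\sigma_y^{\otimes N_B}\sigma_z^\mathbf{s}$ modulo the stabilizer; these contribute only a unit-modulus scalar and therefore drop out of the final fidelity computation, which depends only on the supports of $G_B$ and $\tilde{G}_B$ within the graph-state basis.
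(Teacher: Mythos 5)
Your proposal is correct and follows essentially the same route as the paper's proof: both identify $G_B$ as (a multiple of) the projector onto $\Span\{\sigma_z^{\mathbf{\Gamma}_{BA}\mathbf{z}}|G-A\rangle\}$, show that conjugation by $\sigma_y^{\otimes N_B}$ shifts the graph-state basis labels by the even-degree vector $\mathbf{D}$ up to a unimodular phase, and conclude via the equal-or-disjoint coset dichotomy for $S$ and $\mathbf{D}+S$. The only cosmetic difference is that you derive the shift from the single global stabilizer product $\prod_{b\in B}K_b$ rather than the paper's qubit-by-qubit manipulation in its Lemma on $\sigma_y^{\otimes N_B}\sigma_z^{\mathbf{z}}|G-A\rangle$.
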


Remarkably, this simple lemma gives us a powerful criterion for characterizing the entanglement structure of states resulting from projective measurements on graph states. To see this, recall that Theorem~\ref{Th:n-tangle-upper-bound} implies that $\mathcal{A}^\tau(|G\rangle)$ is equal to $F(G_B, \tilde{G}_B)$. Hence, if $\mathbf{\Gamma}_{BA}\mathbf{x} = \mathbf{D}$ does not have a solution, Lemma~\ref{lemma:fidelity-graph-lemma} implies that $\mathcal{A}^\tau(|G\rangle) = 0$, which means that every possible projective measurement on the $A$ subsystems will result in only $\tau = 0$ states without regard to whether the measurement is local or global. Furthermore, Lemma~\ref{lemma:fidelity-graph-lemma} shows that if $\mathbf{\Gamma}_{BA} \mathbf{x} = \mathbf{D}$ \textit{does} have a solution, then there exists a measurement on $A$ that \textit{only} produces $\tau = 1$ states. We summarize these conclusions in the following theorem.   

\begin{theorem}\label{Th:main-graph-thm}~
    \begin{enumerate}
        \item If $\mathbf{\Gamma}_{BA} \mathbf{x} = \mathbf{D}$ does not have a solution in $\mathbb{F}_2^A$, then it is impossible to extract any kind of state other than a $\tau = 0$ state via projective measurements, either local or global, on the $A$ subsystem of $|G\rangle$.
        
        \item Suppose  $\mathbf{\Gamma}_{BA} \mathbf{x} = \mathbf{D}$ has a solution $\mathbf{x} \in \mathbb{F}_2^A$. Then there exists a projective measurement on the $A$ subsystem of $|G\rangle$ such that every possible post-measurement state is a $\tau = 1$ state. 
    \end{enumerate}
\end{theorem}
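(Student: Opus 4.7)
The plan is to derive this theorem as an immediate corollary of Lemma~\ref{lemma:fidelity-graph-lemma} combined with Theorem~\ref{Th:n-tangle-upper-bound}. The key observation is that Theorem~\ref{Th:n-tangle-upper-bound} identifies $L^\tau_{\mathrm{global}}(|G\rangle)$ exactly with $F(G_B, \tilde{G}_B)$, and Lemma~\ref{lemma:fidelity-graph-lemma} shows this fidelity is always either $0$ or $1$ for graph states, with the value determined precisely by solvability of $\mathbf{\Gamma}_{BA}\mathbf{x}=\mathbf{D}$. So the proof reduces to interpreting these two extreme values in terms of the post-measurement ensembles.

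For part (1), suppose $\mathbf{\Gamma}_{BA}\mathbf{x}=\mathbf{D}$ has no solution. Combining the two cited results gives $L^\tau_{\mathrm{global}}(|G\rangle) = 0$. By the definition of the MEA, this means that for every rank-one projective measurement on $A$ (in particular every local one, since $L^\tau \le L^\tau_{\mathrm{global}}$) producing the ensemble $\{(p_i,|v_i\rangle)\}_i$, we have $\sum_i p_i \tau_{N_B}(|v_i\rangle) \le 0$. Since $\tau_{N_B}\ge 0$ pointwise, every outcome with $p_i>0$ must satisfy $\tau_{N_B}(|v_i\rangle)=0$, which is precisely the claim.

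For part (2), suppose the equation admits a solution, so $L^\tau_{\mathrm{global}}(|G\rangle)=1$. I would next argue that this supremum is attained: the set of rank-one projective measurements on $A$ can be parametrized by an orthonormal basis of $\mathcal{H}_A$ (i.e., an element of the unitary group $U(d_A)$), which is compact, and the map sending a measurement to the average $N_B$-tangle of its post-measurement ensemble is continuous (follows from Lemma~\ref{lemma:3-continuity-bounds}). Hence there is an actual measurement $\{(p_i,|v_i\rangle)\}_i$ with $\sum_i p_i \tau_{N_B}(|v_i\rangle)=1$. Since $\tau_{N_B}(|v_i\rangle)\le 1$ and the $p_i$ form a probability distribution, every branch with $p_i>0$ must satisfy $\tau_{N_B}(|v_i\rangle)=1$, giving the desired measurement.

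The real work is therefore hidden in Lemma~\ref{lemma:fidelity-graph-lemma}, which is where I expect the main technical difficulty to reside (and which would be the focus of a separate proof): one has to compute $F(G_B,\tilde G_B)$ using the stabilizer structure of $|G\rangle$ and the explicit form of Wootter's tilde on a graph state, showing that the overlap $\langle G|\tilde G\rangle$ restricted to $B$ collapses to a $\{0,1\}$-valued indicator keyed on whether the degree vector $\mathbf{D}$ lies in the $\mathbb{F}_2$-column span of $\mathbf{\Gamma}_{BA}$. Once that lemma is in hand, the present theorem is essentially a one-line consequence together with the compactness/continuity argument above.
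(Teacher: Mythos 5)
Your proposal is correct and follows essentially the same route as the paper, which likewise obtains the theorem as a direct consequence of Lemma~\ref{lemma:fidelity-graph-lemma} together with the identity $L^\tau_{\mathrm{global}}(|G\rangle) = F(G_B,\tilde G_B)$ from Theorem~\ref{Th:n-tangle-upper-bound}. The only cosmetic difference is your compactness/continuity argument for attainment of the optimum in part (2), which the paper does not need because its proof of Theorem~\ref{Th:n-tangle-upper-bound} explicitly constructs a measurement achieving $F(\Psi_B,\tilde\Psi_B)$.
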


While the matrix $\mathbf{\Gamma}_{BA}$ is a relatively transparent object to work with, it is possible to phrase implications of Theorem \ref{Th:main-graph-thm} directly in terms of graph properties. A particularly simple example of this is presented in the following corollary.

\begin{corollary}\label{cor:tau-and-vertex-degrees}
    If there exists a vertex in $B$ having no neighbors in $A$ and even degree in $G - A$, then it is impossible to extract any kind of state other than a $\tau = 0$ state via projective measurements on $A$, either local or global. 
\end{corollary}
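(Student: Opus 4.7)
The plan is to reduce the corollary directly to Theorem~\ref{Th:main-graph-thm}(1) by exhibiting a specific row of the equation $\mathbf{\Gamma}_{BA}\mathbf{x} = \mathbf{D}$ that cannot be satisfied. Since the theorem already delivers the desired conclusion once unsolvability of this linear system is established, the entire task becomes a short combinatorial sanity check on how the hypothesis interacts with the matrix $\mathbf{\Gamma}_{BA}$ and the degree vector $\mathbf{D}$.

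Concretely, I would let $b \in B$ be a vertex satisfying the hypothesis, namely $b$ has no neighbors in $A$ and has even degree in $G - A$. The first condition says that $(\mathbf{\Gamma}_{BA})_{ba} = 0$ for every $a \in A$, so the $b$-th row of $\mathbf{\Gamma}_{BA}$ is identically zero. Consequently, for any $\mathbf{x} \in \mathbb{F}_2^A$ the $b$-th entry of $\mathbf{\Gamma}_{BA}\mathbf{x}$ is the empty mod-$2$ sum, i.e.\ $0$. The second condition says, by the definition of $\mathbf{D}$ given just before Section~\ref{sec:post-meas-ensembles}, that the $b$-th component of $\mathbf{D}$ equals $1$. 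These two observations are incompatible, so the system $\mathbf{\Gamma}_{BA}\mathbf{x} = \mathbf{D}$ admits no solution in $\mathbb{F}_2^A$.

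With unsolvability established, invoking Theorem~\ref{Th:main-graph-thm}(1) immediately yields the conclusion: no projective measurement on $A$, whether local or global, can produce any post-measurement state with $\tau_{N_B} \neq 0$ on $B$. There is no real obstacle here; the only point to be careful about is matching the convention in the definition of $\mathbf{D}$ (one for even degree, zero for odd), since if this convention were reversed the analogous statement would be about odd-degree isolated-from-$A$ vertices. Given that the definition is explicit in the text, the proof is a one-line application of Theorem~\ref{Th:main-graph-thm} once the row argument is spelled out.
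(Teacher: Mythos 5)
Your argument is correct and matches the paper's own justification exactly: both identify the zero row of $\mathbf{\Gamma}_{BA}$ corresponding to the hypothesized vertex, note that the same component of $\mathbf{D}$ equals one by the even-degree convention, conclude that $\mathbf{\Gamma}_{BA}\mathbf{x} = \mathbf{D}$ is unsolvable, and invoke Theorem~\ref{Th:main-graph-thm}. No differences worth noting.
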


To see that this corollary holds, suppose there is a vertex in $B$ that has even degree in $G-A$, so that its associated component in $\mathbf{D}$ is one. If said vertex also has no neighbors in $A$, then every matrix element corresponding to its row in $\mathbf{\Gamma}_{BA}$ is zero. Hence, $\mathbf{D}$ cannot lie in the column space of $\mathbf{\Gamma}_{BA}$, meaning that the equation $\mathbf{\Gamma}_{BA}\mathbf{x} = \mathbf{D}$ has no solution. The conclusion of the corollary then follows from Theorem~\ref{Th:main-graph-thm}.

The results of this subsection extend previously-known observations regarding the $M$-tangle in an important way. It was shown in~\cite{schatzki2023} that for a graph state $|G'\rangle$ on an even number $M$ of qubits, we have $\tau_M(|G'\rangle) = 1$ if $G'$ contains a vertex of even degree and $\tau_M(|G'\rangle) = 0$ otherwise. It follows from the well-known Pauli measurement rules for graph states~\cite{Hein2004, Hein2006} that performing a $\sigma_z$ measurement on each qubit in $A$ of the graph state $|G\rangle$ results in the graph state $|G - A\rangle$. Hence, if $G - A$ has an even number of vertices all of odd degree, then $\mathcal{L}^\tau(|G\rangle) = 1$. However, if $G - A$ contains a vertex of even degree, the observation of~\cite{schatzki2021entangled} yields no conclusion about the value of $\mathcal{L}^\tau$. The results of this section provide a useful tool for analyzing such situations. 

\begin{figure}[h!]
    \centering
    \includegraphics[width=0.8\columnwidth]{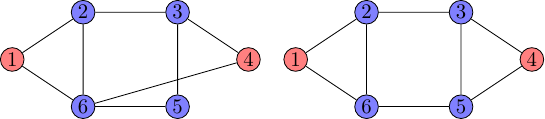}
    \caption{Two graphs on six vertices divided into qubits $A$ (red) and $B$ (blue).}\label{fig:even-deg-graphs}
\end{figure}

For example, for the graphs in Fig.~\ref{fig:even-deg-graphs}, every vertex in $G - A$ has even degree. Thus, the aforementioned analysis is inconclusive here. However, note that vertex 5 of the left graph is not connected to any vertex of $A$. Hence, by Corollary~\ref{cor:tau-and-vertex-degrees} every projective measurement on the $A$ subsystem of the associated graph state may only produce $\tau = 0$ states. By a straightforward computation, the right graph produces a solvable matrix equation $\mathbf{\Gamma}_{BA} \mathbf{x} = \mathbf{D}$, meaning that we can measure $A$ for the corresponding graph state in such a way that every state in the post-measurement ensemble maximizes the $N_B$-tangle. 

Our results also agree with and strengthen known examples in the literature. For instance, the authors of~\cite{deJong2024} considered the problem of extracting GHZ states from linear cluster states. As an example of their conclusions, they take $N_B = 4$ and consider each of the 35 possible configurations of $A$ and $B$ for a 7-qubit graph state in Fig. 1 of their paper. They identify in each case whether or not it is possible to extract a GHZ state over $B$ via LC on all qubits and LPM on $B$. It is not immediately clear whether allowing local unitaries and general projective measurements on $A$ might allow one to extract a GHZ state in the 20 configurations for which this is impossible under LC + LPM. However, Theorem~\ref{Th:main-graph-thm} shows that this is not the case for any configuration. In fact, a simple visual inspection of the graphs together with Corollary~\ref{cor:tau-and-vertex-degrees} shows that this is impossible in 16 of the configurations. Moreover, our results imply that it is impossible to extract \textit{any} $\tau = 1$ state, not just GHZ states.

\begin{figure}[h!]
    \centering
    \includegraphics[width=0.5\columnwidth]{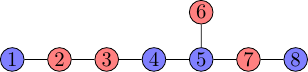}
    \caption{A graph on eight vertices divided into qubits $A$ (red) and $B$ (blue).}\label{fig:counterexample}
\end{figure}

One might wonder whether it is possible to generate an ensemble of all $\tau = 1$ states using \textit{local} measurements whenever $\mathbf{\Gamma}_{BA}\mathbf{x} = \mathbf{D}$ has a solution. If this were true, we could  strengthen the second part of Theorem~\ref{Th:main-graph-thm}. However, the graph of Fig.~\ref{fig:counterexample} provides a likely counterexample to such a conjecture. Numerically, we observed that $\mathcal{L}^\tau$ assumes a value of zero on the corresponding graph state ($<2\times 10^{-16}$ in the numerical optimization). On the other hand, the second part of Theorem~\ref{Th:main-graph-thm} implies that $\mathcal{A}^\tau$ assumes a value of one on this graph state. Thus, there are situations in which global measurements may have an advantage in extracting entanglement from graph states.

In summary, the chief contribution of our study of $\mathcal{L}^\tau$ on graph states is a characterization of achievable graph state transformations in terms of a simple matrix equation. This formalism offers two appealing advantages. First, while deciding whether an arbitrary graph state transformation using LC + LPM is achievable is an NP-complete problem, Theorem~\ref{Th:main-graph-thm} offers a polynomial-time test to determine whether \textit{specific} transformations are possible.  
The second advantage is that, when our test tells us that it is impossible to extract a certain kind of graph state, it does so for \textit{any} kind of local unitary operations and projective measurements, as opposed to the traditional LC + LPM framework.

\subsection{Insights from $\Lce$ and $\Lgme$ on graph states}

If the entanglement of a graph state $|G'\rangle$ over $B$ exceeds the localizable entanglement of the state $|G\rangle$, then it is impossible to extract $|G'\rangle$ via LPM on $A$ and LC operations because LC + LPM on graph states yield a post-measurement ensemble of LC-equivalent states~\cite{Hein2004,harikrishnan2023localizing}. We combine this observation with our bounds on $\Lce$ and $\Lgme$ in this section to identify scenarios in which it is impossible to extract states of interest from $|G\rangle$.

Our upper bound~\eqref{eq:LCE-bound} on $\Lgme$ assumes a value of zero if and only if there exists a nonempty, proper subset of vertices in $B$ that is disconnected from the rest of $G$. Thus, the upper bound on $\Lgme$ just confirms the fact that we cannot extract a state over $B$ that shows genuine multipartite entanglement if the initial state is already biseparable with one part contained in $B$. 
Note that this does not automatically mean $\Lgme$ is zero in all cases where $G$ is disconnected, or even when $G-A$ is disconnected. 

\begin{figure}[h!]
    \centering
    \includegraphics[width=0.53\columnwidth]{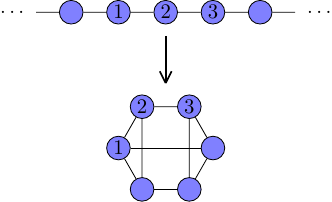}
    \caption{An example where AME state extraction via LC + LPM is not feasible. We consider the subsystem $A$ to be some subset of the unlabeled qubits of the cluster state (above). }\label{fig:CE-analysis}
\end{figure}

To understand the implications of our bounds on $\Lce$, it is useful to consider a few examples. Absolutely maximally entangled (AME) states are multipartite states such that all of the reduced states obtained by tracing out at least half of the subsystems are maximally mixed~\cite{Helwig2012}. It is known that AME graph states over five and six qubits are unique up to local unitaries~\cite{Scott2004, helwig2013}. The only other multi-qubit AME graph states that exist are locally unitarily equivalent to three-qubit GHZ states and two-qubit Bell pairs, respectively, and the methods of extracting GHZ states from linear cluster states and Bell pairs from arbitrary graphs via LC + LPM have been characterized in~\cite{deJong2024, raussendorf2001oneway}. In Fig.~\ref{fig:CE-analysis}, we consider the task of transforming a linear cluster state $|L_N\rangle$ over $N \geq 6$ qubits into the pictured 6-qubit AME graph state $|\AME_6\rangle$ via LPM on $A$ together with LC operations. We take $s$ to be the set of qubits labeled 1, 2, and 3, none of which are on the boundary of the line graph. If this state transformation task is possible, we must have $C(|\AME_6\rangle; s) \leq \Lce(|L_N\rangle; s) \leq C(|L_N\rangle; s)$ by Theorem~\ref{Th:CE-upper-bound}. However, the proposed inequality is violated because $C(|\AME_6\rangle; s) = 0.578125$ while $C(|L_N\rangle; s) = 0.5625$. Hence, it is not possible to extract a 6-qubit AME graph from a linear cluster state in such a way that the connectivity of $s$ with its neighbors changes as shown in Fig.~\ref{fig:CE-analysis}. Similarly, we see that the five-qubit AME state extraction proposed in Fig.~\ref{fig:CE-analysis-2} is impossible by considering values of the concentratable entanglement.

\begin{figure}[h!]
    \centering
    \includegraphics[width=0.3\columnwidth]{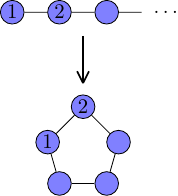}
    \caption{Another impossible AME state extraction using LC + LPM. We again take the measured subsystem $A$ to be some subset of the unlabeled qubits of the cluster state (above).}\label{fig:CE-analysis-2}
\end{figure}

A significant aspect of these examples comes from the fact that the concentratable entanglement of a graph state only depends on the local connectivity of $s$. It follows from a standard computation~\cite{Hein2004} of the subsystem purities for graph states that the concentratable entanglement of graph state $|G\rangle$ is equal to
\begin{equation}\label{eq:CE-adj-matrix}
    C(|G\rangle; s) = 1 - \frac{1}{2^{|s|}} \sum_{\gamma\subseteq s} \exp_2(- \rk(\mathbf{\Gamma}_{\gamma \overline{\gamma}})),
\end{equation}
where $\exp_2(x)=2^x$ and $\mathbf{\Gamma}_{\gamma \overline{\gamma}}$ is the block of the adjacencey matrix $\mathbf{\Gamma}$ of $G$ that describes the connectivity between $\gamma$ and its complement $\overline{\gamma} \coloneqq V\setminus \gamma$. Note that $\mathbf{\Gamma}_{\gamma \overline{\gamma}}$ only depends on the edges between the members of $\gamma$ and their neighbors in $\overline{\gamma}$. Hence, for a graph $G'$ containing $s$ among its vertices, we have $C(|G\rangle; s) = C(|G'\rangle; s)$ when the graphs obtained by taking the subgraphs in $G$ and $G'$ induced by $s$ and connecting their vertices to their respective neighbors within $G$ and $G'$ are the same. 

\begin{figure}[h!]
    \centering
    \includegraphics[width=0.75\columnwidth]{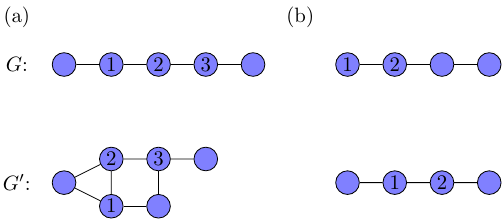}
    \caption{It is impossible to extract $|G'\rangle$ from $|G\rangle$ if locally, $G$ and $G'$ have the edge structure pictured. 
    }\label{fig:subgraphs}
\end{figure}

Hence, our examples do not only rule out the extraction of 5-qubit and 6-qubit AME states from linear cluster states in the manner shown in Figs.~\ref{fig:CE-analysis} and~\ref{fig:CE-analysis-2}. They eliminate the possibility of a much broader class of transformations. In particular, our prior analysis shows that if \textit{any} two graphs $G$ and $G'$ containing vertices 1, 2, and 3 are such that the edges between these vertices and their neighbors are as shown in Fig.~\ref{fig:subgraphs}, then it is impossible to extract $|G'\rangle$ from $|G\rangle$. Note that such a transformation is forbidden regardless of the connections between the neighbors of $s$ among themselves. In future work, it would be interesting to identify other classes of prohibited graph state transformations by analyzing the values of the CE for certain choices of $s$.

A common approach to studying the multipartite entanglement of graph states is to use the Schmidt measure~\cite{Hein2004}. One could in principle use the monotonicity of the Schmidt measure to study prohibited graph state extractions. However, the Schmidt measure can sometimes be hard to compute, even for graph states~\cite{Hein2004, schatzki2023}. A pleasing aspect of the approach of this subsection is that the concentratable entanglement of a graph state may be readily computed in terms of the adjacency matrix of the graph via~\eqref{eq:CE-adj-matrix}. 
With that said, for the purpose of studying allowable graph state transformations, it would be quite reasonable to simply consider the subsystem entropy vector as well. Finally, we add that in our context, computations involving the concentratable entanglement may also be approached through the stabilizer formalism framework of~\cite{Cullen2022}.

\subsection{Generalization to weighted graph states}

If the CZ gates in the graph state generating circuit~\eqref{eq:ord-graph-state} are replaced by controlled-phase (CP) gates, i.e., gates that act on a pair of qubits $a$ and $b$ as
\begin{equation}
    \mathrm{CP}_{ab}(\varphi_{ab}) = (|0\rangle\langle 0|)_a \otimes I_b + (|1\rangle\langle 1|)_a \otimes S(\varphi_{ab})_b 
\end{equation}
with $S(\varphi_{ab})_b = \mathrm{diag}(1, e^{i\varphi})$, then
the resulting state is known as a \textit{weighted graph state} \cite{Hein2006}. Experimentally, a promising approach for generating photon-photon controlled-Z gates comes from the use of cavity-QED systems \cite{duan2004scalable,nysteen2017limitations,pichler2017universal}. A drawback to this approach is that we do not yet have precise control over the phase that such a setup induces on the system, resulting in gates better modeled as CP gates~\cite{Firstenberg2013, Firstenberg2016, Tiarks2016, Thompson2017, Sagona-Stophel2020, Tiarks2018}. 
Hence, the coherent error produced by cavity-QED experiments may be represented by \textit{weighted} graph states. 
We may think of the phase $\varphi_{ab}$ of a CP operators as weighting the edge between vertices $a$ and $b$ in the underlying graph. The faultier the graph state generation process is, the further we would expect $\varphi_{ab}$ to land from $\pi$. 

In this section, we consider a protocol developed by Frantzeskakis et al.~in~\cite{Frantzeskakis2023} for extracting perfect GHZ states from weighted graph states via local measurements and local unitary rotations. We assume that all edge weights are uniform, i.e., $\varphi_{ab} = \varphi\in [0,2\pi)$ for all vertices $a$ and $b$. It was shown by Frantzeskakis et al.~that by starting with a weighted graph state corresponding to a linear graph with vertices labeled from 1 to $2N+1$ from left to right (for example, see Fig.~\ref{fig:Frantzeskakis}), one may obtain with a nonzero probability a perfect GHZ state by measuring out the qubits corresponding to evenly-labeled vertices in the eigenbasis of the rotated-$\sigma_x$ operator
\begin{equation}\label{eq:rot-x-meas}
    M_x(\varphi) = e^{-i\varphi \sigma_z/2}\sigma_x e^{i\varphi \sigma_z/2}
\end{equation}
and performing a local unitary rotation on the resulting state. The probability of obtaining an exact GHZ state with such a protocol is
\begin{equation}
    p_{\rm GHZ}(\varphi) = \frac{1}{2^N} \left|\sin\left(\frac{\varphi}{2}\right)\right|^{N},
\end{equation}
which exceeds that of traditional linear optical methods~\cite{Frantzeskakis2023}. 
Thus, even if an experimental configuration produces faulty linear cluster states as described above, there exists an operationally inexpensive probabilistic protocol for converting them into perfect GHZ states.

\begin{figure}[h!]
    \centering
    \includegraphics[width=0.72\columnwidth]{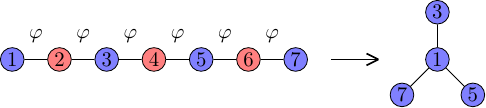}
    \caption{For a linear uniformly weighted graph state on an odd number of qubits, the protocol of Frantzeskakis et al.~prescribes a rotated $\sigma_x$ measurement on the evenly-labeled qubits (red), which produces with nonzero probability a perfect GHZ state over the remaining qubits (blue).}\label{fig:Frantzeskakis}
\end{figure}

We have observed numerically that the average post-measurement value of the $N_B$-tangle produced by this protocol seems to coincide with the value of $\mathcal{L}^\tau(|\Psi\rangle)$, the latter being very close to the value of $\mathcal{A}^\tau$ across the entire range of values for $\varphi \in [0,2\pi)$. For instance, see the top pane of Fig.~\ref{fig:mTangle-optimization}. In this sense, the protocol of Frantzeskakis \textit{et al.} is optimal and, moreover, there is only a slight advantage offered by using global instead of local measurements. Furthermore, we have observed that a similar situation occurs in other weighted graph states when we select a subset of qubits on which to perform the rotated measurement~\eqref{eq:rot-x-meas}. Though we do not offer a systematic analysis of this behavior, some examples may be observed in Fig.~\ref{fig:mTangle-optimization}.
\begin{figure}[h!]
    \includegraphics[scale=1.0]{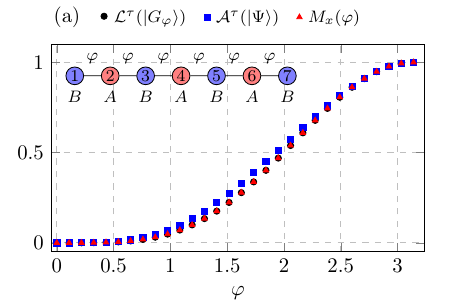}
    \includegraphics[scale=1.0]{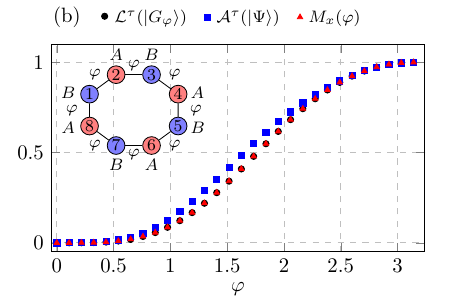}
    \begin{center}
        \includegraphics[scale=1.0]{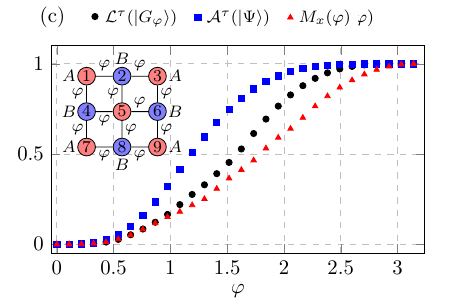}
    \end{center}
    \caption{Values of $\mathcal{L}^\tau(|\Psi\rangle)$ and $\mathcal{A}^\tau(|\Psi\rangle)$ for three weighted graph states $|G_{\varphi}\rangle$ with uniform edge weights. In red is the average post-measurement $N_B$-tangle produced by the measurement $M_x(\varphi)$ of~\eqref{eq:rot-x-meas} on all measured-out qubits. We present the results (a) for 7-qubit linear graph with qubits of even label measured out (see also Fig.~\ref{fig:Frantzeskakis}), (b) an 8-qubit ring graph with alternating qubits measured out, and (c) a $3\times 3$ lattice with the corner qubits and center qubit measured out.}
    \label{fig:mTangle-optimization}
\end{figure}

This leads to the question from what other noisy graph states it might be possible to extract a desired perfect graph state. In general, this is a challenging problem due to the lack of a stabilizer formalism for dealing with weighted graph states, as has been noted in~\cite{Hein2004}. However, if we only require that the probability of extracting the perfect GHZ state from a weighted graph state $|G_{\varphi}\rangle$ with \textit{uniform} edge weights $\varphi$ approaches a nonzero value as the error is suppressed, i.e., $\varphi \to \pi$, then Theorem~\ref{Th:main-graph-thm} implies that only resource states whose underlying graphs satisfy the matrix equation $\mathbf{\Gamma}_{BA} \mathbf{x} = \mathbf{D}$ will suffice. Furthermore, such is the case whenever the target perfect graph state is a $\tau = 1$ state, i.e., it has no vertices of even degree. Conversely, suppose the underlying graph $G$ of the noisy source state has no solution to the matrix equation. Note that the maximum probabilities $p_{\ghz}(\varphi)$ and $p_{\tau = 1}(\varphi)$ of extracting a GHZ state and a general $\tau = 1$ state, respectively, from $|G_{\varphi}\rangle$ are upper bounded by $\mathcal{A}^\tau(G_{\varphi})$ due to the construction of $\mathcal{A}^\tau$. Theorem~\ref{Th:main-graph-thm} implies that $\mathcal{A}^\tau(|G\rangle) = 0$.  Using the continuity bound of Corollary~\ref{cor:LE-local-bounds}, it follows that
\begin{equation}\label{eq:p-ghz-bound}
    p_{\ghz}(\varphi) \leq p_{\tau = 1}(\varphi) \leq \mathcal{L}^\tau(G_{\varphi}) \leq (2\sqrt{2} + 1)\|G_{\varphi} - G\|_1 = \mathcal{O}(|\varphi - \pi|^{1/2}). 
\end{equation}
 In Appendix~\ref{app:weighted-graph-results}, we show how the term $\|G_{\varphi} - G\|_1$ in~\eqref{eq:p-ghz-bound} may be exactly calculated entirely in terms of $\varphi$ and the full adjacency matrix $\mathbf{\Gamma}$ of $G$. Therein, we provide an upper bound of $\|G_\varphi - G\|_1$ that holds for small $\varphi$ and decays as $\mathcal{O}(|\varphi - \pi|^{1/2})$. We conclude that the decay rate of $p_{\ghz}(\varphi), p_{\tau = 1}(\varphi)$ to zero in the limit of vanishing noise parameter $\varphi$ is controlled by $\|G_{\varphi} - G\|_1$, which decays as $\mathcal{O}(|\varphi - \pi|^{1/2})$.

One might be inclined to use $\mathcal{A}^\tau$ as a predictor of our ability to extract GHZ states in this scenario. However, it is not immediately guaranteed that the ability to extract a perfect GHZ state from $|G_\varphi\rangle$ in a \textit{single} measurement branch automatically implies that $\mathcal{A}^\tau$ is high because the associated probability of this outcome may be low. Nevertheless, if there exists a protocol for which the probability that we can extract a perfect GHZ state from $|G_\varphi\rangle$ approaches some nonzero value as $\varphi \to \pi$, then Theorem~\ref{Th:main-graph-thm} and Corollary~\ref{cor:LE-local-bounds} guarantee that
\begin{equation}
    \mathcal{A}^\tau(|G_{\varphi}\rangle) \geq 1 - \mathcal{O}(|\varphi - \pi|^{1/2}).
\end{equation}
That is, for $\varphi$ sufficiently close to $\pi$, we can find a sequence of measurements on $A$ such that states in the post-measurement ensemble of $|G_{\varphi}\rangle$ will assume a high value of the $N_B$-tangle on average.


\section{Spin-half transverse field Ising model}
\label{sec:spin-models}

Quantum statistical mechanics is concerned with correlations in multipartite systems, and it is thus natural to consider the entanglement-theoretic properties of relevant spin systems. To this end, an understanding of how the localizable entanglement characterizes interesting behavior in these systems is helpful, and various formulations of the localizable entanglement have been successfully used to probe for phase transitions in spin-half systems~\cite{Verstraete-LE,Popp-LE,sadhukhan2017multipartite,krishnan2023controlling}. In this section, we demonstrate the usefulness of our formulation of the LME for this purpose.

We will consider the Transverse Field Ising model (TFIM) with periodic boundary conditions,
\begin{equation}\label{eq:tfim-H}
    H=-J\sum_{i=1}^{N}\sigma_{i}^{x}\sigma_{i+1}^{x}-h\sum_{i=1}^{N}\sigma_{i}^{z},
\end{equation}
where $\sigma^{x}_{N+1}\equiv\sigma^{x}_{1}$. The parameter $J$ represents the nearest-neighbor interaction strength and $h$ an external magnetic field. This model is known to possess a phase transition at $J=h$ that can be probed by properties of the ground state, denoted by $\left| \text{GS}_{N} \right>$. When $J>h$, the system is in a ferromagnetic phase. In this regime, neighboring spins tend to align with each other in order to minimize the ground state energy of the Hamiltonian. In the absence of any symmetry breaking term, none of the two possible alignment directions is favored ($\left| - \right>^{\otimes N}$ for the $-x$ direction and $\left| + \right>^{\otimes N}$ for the $+x$ direction) and the ground state of the Hamiltonian is entangled and may be taken to be $\left|\ghz_{N} \right> \approx \frac{1}{\sqrt{2}}( \left| + \right>^{\otimes N} + \left| - \right>^{\otimes N})$. When $J<h$ the system will be in the paramagnetic phase, and the spins will favor alignment in the $z$-direction in order to minimize the ground state energy. Since there is no $z$ interaction (only $x$), the ground state will approximate $\left| 0 \right>^{\otimes N}$.

\begin{figure}[h!]
    \includegraphics[scale=0.9]{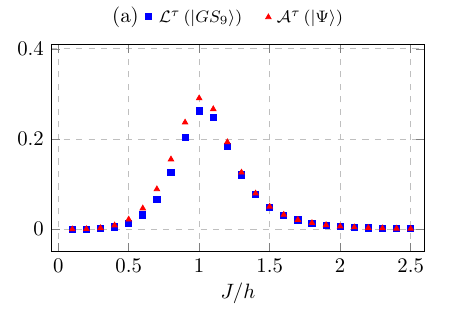}
    \includegraphics[scale=0.9]{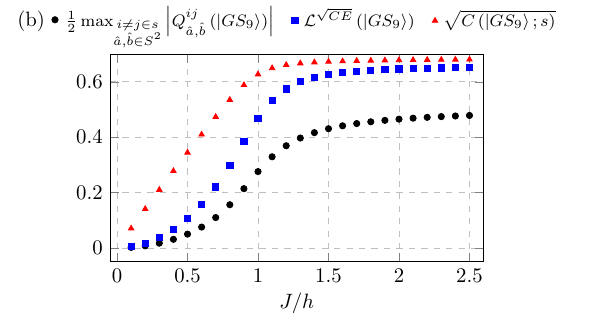}
    \caption{(a) Probing the phase transition using LME with the $N_B$-tangle as the seed measure. We calculate the transverse field Ising model with $N=9$. The measured out systems are $A=\left\{ 1, 3, 5, 7, 9 \right\}$, and the remaining qubits are $B=\left\{ 2, 4, 6, 8 \right\}$. In addition, we plot the upper bound $\mathcal{A}^\tau$ on $\mathcal{L}^\tau$ from Theorem~\ref{Th:n-tangle-upper-bound}. (b) We use the square-root concentratable entanglement as the seed measure for the same setup as in (a). Here, we also plot the correlation function lower bound from Theorem~\ref{Th:corr-func-lower-bound}and the upper bound from Theorem~\ref{Th:CE-upper-bound}.} 
    \label{fig:spinhalf-sym}
\end{figure}
We will study the effect of the phase transition at $J = h$ on the localizable entanglement using the $n$-tangle and concentratable entanglement as seed measures. For this purpose, we numerically diagonalize the relevant Hamiltonian and compute the LME of the ground state via the particle swarm optimization method described in Section~\ref{sec:numerics}.
In Fig.~\ref{fig:spinhalf-sym}, before the phase transition ($J<h$), the ground state is almost separable and the localizable entanglement $\mathcal{L}^\tau$ is low or close to zero. Once the $J>h$ regime is reached, the ground state starts to approach a GHZ-state and $\mathcal{L}^\tau$ approaches $1$ as $J$ becomes the dominant term. Note that in all regimes, $\mathcal{L}^\tau$ almost completely saturates its upper bound of $\mathcal{A}^\tau$ for the ground state of the Hamiltonian. When defined by the square-root concentratable entanglement, the localizable entanglement $\Lsqrtce$ increases as $J$ increases. In this case, the localizable entanglement $\Lsqrtce$ saturates its upper bound from Theorem~\ref{Th:CE-upper-bound} when the system undergoes a phase transition ($J>h$).
\begin{figure}[h!]
    \includegraphics[scale=0.9]{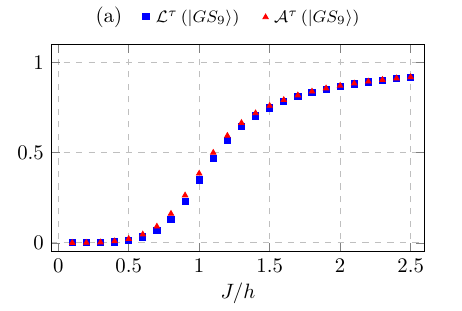}
    \includegraphics[scale=0.9]{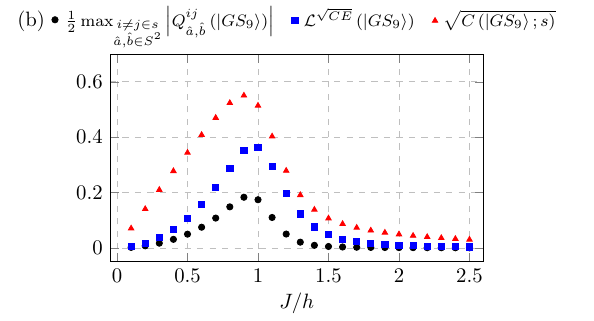}
    \caption{(a) Probing the phase transition using LME with the $N_B$-tangle as the seed measure. We calculate the transverse field Ising model with $N=9$ in the presence of a small symmetry breaking field. The measured out systems are $A=\left\{ 1, 3, 5, 7, 9 \right\}$, and the remaining qubits are $B=\left\{ 2, 4, 6, 8 \right\}$. (b) We use the concentratable entanglement as the seed measure for the same setup as in (a). }
    \label{fig:spinhalf}
\end{figure}

In realistic setups, the Hamiltonian will acquire symmetry breaking terms. To simulate this, we also explore the Transverse Field Ising Model in the presence of an additional small symmetry breaking term $-h_{x}\sum_{i}^{N}\sigma_{i}^{x}$ along the $x$-direction in Fig. \ref{fig:spinhalf}. When defined by the $n$-tangle, the localizable entanglement $\Lt$ increases as $J/h$ increases in the $J<h$ regime. After the phase transition occurs at $J=h$, the ground state will have a favored direction ($x$) and the ground state will be $\left| + \right>^{\otimes N}$. As $J$ becomes the dominant term, the state becomes more separable, so the localizable entanglement $\mathcal{L}^\tau$ approaches $0$. The behavior is similar when defined by the square-root concentratable entanglement. In fact, the localizable entanglement $\mathcal{L}^{\sqrt{CE}}$ increases with $J/h$ until the phase transition occurs and rapidly decreases as $J$ becomes the dominant term.

We anticipate that our observations can be useful to further study critical behavior in spin systems. Since a divergence in correlation length is a diagnostic of a quantum phase transition, and since $L^{\sqrt{CE}}$ is lower bounded by the two-point spin correlation lengths, we infer that $L^{\sqrt{CE}}$ can serve as a useful tool for identifying quantum phase transitions. This generalizes to the multipartite scenario the viewpoint of Verstraete et al.~\cite{Verstraete-LE}, who make the same case for the bipartite LE defined by the two-point concurrence. Moreover, our numerical results demonstrate that our bounds on the LME show the same qualitative behavior as the LME, with the bound on $\mathcal{L}^\tau$ appearing to be particularly tight. The latter implies that local measurements are near-optimal for the task of localizing entanglement in TFIM ground states. More precisely, it is possible to localize nearly as much $n$-tangle using just local measurements as it is with general global measurements. This indicates that it is possible to use the much simpler bounds to study critical phenomena in place of the LME, whose calculation involves a formidable optimization. 

\section{Discussion and outlook}\label{sec:discussion}

In this paper, we study formulations of the localizable multipartite entanglement (LME) and multipartite entanglement of assistance (MEA) as metrics of quantifying the amount of multipartite entanglement that can be localized onto a subsystem by measuring out its complement. These figures of merit are inspired by protocols that extract useful entangled states from a resource state via measuring and discarding subsystems. We derive bounds on these quantities as explicit functions of reduced density matrices and numerically analyze their tightness. This enables the theoretical study of these quantities without performing tedious optimizations and opens the door to experimental detection, as it is more viable to probe reduced density matrices than full state vectors.  

We study several applications to illustrate the utility of our bounds. First, we work out a simple matrix equation criterion for deciding whether transformations between graph states under measurements and general local unitaries are possible. While this polynomial-time criterion is not conclusive for all transformations, it is significant because deciding whether an \textit{arbitrary} graph state may be extracted from another is an NP-complete problem. We also point out the advantages over other entanglement-theoretic approaches of using the concentratable entanglement (CE) to study graph state transformations. As another application of our bounds, we characterize the typical amount of entanglement that can be localized with measurements in high-dimensional systems. We find that even though reduced states display near-minimal entanglement for large system sizes, they typically have a convex decomposition of pure states that assume near-maximal $n$-tangle and CE values. One interpretation of this result is that reduced density matrices alone are generally insufficient for quantifying correlations between members of subsystems in high-dimensional spaces; measurements are needed to \textit{reveal} quantum correlations. 
Finally, we showcase preliminary numerical results indicating how our bounds may be used to detect critical behavior in ground states of the Ising model.
We hope to inspire further study of our bounds in such spin systems to analyze their critical behavior. 

Due to its broad nature, we believe that our work may find applications in a variety of fields. Indeed, the general framework of targeting desired entangled states via partial measurements is a feature of many active research areas, including measurement-based quantum computation, state preparation, and quantum networks. For future work, it would be interesting and relevant to develop ways of experimentally measuring the LME and MEA directly. It would also be worth benchmarking the efficacy of our matrix equation criterion in deducing practical graph state transformations. The matrix equation should be unsolvable for most graphs in order for it to be most helpful in paring down the set of candidate resource graphs for a given target graph. To this end, we aim to investigate the solvability of this equation over important families of graphs. On the technical side, it would be interesting to extend the LME framework to higher-rank projective measurements, or even more generally POVMs with suitably-defined post-measurement states.
A challenge in this generalization is to deal with possibly mixed post-measurement states, which would have to be analyzed using mixed-state entanglement measures that are typically hard to evaluate.
Another interesting open problem is to analytically derive a concentration inequality for $\mathcal{L}^\tau$ and find its limiting behavior. Such a result could be interpreted as a statement about the generic maximal probability with which one can extract GHZ states (or any other $\tau = 1$ states). Our work shows that the values of $\mathcal{L}^\tau$ concentrate around its mean, though this value remains unknown analytically. Finally, a more general goal would be to strengthen our analytical results regarding local measurements and pursue more conclusions about the localizability of specifically GHZ-type entanglement. 

\paragraph*{Code Availability. } The code used to produce the numerical computations of this work is available at~\cite{Github2025}.

\paragraph*{Acknowledgements.} This work was supported by NSF grant no.~2137953.
We would like to thank Edwin Barnes, Jacob Beckey and Eric Chitambar for helpful discussions.

\bibliographystyle{quantum}
\bibliography{references}

\pagebreak

\appendix

\section{Proof of Theorem \ref{Th:n-tangle-upper-bound}}
\label{app:thm-1}

That the localizable entanglement of $|\Psi\rangle$ is bounded above by its multipartite entanglement of assistance follows immediately from the definition of these quantities. The expression for the MEA of $|\Psi\rangle$, as defined by the concurrence (equivalently, the 2-tangle), in terms of $\Psi_{B}$ has been proved for when $|\Psi\rangle$ and $A$ are respectively taken to be a 4-qubit state and a two-qubit subsystem in a somewhat different context~\cite{Lausten2003}. 
Our theorem follows from an almost unmodified version of this proof. 
For the sake of clarity, we spell out in full detail the proof of~\cite{Lausten2003} and its supporting results applied to our multi-qubit hypothesis.

We start by proving a couple of simple lemmas. In what follows, we will retain our notational convention that for an $M$-qubit state $|v\rangle$, the ket $|\tilde{v}\rangle$ refers to the Wooter's tilde of $|v\rangle$, or $\sigma_y^{\otimes M}|v^\ast\rangle$. Similarly, $\tilde{\rho} = \sigma_y^{\otimes M}\rho^*\sigma_y^{\otimes M}$ for an $M$-qubit density matrix $\rho$. Lemma \ref{lemma:rho-decomp-lemma} below has previously been observed by~\cite{Wooters1998EOF, Schrodinger1936, Hadjisavvas1981, Hughston1993}, among others.

\begin{lemma}\label{lemma:rho-decomp-lemma}
    Let $\rho$ be an arbitrary density operator of rank $m$ over some multi-qubit Hilbert space. Let $\rho = \sum_{i=1}^m \lambda_i |v_i\rangle\langle v_i|$ be its spectral decomposition. Then for any $m\times m$ unitary matrix $U$, there exists a set of vectors $\{|x_i\rangle\}_{i=1}^m$ satisfying
    \begin{equation}
        \langle x_i|\Tilde{x}_j\rangle = (U\tau U^T)_{ij},
    \end{equation}
    where $\tau$ is the $m\times m$ matrix with elements $\tau_{ij} = \sqrt{\lambda_i \lambda_j} \langle v_i|\Tilde{v}_j\rangle$,
    and 
    \begin{equation}
        \rho = \sum_{i=1}^m |x_i\rangle\langle x_i|.
    \end{equation}
	That is, the vectors $\{|x_i\rangle\}_{i=1}^m$ form a decomposition of $\rho$.
\end{lemma}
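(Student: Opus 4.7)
The plan is to explicitly construct the vectors $\{|x_i\rangle\}_{i=1}^m$ as linear combinations of the spectral eigenvectors with coefficients determined by the unitary $U$. This is essentially the content of the Schrödinger--Hughston--Jozsa--Wootters (HJW) theorem on ensemble decompositions of a density matrix, suitably augmented to track the Wooter's tilde structure.

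Concretely, I would set
\[
    |x_i\rangle \;:=\; \sum_{j=1}^{m} U_{ij}^{\,*} \sqrt{\lambda_j}\,|v_j\rangle,
\]
where the complex conjugation on $U_{ij}$ is chosen deliberately so that the antilinear Wooter's tilde will produce $U$ (rather than $U^{*}$) in the final inner-product expression. Verifying the decomposition identity $\rho = \sum_i |x_i\rangle\langle x_i|$ is then a one-line unitarity calculation: expanding the sum gives $\sum_{j,k}\bigl(\sum_i U_{ij}^{*} U_{ik}\bigr)\sqrt{\lambda_j\lambda_k}\,|v_j\rangle\langle v_k|$, and $\sum_i U_{ij}^{*} U_{ik} = (U^{\dagger} U)_{jk} = \delta_{jk}$ reduces this to the spectral decomposition $\sum_j \lambda_j |v_j\rangle\langle v_j| = \rho$.

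For the inner-product identity, I would first compute the Wooter's tilde of each $|x_j\rangle$. Since complex conjugation in the computational basis is antilinear, $|x_j^{*}\rangle = \sum_k U_{jk}\sqrt{\lambda_k}\,|v_k^{*}\rangle$, so applying $\sigma_y^{\otimes M}$ yields $|\tilde x_j\rangle = \sum_k U_{jk}\sqrt{\lambda_k}\,|\tilde v_k\rangle$. Taking the inner product with $|x_i\rangle = \sum_l U_{il}^{*} \sqrt{\lambda_l}\,|v_l\rangle$ gives
\[
    \langle x_i|\tilde x_j\rangle \;=\; \sum_{l,k} U_{il}\,U_{jk}\,\sqrt{\lambda_l\lambda_k}\,\langle v_l|\tilde v_k\rangle \;=\; \sum_{l,k} U_{il}\,\tau_{lk}\,(U^{T})_{kj} \;=\; (U\tau U^{T})_{ij},
\]
which is exactly the stated relation.

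I do not anticipate any substantive obstacle here; the only real subtlety is keeping the complex-conjugation conventions straight so that unitarity in one calculation (which involves $U^{\dagger}U$) is compatible with the antilinearity of the tilde in the other (which forces a $U^{T}$ rather than $U^{\dagger}$ on the right). Choosing the coefficients of $|x_i\rangle$ to be $U_{ij}^{*}$ rather than $U_{ij}$ is precisely what reconciles these two demands, and once this choice is fixed, both identities drop out from direct computation.
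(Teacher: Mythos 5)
Your proposal is correct and is essentially identical to the paper's own proof: the same choice $|x_i\rangle = \sum_j U_{ij}^{*}\sqrt{\lambda_j}\,|v_j\rangle$, the same one-line unitarity check for $\rho = \sum_i |x_i\rangle\langle x_i|$, and the same direct computation of $\langle x_i|\tilde{x}_j\rangle = (U\tau U^{T})_{ij}$. Your explicit remark on how the antilinearity of the tilde forces the conjugation convention is a helpful clarification that the paper leaves implicit.
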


\begin{proof}
Define $|x_i\rangle \coloneqq \sum_{j=1}^m \sqrt{\lambda_j} U_{ij}^\ast |v_j\rangle$ for $i = 1, \dots, m$. Then
\begin{equation}
    \sum_{i=1}^m |x_i\rangle\langle x_i| = \sum_{i=1}^m \sum_{j,k=1}^m \sqrt{\lambda_j\lambda_k} U_{ij}^\ast U_{ik} |v_j\rangle\langle v_k| = \sum_{j=1}^m \lambda_j |v_j\rangle\langle v_j| = \rho,
\end{equation}
where in the second equality we used the orthonormality of the columns of $U$. 

Next, observe that 
\begin{align}
    \langle x_i|\Tilde{x}_j\rangle &= \sum_{k,l=1}^m \sqrt{\lambda_k\lambda_l}U_{ik} U_{jl} \langle v_k|\tilde{v}_l\rangle \\ &= 
    \sum_{k,l=1}^m U_{ik} \left(\sqrt{\lambda_k\lambda_l} \langle v_k|\tilde{v}_l\rangle\right)(U^T)_{lj} \\ &= 
    \sum_{k,l=1}^m U_{ik} \tau_{kl} (U^T)_{lj} \\ &=
    (U\tau U^T)_{ij}.
\end{align}
This completes the proof.
\end{proof}

The following Lemma is taken directly from~\cite[Cor.~4.4.4]{Horn1990}.
\begin{lemma}\label{lemma:Takagi} \textnormal{(Takagi's factorization)} 
    Let $A$ be an $m\times m$ complex symmetric matrix. Then there exists an $m\times m$ unitary matrix and a real nonnegative diagonal matrix $\Sigma = \textnormal{diag}(\sigma_1, \dots, \sigma_m)$ such that $A = U\Sigma U^T$. The diagonal entries of $\Sigma$ are the nonnegative square roots of the eigenvalues of $AA^\ast$.
\end{lemma}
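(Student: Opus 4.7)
The plan is to prove Takagi's factorization by induction on the matrix dimension $m$. The base case $m=1$ is immediate: writing $a = |a| e^{i\varphi}$, the $1\times 1$ unitary $U = (e^{i\varphi/2})$ together with $\Sigma = (|a|)$ satisfies $U \Sigma U^T = (a)$. For the inductive step, the strategy is to peel off a one-dimensional block associated with the largest singular value $\sigma_1 := \sqrt{\lambda_{\max}(A A^*)}$, leaving an $(m-1)\times(m-1)$ complex symmetric subproblem to which the induction hypothesis applies.

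The key step, and the main obstacle, is to exhibit a unit vector $u_1 \in \mathbb{C}^m$ satisfying the ``antilinear eigenvalue'' equation $A u_1^* = \sigma_1 u_1$. I would first check that $A A^*$ is Hermitian positive semidefinite: symmetry of $A$ gives $A^\dagger = A^*$, hence $(A A^*)^\dagger = A A^*$ and $\psi^\dagger A A^* \psi = \|A^* \psi\|^2 \geq 0$, so the $\sigma_i^2$ are well-defined non-negative reals. Pick a unit eigenvector $v$ of $A A^*$ at eigenvalue $\sigma_1^2$; if $\sigma_1 = 0$ then $A = 0$ and the result is trivial, so assume $\sigma_1 > 0$ and set $w := A^* v / \sigma_1$. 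A short computation shows $\|w\| = 1$, $A w = \sigma_1 v$, and $A v^* = \sigma_1 w^*$ (by entrywise conjugation of $A^* v = \sigma_1 w$). For a phase parameter $\theta \in [0, 2\pi)$, form $z_\theta := e^{-i\theta/2} v + e^{i\theta/2} w^*$; then a direct computation yields $A z_\theta^* = \sigma_1 z_\theta$. Since $z_\theta = 0$ forces the specific phase relation $v = -e^{i\theta} w^*$, there is at most one value of $\theta$ for which $z_\theta$ vanishes, so we may choose $\theta$ with $z_\theta \neq 0$ and set $u_1 := z_\theta / \|z_\theta\|$.

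Once $u_1$ is in hand, the remainder is bookkeeping. Extend $u_1$ to a unitary $U_0 = [u_1 \mid U_1]$ with $U_1$ an $m \times (m-1)$ matrix of orthonormal columns orthogonal to $u_1$, and set $P := U_0^\dagger A U_0^*$. Using $A u_1^* = \sigma_1 u_1$ and $A^T = A$ one finds $u_1^\dagger A = (A u_1^*)^T = \sigma_1 u_1^T$, which combined with orthonormality of the columns of $U_0$ yields $P_{11} = \sigma_1$ and $P_{1k} = P_{k1} = 0$ for $k \geq 2$. Moreover, $P$ is symmetric because $P^T = (U_0^*)^T A^T (U_0^\dagger)^T = U_0^\dagger A U_0^* = P$ via $A^T = A$ and the identities $(U_0^*)^T = U_0^\dagger$ and $(U_0^\dagger)^T = U_0^*$. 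Hence $P = (\sigma_1) \oplus A'$ with $A' := U_1^\dagger A U_1^*$ a smaller complex symmetric matrix. Apply the induction hypothesis to factor $A' = V \Sigma' V^T$, set $U := U_0 (1 \oplus V)$ and $\Sigma := \mathrm{diag}(\sigma_1) \oplus \Sigma'$, and verify $A = U_0 P U_0^T = U \Sigma U^T$. The identification of the diagonal entries of $\Sigma$ as non-negative square roots of the eigenvalues of $A A^*$ then follows from $A A^* = U \Sigma U^T U^* \Sigma U^\dagger = U \Sigma^2 U^\dagger$, using the identity $U^T U^* = I$ implied by unitarity of $U$.
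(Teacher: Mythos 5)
Your proof is correct. Note, however, that the paper does not prove this lemma at all: it is imported verbatim from Horn and Johnson (Cor.~4.4.4 of that reference), so there is no in-paper argument to compare against. What you have written is essentially the standard textbook deflation proof: the nontrivial content is the construction of a unit ``con-eigenvector'' $u_1$ with $A u_1^* = \sigma_1 u_1$, and you handle this correctly, including the only delicate point --- that $z_\theta = e^{-i\theta/2}v + e^{i\theta/2}w^*$ can vanish for at most one phase $\theta$, so a nonzero choice always exists even when $v$ and $w^*$ are antiparallel. The bookkeeping in the deflation step ($P = U_0^\dagger A U_0^*$ is symmetric with first row and column $(\sigma_1, 0, \dots, 0)$, so the problem reduces to an $(m-1)\times(m-1)$ symmetric block) and the final identification $AA^* = U\Sigma^2 U^\dagger$ via $U^T U^* = I$ are all sound. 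The only value judgment to add is that, for the purposes of this paper, a citation suffices; but your self-contained argument is a faithful and complete proof of exactly the statement used.
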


Note that for an $M$-qubit density operator $\rho$ of rank $m$, the operators $\sqrt{\rho}$ and $\tilde{\rho}$ are also of rank $m$, implying that $\sqrt{\rho}\tilde{\rho}\sqrt{\rho}$ and, hence, $\sqrt{\sqrt{\rho}\tilde{\rho}\sqrt{\rho}}$ have rank at most $m$. As a result, the multiplicity of zero as an eigenvalue of the operator $\sqrt{\sqrt{\rho}\tilde{\rho}\sqrt{\rho}}$ is at least $2^M - m$. With this in mind, we are ready to prove the next lemma, which is a modified version of a result of~\cite{Wooters1998EOF}.

\begin{lemma}\label{lemma:verstraete-basis}
    Let $\rho$ be a density operator of rank $m$ over $M$ qubits, where $M$ is even. Take $l_1, \dots, l_m$ to be nonnegative eigenvalues of $\sqrt{\sqrt{\rho} \tilde{\rho} \sqrt{\rho}}$ such that the other $2^M - m$ eigenvalues of $\sqrt{\sqrt{\rho} \tilde{\rho} \sqrt{\rho}}$ are all zero.\footnote{Note that we allow for the possibility that some of the eigenvalues $l_1, \dots, l_m$ may be zero here. We are only demanding that zero occurs as an eigenvalue with a multiplicity of \textit{at least} $2^M - m$, with all possible nonzero eigenvalues being labeled by a subset of $\{l_1, \dots, l_m\}$.} Then there exists a set of vectors $\{|x_i\rangle\}_{i=1}^m$ in this $M$-qubit Hilbert space such that
    \begin{equation}
        \langle x_i|\tilde{x}_j\rangle = l_i \delta_{ij} \qquad \text{and} \qquad \rho = \sum_{i=1}^m |x_i\rangle\langle x_i|.
    \end{equation}
\end{lemma}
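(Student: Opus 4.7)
The plan is to combine Lemmas \ref{lemma:rho-decomp-lemma} and \ref{lemma:Takagi} applied to a carefully constructed symmetric matrix built from the spectral decomposition of $\rho$. Starting from $\rho = \sum_{i=1}^m \lambda_i |v_i\rangle\langle v_i|$, I would form the $m\times m$ matrix $\tau_{ij} = \sqrt{\lambda_i\lambda_j}\,\langle v_i|\tilde{v}_j\rangle$ appearing in Lemma \ref{lemma:rho-decomp-lemma}, and first check that it is complex symmetric. This reduces to verifying $\langle v_i|\tilde{v}_j\rangle = \langle v_j|\tilde{v}_i\rangle$, which follows by taking the (scalar) transpose of $v_i^\dagger \sigma_y^{\otimes M} v_j^*$ and using that $(\sigma_y^{\otimes M})^T = (-1)^M \sigma_y^{\otimes M} = \sigma_y^{\otimes M}$ since $M$ is even. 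This is the one place the parity hypothesis is used, and it is essential.

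With $\tau = \tau^T$ in hand, Takagi's factorization (Lemma \ref{lemma:Takagi}) gives $\tau = U \Sigma U^T$ with $\Sigma = \mathrm{diag}(\sigma_1,\ldots,\sigma_m)$ nonnegative. I would then invoke Lemma \ref{lemma:rho-decomp-lemma} with the unitary $V = U^\dagger$: this produces vectors $\{|x_i\rangle\}_{i=1}^m$ that decompose $\rho$ and satisfy
\begin{equation}
\langle x_i|\tilde{x}_j\rangle = (V \tau V^T)_{ij} = (U^\dagger \tau \bar{U})_{ij} = \Sigma_{ij} = \sigma_i \delta_{ij}.
\end{equation}
Both conclusions of the lemma therefore follow as soon as we can match the $\sigma_i$ with the prescribed eigenvalues $l_i$ of $\sqrt{\sqrt{\rho}\tilde{\rho}\sqrt{\rho}}$.

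The main obstacle, and the conceptual heart of the argument, is identifying the Takagi values $\sigma_i$ with the $l_i$. I would do this by expanding $\tilde{\rho} = \sum_k \lambda_k |\tilde{v}_k\rangle\langle \tilde{v}_k|$ and computing the matrix of $\sqrt{\rho}\tilde{\rho}\sqrt{\rho}$ in the orthonormal basis $\{|v_i\rangle\}_{i=1}^m$ of $\mathrm{supp}(\rho)$. Writing $K_{ij} = \langle v_i|\tilde{v}_j\rangle$ and $\Lambda = \mathrm{diag}(\lambda_1,\ldots,\lambda_m)$, a short calculation yields $(\sqrt{\rho}\tilde{\rho}\sqrt{\rho})_{ij} = (\sqrt{\Lambda}\, K \Lambda K^\dagger \sqrt{\Lambda})_{ij}$, while on the other hand $\tau\tau^* = \sqrt{\Lambda}\, K \Lambda K^* \sqrt{\Lambda}$. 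Because $K$ is symmetric (by the first step), $K^\dagger = K^*$ and the two matrices coincide. Hence the nonzero eigenvalues of $\sqrt{\rho}\tilde{\rho}\sqrt{\rho}$ agree with those of $\tau\tau^*$, which by Takagi are $\{\sigma_i^2\}$; taking nonnegative square roots gives $\{\sigma_i\} = \{l_i\}$ as multisets. Reordering the columns of $U$ (and hence the $|x_i\rangle$) to match the prescribed ordering of the $l_i$ then completes the proof.
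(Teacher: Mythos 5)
Your proposal is correct and follows essentially the same route as the paper's proof: symmetry of $\tau_{ij}=\sqrt{\lambda_i\lambda_j}\langle v_i|\tilde v_j\rangle$ from the even parity of $M$, Takagi factorization combined with Lemma~\ref{lemma:rho-decomp-lemma} to produce the decomposition, and identification of the Takagi values with the $l_i$ by recognizing $\tau\tau^*$ as the matrix of $\sqrt{\rho}\tilde\rho\sqrt{\rho}$ on $\supp(\rho)$. The only differences are cosmetic (transpose versus complex conjugation for the symmetry check, and a matrix-form versus element-wise computation of $\tau\tau^*$).
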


\begin{proof}
Let $\rho = \sum_{i=1}^m \lambda_i |v_i\rangle\langle v_i|$ be the spectral decomposition of $\rho$. Define the $m\times m$ matrix $\tau$ so that its elements are $\tau_{ij} = \sqrt{\lambda_i\lambda_j}\langle v_i|\tilde{v}_j\rangle$. Noting that $\sigma_y^\ast = -\sigma_y$, we have 
\begin{equation}
    \langle v_i|\tilde{v}_j\rangle = 
    \langle v_i|\sigma_y^{\otimes M}|v_j^\ast\rangle =
    (\langle v_j^\ast| \sigma_y^{\otimes M}| v_i\rangle)^\ast = 
    \langle v_j|(\sigma_y^\ast)^{\otimes M})|v_i^\ast\rangle =
    \langle v_j|\sigma_y^{\otimes M}|v_i^\ast\rangle = \langle v_j|\tilde{v}_i\rangle,
\end{equation}
since $M$ is even. Hence, the matrix $\tau$ is symmetric. 

It follows from Lemma \ref{lemma:Takagi} that there exists an $m\times m$ unitary matrix $V$ such that $\tau = V\Sigma V^T$, where $\Sigma = \text{diag}(s_1, \dots, s_m)$ and the $s_i$'s are real and nonnegative. Let $U = V^\dagger$. Then $\Sigma = U\tau U^T$. By Lemma \ref{lemma:rho-decomp-lemma}, there exists a set of vectors $\{|x_i\rangle\}_{i=1}^m$ such that $\rho = \sum_{i=1}^m |x_i\rangle\langle x_i|$ and
\begin{equation}\label{eq:xi-xj-overlap}
    \langle x_i|\Tilde{x}_j\rangle = (U\tau U^T)_{ij} = s_i \delta_{ij}.
\end{equation} Furthermore, $s_1^2, \dots, s_m^2$ are the eigenvalues of $\tau \tau^\ast$. 

Now observe that for $1\leq i,j \leq m$,
\begin{equation}
    (\tau \tau^\ast)_{ij} = \sum_{k=1}^m \tau_{ik}\tau^\ast_{kj} =  
    \sum_{k=1}^m \lambda_k \sqrt{\lambda_i\lambda_j} \langle v_i|\tilde{v}_k\rangle\langle\tilde{v}_k|v_j\rangle  = \sqrt{\lambda_i\lambda_j}\langle v_i|\tilde{\rho}|v_j\rangle = \langle v_i|\sqrt{\rho}\tilde{\rho}\sqrt{\rho}|v_j\rangle. 
\end{equation}
Therefore, the matrix representation of $\sqrt{\rho}\tilde{\rho}\sqrt{\rho}$ with respect to an orthonormal extension of the orthonormal set $\{|v_i\rangle\}_{i=1}^m$ to a basis of the entire $M$-qubit Hilbert space has $\tau \tau^\ast$ as a diagonal block with all other elements equal to zero. Thus, $s_1^2, \dots, s_m^2$ are eigenvalues of $\sqrt{\rho} \tilde{\rho} \sqrt{\rho}$ and all other eigenvalues of this operator are zero. Since $\sqrt{\rho}\tilde{\rho}\sqrt{\rho}$ is Hermitian, this implies again that $s_1, \dots, s_m$ are nonnegative eigenvalues of $\sqrt{\sqrt{\rho} \tilde{\rho} \sqrt{\rho}}$ and all other eigenvalues are zero. The claim of the lemma then follows from \eqref{eq:xi-xj-overlap}.
\end{proof}

We are now ready to prove Theorem \ref{Th:n-tangle-upper-bound}. The proof follows along the lines of~\cite{Lausten2003}.

\begin{proof}[Proof of Theorem~\ref{Th:n-tangle-upper-bound}] 
Let $\{(p_i, |\phi_i\rangle)\}_{i=1}^{d_A}$ be an arbitrary ensemble over the $B$ systems produced by a (not-necessarily local) projective measurement on the $A$ systems of $|\Psi\rangle$.
In that case,
\begin{align}
    \sum_{i=1}^{d_A} p_i \tau_n(|\phi_i\rangle) 
    &=\sum_{i=1}^{d_A} p_i F(|\phi_i\rangle\langle \phi_i|, |\Tilde{\phi_i}\rangle\langle\Tilde{\phi_i}|) \\
    &\leq F\bigg(\sum_{i=1}^{d_A} p_i |\phi_i\rangle\langle \phi_i|, \sum_{i=1}^{d_A} p_i |\Tilde{\phi_i}\rangle\langle\Tilde{\phi_i}|\bigg) \\
    &=\label{eq:MEA-n-tangle} F(\Psi_{B}, \Tilde{\Psi}_{B}),
\end{align}
The first equality follows from the definition \eqref{eq:n-tangle-def} of the $n$-tangle. The inequality follows from the joint concavity of fidelity. The second equality follows from the definition of the partial trace together with the observation that, given a measurement basis $\{|v_i\rangle\}_{i=1}^{d_A}$ that achieves the post-measurement ensemble chosen above, each post-measurement state may be written as $|\phi_i\rangle = (\langle v_i|_A \otimes I_{B})|\Psi\rangle/ \sqrt{p_i}$.

By Lemma \ref{lemma:verstraete-basis}, there exists a set of vectors $\{|x'_i\rangle\}_{i=1}^{\rk(\Psi_B)}$ with the properties that $\Psi_{B} = \sum_{i=1}^{\rk(\Psi_B)} |x'_i\rangle\langle x'_i|$ and $\langle x'_i|\tilde{x}'_i\rangle = l_i \delta_{ij}$, where $(l_i)_{i=1}^{\rk(\Psi_B)}$ denotes a collection of not-necessarily distinct nonnegative eigenvalues of $\sqrt{\sqrt{\Psi_{B}}\tilde{\Psi}_{B}\sqrt{\Psi_{B}}}$ such that all other eigenvalues are zero.\footnote{We emphasize that due to the fact that we enumerate the vectors $|x_i\rangle$ by indices ranging up to only $\rk(\Psi_B)$, as opposed to $\min(d_A, d_B)$, there is no need to assume that $\Psi_B$ is full rank.} Normalizing these vectors, we write $|x_i\rangle \coloneqq |x'_i\rangle/\sqrt{q_i}$, where $q_i \coloneqq \langle x'_i|x'_i\rangle$. Then for an arbitrary (not-necessarily product) basis $\{|e_i\rangle_A\}_{i=1}^{d_A}$ of $\mathcal{H}_A$, we have that
\begin{equation}
    |\Phi\rangle \coloneqq \sum_{i=1}^{\rk(\Psi_B)} \sqrt{q_i} |e_i\rangle_A \otimes |x_i\rangle_B   
\end{equation}
is a purification of $\Psi_{B}$. (Since $\rk(\Psi_B) \leq d_A$, we are able to write the above linear combination without issue.) 
Since $|\Psi\rangle$ is also a purification of $\Psi_{B}$, there exists a unitary $V_A$ such that $|\Phi\rangle = (I_B \otimes V_A)|\Psi\rangle$.
The ensemble $\{(q_i, |x_i\rangle)\}_{i=1}^{{\rk(\Psi_B)}}$ may be obtained by measuring out the $A$ systems of $|\Phi\rangle$ in the basis $\{|e_i\rangle\}_{i=1}^{d_A}$, so we obtain the same ensemble by measuring out the $A$ systems of $|\Psi\rangle$ in the basis $\{V_A^\dagger |e_i\rangle\}_{i=1}^{d_A}$. The average entanglement for the ensemble $\{(q_i, |x_i\rangle)\}_{i=1}^{\rk(\Psi_B)}$ is 
\begin{align}
    \sum_{i=1}^{\rk(\Psi_B)} q_i \tau_n(|x_i\rangle) &= \sum_{i=1}^{\rk(\Psi_B)} q_i |\langle x_i|\tilde{x}_i\rangle| \\
    &= \sum_{i=1}^{\rk(\Psi_B)} |\langle x'_i|\tilde{x}'_i\rangle| \\ 
    &= \sum_{i=1}^{\rk(\Psi_B)} l_i \\
    &= \tr\bigg(\sqrt{\sqrt{\Psi_{B}}\tilde{\Psi}_{B}\sqrt{\Psi_{B}}}\bigg) = F(\Psi_{B}, \tilde{\Psi}_{B}).
\end{align}
Combining this result with \eqref{eq:MEA-n-tangle}, it follows that $\mathcal{A}^\tau(|\Psi\rangle) = F(\Psi_{B}, \tilde{\Psi}_{B})$.
\end{proof}

\section{Proof of Theorems~\ref{Th:cgme-upper-bound} and~\ref{Th:CE-upper-bound}}\label{app:ub-proofs}
Let $\mathcal{D}(\mathcal{H}_B)$ denote the space of density operators over $\mathcal{H}_B$. 

\begin{proof}[Proof of Theorem \ref{Th:cgme-upper-bound}]
Note that the function from $\mathcal{D}(\mathcal{H}_B)$ to the reals defined by $\rho \mapsto \tr(\rho^2)$ is convex (see, e.g., \cite[Thm.~2.10]{carlen2010trace}), so that the function $\rho \mapsto 1- \tr(\rho^2)$ is concave. Also, the function $t\mapsto \sqrt{2t}$ is concave and non-decreasing. Therefore, the function $\rho \mapsto \sqrt{2(1 - \tr(\rho^2)}$ is concave as well. Consider now a post-measurement ensemble $\{(p_i, |\varphi_i\rangle)\}_i$ produced by a (not-necessarily local) measurement on the $A$ systems of $|\Psi\rangle$. Then 
\begin{align}
    \sum_i p_i C_{\text{GME}}(|\varphi_i\rangle) &=
    \sum_i p_i \min_{\varnothing \subsetneq \gamma \subsetneq B} \sqrt{2(1- \tr((\varphi_i)_{\gamma}^2)}\\ &\leq 
    \min_{\varnothing \subsetneq \gamma \subsetneq B} 
    \sum_i p_i \sqrt{2(1- \tr((\varphi_i)_{\gamma}^2)}\\
    &\leq 
    \min_{\varnothing \subsetneq \gamma \subsetneq B} 
    \sqrt{2\bigg(1- \tr\bigg[\bigg(\sum_i p_i(\varphi_i)_\gamma\bigg)^2\bigg]\bigg)} \\
    &= \min_{\varnothing \subsetneq \gamma \subsetneq B} 
    \sqrt{2(1- \tr(\Psi_{\gamma}^2))} 
\end{align} 
The bound on the multipartite entanglement of assistance and, hence, the localizable entanglement $\Lgme$ follows. 
\end{proof}

\begin{proof}[Proof of Theorem~\ref{Th:CE-upper-bound}] 
The proof is similar to the one of Theorem~\ref{Th:cgme-upper-bound}. Recall that for any $\gamma \subseteq B$, the function from $\mathcal{D}(\mathcal{H}_B)$ to the reals defined by $\rho\mapsto \tr(\rho^2)$ is convex. Consider a post-measurement ensemble $\{(p_i, |\varphi_i\rangle)\}_i$ produced by a (not-necessarily local) measurement on the $A$ systems of $|\Psi\rangle$. It then follows that for any $\varnothing \subsetneq s \subseteq B$,
\begin{align}
    \sum_i p_i \CE{|\varphi_i\rangle}{s} &=
    \sum_i p_i\bigg(1 - \frac{1}{2^{|s|}}\sum_{\gamma\subseteq s} \tr((\varphi_i)_\gamma^2)\bigg) \\
    &= 1 - \frac{1}{2^{|s|}}\sum_{\gamma\subseteq s} \sum_i p_i \tr((\varphi_i)_\gamma^2) \\
    &\leq 1 - \frac{1}{2^{|s|}}\sum_{\gamma\subseteq s} \tr\bigg(\bigg[\sum_i p_i (\varphi_i)_\gamma\bigg]^2\bigg) \\
    &= 1 - \frac{1}{2^{|s|}}\sum_{\gamma\subseteq s} \tr(\Psi_{\gamma}^2) \\
    &= \CE{|\Psi\rangle}{s},
\end{align}
from which the bounds of~\eqref{eq:LCE-bound} follow.
\end{proof}

We note that our proof of Theorem~\ref{Th:CE-upper-bound} is essentially an alternate proof of the monotonicity of the CE under local von Neumann measurements and discarding of subsystems. Our approach has the advantage that it succinctly avoids the subtleties that come with LOCC protocols that transform an initial state into states belonging to a Hilbert space of smaller dimension addressed in Section~\ref{sec:bounds}. 

\section{Proof of Theorem \ref{Th:corr-func-lower-bound}}\label{app:LCE-corr-func-LB}

In this section, we will prove a lower bound on $\Lce$ and $\Lsqrtce$ in terms of spin-correlation functions $Q_{\vec{a}, \vec{b}}^{ij}$. Recall that for any two qubits $i,j \in B$ and unit vectors $\vec{a}, \vec{b} \in \mathbb{R}^3$, we define \begin{equation}
    Q_{\vec{a}, \vec{b}}^{ij}(|\Psi\rangle) \coloneqq \langle \Psi|(\vec{a}\cdot \vec{\sigma}^{(i)}) \otimes (\vec{b} \cdot \vec{\sigma}^{(j)})|\Psi\rangle - \langle \Psi|(\vec{a} \cdot \vec{\sigma}^{(i)})|\Psi\rangle\langle \Psi|(\vec{b} \cdot\vec{\sigma}^{(j)})|\Psi\rangle.
\end{equation}

\begin{proof}[Proof of Theorem \ref{Th:corr-func-lower-bound}]
Let $E$ denote the two-qubit concurrence~\cite{Wootters1997}. Choose two different qubits $l,m\in s$, and let $|\varphi\rangle_{B_lB_m}$ be a pure state in the joint Hilbert space of $B_l$ and $B_m$.
For the corresponding marginal states, we then have
\begin{equation}\label{eq:2-qubit-concur-relation}
   \CE{|\varphi\rangle_{B_lB_m}}{\{l, m\}} = 1 - \frac{1}{2^2}\left(1 + \tr(\varphi_{B_l}^2) + \tr(\varphi_{B_m}^2) + \tr(\varphi^2)\right) = \frac{1}{2} \left(1 - \tr(\varphi_{B_l}^2) \right) = \frac{1}{4} E(|\varphi\rangle)^2.
\end{equation}
The last equality may be seen by writing the subsystem purities $\tr(\varphi_{B_l}^2)$, $\tr(\varphi_{B_m}^2)$, and the two-qubit concurrence $E(|\varphi\rangle)$ in terms of the Schmidt coefficients of $|\varphi\rangle$ (see, e.g., \cite{Wootters1997, Rungta2001}).

Let now  $\lbrace(p_{i,j}, |\varphi_{i,j} \rangle_{B_lB_m}) \rbrace_{i,j}$ be a post-measurement ensemble produced by a local projective measurement on all qubits of $|\Psi\rangle$ apart from qubits $B_l$ and $B_m$ in $B$ that maximizes the average concurrence $\sum_{i,j} p_{i,j} E(|\varphi_{i,j}\rangle)$.
Here, $E$ is again the two-qubit concurrence.
We view this measurement as consisting of two steps: first measuring the $A$-systems of $|\Psi\rangle$ with some projective measurement determined by a basis $\lbrace |i\rangle\langle i|_A\rbrace_i$ of the $A$-qubits, and then measuring out the remaining qubits in $B\setminus B_lB_m\equiv (B_lB_m)^c$ with some projective measurement determined by a basis $\lbrace |j\rangle\langle j|_{(B_lB_m)^c}\rbrace_j$. 
With these choices, we have
\begin{align}
	|\varphi_{i,j}\rangle_{B_lB_m} &= \frac{1}{\sqrt{p_{i,j}}} \left(I_{B_lB_m} \otimes \langle i|_A \otimes \langle j|_{(B_lB_m)^c}\right) |\Psi\rangle\\
	p_{i,j} &= \langle \Psi| \left(I_{B_lB_m} \otimes |i\rangle\langle i|_A \otimes |j\rangle\langle j|_{(B_lB_m)^c}\right) |\Psi\rangle,
	\intertext{and the intermediate state ensemble $\lbrace (p_i, |\omega_i\rangle_B)\rbrace_i$ after measuring just the qubits in $A$ is given by}
	|\omega_i\rangle_B &= \frac{1}{\sqrt{p_i}} \left( I_B\otimes \langle i|_A \right) |\Psi\rangle\\
	p_i &= \langle \Psi| \left(I_B \otimes |i\rangle\langle i|_A \right) |\Psi\rangle.
\end{align}
The two ensembles are related via measuring the states $|\omega_i\rangle_B$ with respect to the basis $\lbrace |j\rangle\langle j|_{(B_lB_m)^c}\rbrace_j$:
\begin{align}
	|\varphi_{i,j}\rangle_{B_lB_m} &= \frac{1}{\sqrt{p_{j|i}}} \left(I_{B_lB_m}\otimes \langle j|_{(B_lB_m)^c} \right) |\omega_i\rangle_B\\
	p_{j|i} &= \frac{p_{i,j}}{p_i}.
\end{align}
Here, $p_{j|i}$ is the conditional probability of obtaining state $|\varphi_{i,j}\rangle_{B_lB_m}$ from measuring the state $|\omega_i\rangle_B$.

Let $\hat{a}, \hat{b}$ be arbitrary unit vectors in $\mathbb{R}^3$. Our goal is to relate the two-point correlation function $Q_{\hat{a},\hat{b}}^{lm}$ to the two-qubit concurrence $E$, switch to the concentratable  entanglement $\CE{|\varphi_{i,j}\rangle}{\lbrace l,m\rbrace}$ via \eqref{eq:2-qubit-concur-relation}, and then relate it to the concentratable  entanglement of the ensemble states in $(p_i, |\omega_i\rangle_B)_i$ using the LOCC monotonicity property of the concentratable  entanglement.
This is achieved using the following chain of inequalities:
\begin{align}
	\left|Q_{\hat{a},\hat{b}}^{lm}(|\Psi\rangle)\right| &\leq \sumi_{i,j} p_{i,j} E(|\varphi_{i,j}\rangle_{B_lB_m}) \label{eq:verstraete}\\
	&= 2 \sumi_{i,j} p_{i,j} \sqrt{ \CE{|\varphi_{i,j}\rangle_{B_lB_m}}{\lbrace l,m\rbrace} } \label{eq:concurrence-ce}\\
	&= 2 \sumi_i p_i \sumi_j p_{j|i}\, \sqrt{ \CE{ |\varphi_{i,j}\rangle_{B_lB_m} \otimes |j\rangle_{(B_lB_m)^c} }{\lbrace l,m\rbrace}} \label{eq:split-conditional} \\
	&\leq 2 \sumi_i p_i\, \sqrt{ \CE{|\omega_i\rangle_B}{\lbrace l,m\rbrace}} \label{eq:locc-monotonicity}\\
	&\leq 2 \sumi_i p_i\, \sqrt{ \CE{|\omega_i\rangle_B}{s} } \label{eq:subset-monotonicity}\\
	&\leq 2 L^{\sqrt{\rm CE}} (|\Psi\rangle) \label{eq:localizable}.
\end{align}
The inequality \eqref{eq:verstraete} was proved in \cite{Verstraete-LE, Popp-LE}, and the equality \eqref{eq:concurrence-ce} uses \eqref{eq:2-qubit-concur-relation}.
In the equality \eqref{eq:split-conditional} we substituted $p_{i,j} = p_{j|i}p_i$ and used the fact that the concentratable entanglement $\CE{\psi}{s}$ is invariant under tensoring with a state not supported on the subsystems labeled by $s$.
The crucial inequality \eqref{eq:locc-monotonicity} uses the fact that the concentratable entanglement and its square root are entanglement measures and hence monotonic under LOCC \cite{Beckey2021}.
Inequality \eqref{eq:subset-monotonicity} follows from the fact that $\lbrace l,m\rbrace\subseteq s$ and \cite[Thm.~1.3]{Beckey2021}.
Finally, optimizing over all local projective measurements on $B$ gives \eqref{eq:localizable}.

Since $l, m \in s$ and $\hat{a}, \hat{b} \in S^2$ were arbitrary, the left-hand side of~\eqref{eq:verstraete} maximized over $l \neq m \in s$ and $\hat{a}, \hat{b} \in S^2$ also lower bounds the right-hand side of~\eqref{eq:localizable}, giving us part of~\eqref{eq:corr-func-lb}. To obtain~\eqref{eq:corr-func-lb} in its entirety, it suffices to note that the concavity of the square root function yields $\sqrt{\mathcal{L}^{CE}} \geq \mathcal{L}^{\sqrt{CE}}$. We complete the proof by noting that Popp \textit{et al.}~\cite{Popp-LE} showed that $\max_{\hat{a}, \hat{b} \in S^2} \left|Q_{\vec{a}, \vec{b}}^{ij}(|\Psi\rangle)\right|$ for arbitrary $i,j$ is given by the maximum singular value of the matrix $\mathbf{Q}^{ij}(|\Psi\rangle)$, whose elements are $(\mathbf{Q}^{ij}(|\Psi\rangle))_{pq} = Q_{\hat{e}_p, \hat{e}_q}^{ij}(|\Psi\rangle)$.
\end{proof}

\section{Proof of Theorem~\ref{Th:LE-continuity-bound} and Lemma~\ref{lemma:3-continuity-bounds}}\label{app:continuity-results}

Let us recall that we are considering a pure state entanglement measure $E$ defined over $\mathcal{H}_B$ that satisfies the properties
\begin{align}
    |E(|\psi\rangle) - E(|\psi'\rangle)| &\leq f(\|\psi - \psi' \|_1) \label{eq:modified-lipschitz}\qquad\text{for } |\psi\rangle, |\psi'\rangle \in \mathcal{H}_B
\intertext{and}
    E(|\psi\rangle) &\leq f(\|\psi\|_1) \qquad\text{for }  |\psi\rangle \in \mathcal{H}_B \label{ineq:entanglement-bound}
\end{align}
for some concave, monotonically increasing function $f\colon \mathbb{R}_{\geq 0} \to \mathbb{R}_{\geq 0}$. For all normalized states $|\Psi\rangle \in \mathcal{H}_A\otimes \mathcal{H}_B$ and orthonormal bases $\beta = \{|i\rangle\}_{i=1}^{d_A}$ of $\mathcal{H}_A$, define 
\begin{equation}\label{eq:avg-entanglement-def}
    \overline{E}_\beta(|\Psi\rangle) \equiv \sum_{i=1}^{d_A} \langle \Psi|(|i\rangle\langle i|_A \otimes I_B)|\Psi\rangle
    ~E\left(\frac{(\langle i|_A \otimes I_B)|\Psi\rangle}{\sqrt{\langle \Psi|(|i\rangle\langle i|_A \otimes I_B)|\Psi\rangle}}\right).
\end{equation}
That is, $\overline{E}_\beta(|\Psi\rangle)$ is the average post-measurement entanglement produced by the Von Neumann measurement defined by basis $\beta$ on the $A$ systems. If the probability $\langle \Psi|(|i\rangle\langle i|_A \otimes I_B)|\Psi\rangle$ of the $i$-th outcome is zero, we interpret 
\begin{equation}
    \frac{(\langle i|_A \otimes I_B)|\Psi\rangle}{\sqrt{\langle \Psi|(|i\rangle\langle i|_A \otimes I_B)|\Psi\rangle}}
\end{equation}
to be the zero vector and take the value of $E$ at zero to be zero. Our next lemma allows us to locally bound the variations in the average post-measurement entanglement $\overline{E}_\beta$. 

\begin{lemma}\label{lemma:variation-bounds}
Let $|\Psi\rangle, |\Psi'\rangle \in \mathcal{H}_A\otimes \mathcal{H}_B$ be arbitrary normalized states and fix an orthonormal basis $\beta = \{|i\rangle\}_{i=1}^{d_A}$ of $\mathcal{H}_A$. Then
\begin{equation}\label{eq:var-dist-bound-2}
    |\overline{E}_\beta(|\Psi\rangle) - \overline{E}_\beta(|\Psi'\rangle)| \leq f(2\|\Psi - \Psi'\|_1) + \|\Psi - \Psi'\|_1.
\end{equation}
\end{lemma}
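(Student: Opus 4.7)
The plan is to reduce the bound on the difference of weighted average entanglements to two cleanly separable estimates, one on the mismatch between outcome probabilities and one on the mismatch between post-measurement states, and then combine them via Jensen's inequality for the concave function $f$. First I will introduce the unnormalized post-measurement vectors $|v_i\rangle = (\langle i|_A \otimes I_B)|\Psi\rangle$ (and $|v_i'\rangle$ analogously), so that $p_i = \langle v_i | v_i \rangle$, $|\phi_i\rangle = |v_i\rangle/\sqrt{p_i}$, and the unnormalized density operators $\omega_i \equiv |v_i\rangle\langle v_i|$ satisfy $\omega_i = p_i\,\phi_i$, where $\phi_i = |\phi_i\rangle\langle\phi_i|$. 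A term-by-term application of the triangle inequality then gives
\begin{equation*}
|p_i E(|\phi_i\rangle) - p_i' E(|\phi_i'\rangle)| \leq |p_i - p_i'|\, E(|\phi_i\rangle) + p_i'\, |E(|\phi_i\rangle) - E(|\phi_i'\rangle)|,
\end{equation*}
which, using $E \leq 1$ together with \eqref{eq:modified-lipschitz}, is bounded by $|p_i - p_i'| + p_i' f(\|\phi_i - \phi_i'\|_1)$.

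The next step is to sum over $i$ and control each piece. For the probability gap, $p_i = \langle i|\Psi_A|i\rangle$, so the sum of absolute diagonal differences is bounded by the trace norm of $\Psi_A - \Psi_A'$, which in turn is bounded by $\|\Psi - \Psi'\|_1$ via contractivity of the partial trace. For the state gap, I will use the identity
\begin{equation*}
p_i'(\phi_i - \phi_i') = (\omega_i - \omega_i') + (p_i' - p_i)\phi_i,
\end{equation*}
which together with the triangle inequality yields $p_i' \|\phi_i - \phi_i'\|_1 \leq \|\omega_i - \omega_i'\|_1 + |p_i - p_i'|$. The crucial observation is that $\mathcal{P}_\beta(|\Psi\rangle\langle\Psi| - |\Psi'\rangle\langle\Psi'|) = \sum_i |i\rangle\langle i|_A \otimes (\omega_i - \omega_i')$ for the pinching channel $\mathcal{P}_\beta(X) = \sum_i (|i\rangle\langle i|_A \otimes I_B)\, X\, (|i\rangle\langle i|_A \otimes I_B)$, and the trace norm is additive on block-diagonal operators and contractive under CPTP maps. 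Hence $\sum_i \|\omega_i - \omega_i'\|_1 = \|\mathcal{P}_\beta(|\Psi\rangle\langle\Psi| - |\Psi'\rangle\langle\Psi'|)\|_1 \leq \|\Psi - \Psi'\|_1$, and combining with the probability bound gives $\sum_i p_i' \|\phi_i - \phi_i'\|_1 \leq 2\|\Psi - \Psi'\|_1$.

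To finish, I will apply Jensen's inequality for the concave function $f$, using that $\{p_i'\}$ is a probability distribution, followed by monotonicity of $f$:
\begin{equation*}
\sum_i p_i' f(\|\phi_i - \phi_i'\|_1) \leq f\!\left(\sum_i p_i' \|\phi_i - \phi_i'\|_1\right) \leq f(2\|\Psi - \Psi'\|_1).
\end{equation*}
Together with $\sum_i |p_i - p_i'| \leq \|\Psi - \Psi'\|_1$ this yields the claimed inequality \eqref{eq:var-dist-bound-2}. The one nuisance I anticipate is the edge case where some $p_i$ or $p_i'$ vanishes, so that $|\phi_i\rangle$ or $|\phi_i'\rangle$ is ill-defined; under the convention of \eqref{eq:avg-entanglement-def} such terms contribute zero to $\overline{E}_\beta$, and each of the inequalities above remains valid (either trivially, since $p_i' \|\phi_i - \phi_i'\|_1 = 0$ in the corresponding factor, or because the per-term triangle-inequality decomposition continues to hold when the offending $E(|\phi_i\rangle)$ or $E(|\phi_i'\rangle)$ is interpreted as zero).
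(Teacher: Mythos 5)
Your proof is correct and follows essentially the same route as the paper's: a term-by-term triangle-inequality split into a probability-mismatch part (bounded by $\|\Psi-\Psi'\|_1$ via contractivity of the dephasing/partial-trace maps) and a state-mismatch part (bounded by $f(2\|\Psi-\Psi'\|_1)$ via Jensen's inequality for the concave $f$ together with block-diagonal additivity and contractivity of the pinching channel). The only cosmetic differences are which cross term you insert (weighting the $f$-terms by $p_i'$ rather than $p_i$) and that you carry out the $\omega_i$-insertion per index rather than at the operator level; the edge case $p_i=0$ is handled exactly as in the paper, implicitly relying on condition~\eqref{ineq:entanglement-bound}.
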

\begin{proof}
    For $1\leq i \leq d_A$, define 
    \begin{align}
        p_i &\equiv\label{eq:pi-def} \langle \Psi|(|i\rangle\langle i|_A \otimes I_B)|\Psi\rangle \\
        p_i' &\equiv\label{eq:pi'-def} \langle \Psi'|(|i\rangle\langle i|_A \otimes I_B)|\Psi'\rangle, \\
        |\varphi_i\rangle &\equiv\label{eq:psi-i-def} \frac{(\langle i|_A \otimes I_B)|\Psi\rangle}{\sqrt{p_i}}, \\
        |\varphi_i'\rangle &\equiv\label{eq:psi-i'-def} \frac{(\langle i|_A \otimes I_B)|\Psi'\rangle}{\sqrt{p_i'}}.
    \end{align}
    Observe the following:
    \begin{align}
        |\overline{E}_\beta(|\Psi\rangle) - \overline{E}_\beta(|\Psi'\rangle)| &=
        \left| \sum_i \left[p_i E(|\varphi_i\rangle) - p_i' E(|\varphi_i'\rangle)\right]\right| \\ &\leq
        \sum_i \left| p_i E(|\varphi_i\rangle) - p_i' E(|\varphi_i'\rangle) \right| \\ &= 
        \sum_i \left| p_i E(|\varphi_i\rangle) - p_i E(|\varphi_i'\rangle) + p_i E(|\varphi_i'\rangle) - p_i' E(|\varphi_i'\rangle) \right| \\
        &\leq \underbrace{\sum_i p_i |E(|\varphi_i\rangle)- E(|\varphi_i'\rangle)|}_{\rm (I)} + \underbrace{\sum_i E(|\varphi_i'\rangle) |p_i - p_i'|}_{\rm (II)}.
    \end{align}
    The quantity (II) may be bounded as follows. 
    \begin{align}
        {\rm (II)} &\equiv \sum_i E(|\varphi_i'\rangle)|p_i - p_i'|  \\
        &\leq\label{ineq:TV-dist} \sum_i |p_i - p_i'| \\
        &=\label{eq:TV-bound-write-probs} \sum_i \left|\tr[(|i\rangle\langle i|_A \otimes I_B)\Psi] - \tr[(|i\rangle\langle i|_A \otimes I_B)\Psi']\right| \\
        &=\label{eq:TV-bound-CPTP-map} \left \| \sum_i \tr[(|i\rangle\langle i|_A \otimes I_B)\Psi] |i\rangle\langle i|_A \otimes \frac{I_B}{d_B}  -  \sum_i \tr[(|i\rangle\langle i|_A \otimes I_B)\Psi'] |i\rangle\langle i|_A \otimes \frac{I_B}{d_B} \right\|_1 \\
        &\leq\label{ineq:TV-bound-monotonicity} \|\Psi - \Psi' \|_1.
    \end{align}
    Inequality~\eqref{ineq:TV-dist} follows from the fact that the values of $E$ lie in $[0,1]$. Equation~\eqref{eq:TV-bound-CPTP-map} follows from the fact that the trace norm of an operator is equal to the sum of its singular values. Inequality~\eqref{ineq:TV-bound-monotonicity} may be seen by noting that the expression in line~\eqref{eq:TV-bound-CPTP-map} is the trace distance between the states produced by the CPTP map from $\mathcal{D}(\mathcal{H}_A \otimes \mathcal{H}_B)$ to $\mathcal{D}(\mathcal{H}_A \otimes \mathcal{H}_B)$ defined by
    \begin{equation}
        \rho_{AB} \mapsto \sum_i \tr[(|i\rangle\langle i|_A \otimes I_B)\rho_{AB}]|i\rangle\langle i|_A \otimes \frac{I_B}{d_B}
    \end{equation}
    acting on $\Psi$ and $\Psi'$ and then using the monotonicity of the trace distance under CPTP maps.     

    Let us now bound term (I). We have
    \begin{align}
        {\rm (I)} 
        &\equiv \sum_i p_i |E(|\varphi_i\rangle) - E(|\varphi_i'\rangle)| \\
        &\leq\label{ineq:Lip-bound-conds-12} \sum_i p_i f(\|\varphi_i - \varphi_i'\|_1) \\
        &\leq\label{ineq:Lip-bound-convave} f\left(\sum_i p_i \|\varphi_i - \varphi_i'\|_1\right) \\
        &=\label{eq:Lip-bound-bring-in-trace-dist} f\left( \left \|\sum_i p_i |i\rangle\langle i|_A \otimes (\varphi_i - \varphi_i') \right \|_1 \right) \\
        &= f\left(\left \|\sum_i |i\rangle\langle i|_A \otimes (p_i\varphi_i - p_i'\varphi_i' + p_i'\varphi_i' - p_i \varphi_i') \right \|_1\right) \\
        &\leq\label{ineq:triangle-f-mon} f\left(\left \|\sum_i |i\rangle\langle i|_A \otimes (p_i \varphi_i - p_i'\varphi_i')\right\|_1 +
        \left \|\sum_i (p_i' - p_i)|i\rangle\langle i|_A \otimes \varphi_i' \right \|_1\right) \\
        &=\label{eq:Lip-bound-use-psi-i-def} f\Bigg(\left \|\sum_i \left[(|i\rangle\langle i|_A \otimes I_B)\Psi(|i\rangle\langle i|_A \otimes I_B) -  (|i\rangle\langle i|_A \otimes I_B)\Psi'(|i\rangle\langle i|_A \otimes I_B) \right] \right \|_1 \\
        &\ \ \ \ \ \ \ \ \ \ \ \ + \sum_i |p_i' - p_i| \Bigg) \notag \\
        &\leq\label{ineq:Lip-bound-monotonicity} f\left(\|\Psi - \Psi'\|_1 + \|\Psi - \Psi'\|_1 \right)\\
        &= f(2\|\Psi - \Psi'\|_1)
    \end{align}

Inequality~\eqref{ineq:Lip-bound-conds-12} follows from our conditions imposed on $E$ in~\eqref{eq:modified-lipschitz} and~\eqref{ineq:entanglement-bound}. Note that~\eqref{ineq:entanglement-bound} plays an important role in this step because it ensures that in the case that either $|\varphi_i\rangle$ or $|\varphi_i'\rangle$ is the zero vector,~\eqref{ineq:Lip-bound-conds-12} still holds. Inequality~\eqref{ineq:Lip-bound-convave} follows from the concavity of $f$. Equation~\eqref{eq:Lip-bound-bring-in-trace-dist} may be seen from the fact that the trace norm of an operator is equal to the sum of its singular values. Inequality~\eqref{ineq:triangle-f-mon} is due to the triangle inequality and the fact that $f$ is monotonically increasing. Equality~\eqref{eq:Lip-bound-use-psi-i-def} follows from using the definitions~\eqref{eq:pi-def}-\eqref{eq:psi-i'-def} to expand the first term and again recalling that the trace norm of an operator is equal to the sum of its singular values to simplify the second term. Inequality~\eqref{ineq:Lip-bound-monotonicity} is obtained from a combination of using the monotonicity of the trace distance under CPTP maps to bound the first term and recalling our work from~\eqref{ineq:TV-dist}-\eqref{ineq:TV-bound-monotonicity} to bound the second term, together with the fact that $f$ is monotonically increasing.

The statement of the lemma now follows from our bounds on (I) and (II).
\end{proof}

The continuity bounds of Theorem~\ref{Th:LE-continuity-bound} then follow simply from our continuity bound on the average entanglement: 
\begin{proof}[Proof of Theorem~\ref{Th:LE-continuity-bound}] 
    Fix two quantum states $|\Psi\rangle, |\Psi'\rangle \in \mathcal{H}_A\otimes\mathcal{H}_B$, and assume without loss of generality that $L^E(|\Psi'\rangle) \leq L^E(|\Psi\rangle)$.
    Let furthermore $\beta$ be an orthonormal local measurement basis of $\mathcal{H}_A$ such that $L^E(|\Psi\rangle) = \overline{E}_\beta(|\Psi\rangle)$.
    Then by definition, $L^E(|\Psi'\rangle)\geq \overline{E}_\beta(|\Psi'\rangle)$, and it follows that 
    \begin{equation}
        |L^E(|\Psi\rangle) - L^E(|\Psi'\rangle)| 
        = L^E(|\Psi\rangle) - L^E(|\Psi'\rangle)
        \leq \overline{E}_\beta(|\Psi\rangle) - \overline{E}_\beta(|\Psi'\rangle).
    \end{equation}
    
    The conclusion of the theorem for the localizable entanglement $L^E$ then follows from Lemma~\ref{lemma:variation-bounds}. The continuity bound for the multipartite entanglement of assistance $\mathcal{A}^E$ follows from a nearly identical argument.
\end{proof}

We now prove the continuity bounds on the $N_B$-tangle, the GME-concurrence, and the concentratable entanglement of Lemma~\ref{lemma:3-continuity-bounds} one by one. For any vector $|v\rangle$ in a Hilbert space, let $\||v\rangle \|_2 \coloneqq \sqrt{\langle v|v\rangle}$, and for any linear operator $L$ on a Hilbert space let $\|L\|_2 \coloneqq \sqrt{\tr(L^\dagger L)}$ be the Hilbert-Schmidt norm of $L$. 

\begin{proof}[Proof of~\eqref{ineq:n-tangle-continuity-bound}]
    Fix states $|\psi\rangle, |\psi'\rangle \in \mathcal{H}_B$. The bound~\eqref{ineq:n-tangle-continuity-bound} follows from the following chain of inequalities, in which we may assume without loss of generality that $\langle \psi|\psi'\rangle = |\langle \psi|\psi'\rangle|$ because $\tau_{N_B}(|\psi\rangle) = \tau_{N_B}(e^{i\theta}|\psi\rangle)$ for all real $\theta$:
    \begin{align}
        |\tau_{N_B}(|\psi\rangle) - \tau_{N_B}(|\psi'\rangle)| &= \left||\langle \psi|\tilde{\psi}\rangle| - |\langle \psi'|\tilde{\psi'}\rangle|\right| \\
        &\leq |\langle \psi|\tilde{\psi}\rangle - \langle \psi'|\tilde{\psi}'\rangle | \\
        &= |\langle \psi|\tilde{\psi}\rangle - \langle \psi'|\tilde{\psi}\rangle + \langle \psi'|\tilde{\psi}\rangle - \langle \psi'|\tilde{\psi}'\rangle| \\
        &\leq |(\langle \psi| - \langle \psi'|)|\tilde{\psi}\rangle| + |\langle \psi'|(|\tilde{\psi}\rangle - |\tilde{\psi}'\rangle)| \\
        &\leq\label{ineq:CS} \| |\psi\rangle - |\psi'\rangle \|_2 + \||\tilde{\psi}\rangle - |\tilde{\psi}'\rangle \|_2 \\
        &= 2\| |\psi\rangle - |\psi'\rangle \|_2 \\
        &= 2 \sqrt{2 - 2|\langle \psi|\psi'\rangle|} \\
        &\leq 2\sqrt{2 - 2|\langle \psi|\psi'\rangle|^2} \\
        &= \sqrt{2}\|\psi - \psi'\|_1.
    \end{align}
    Inequality~\eqref{ineq:CS} follows from the Cauchy-Schwarz inequality. The last equality is the Fuchs-van-de-Graaf inequality for fidelity and trace distances, which holds as an equality for the pure states $\psi$ and $\psi'$ \cite[eq.(9.173)]{Wilde2016}.
\end{proof}

The following proof of our continuity bound on the GME-concurrence modifies the proofs of the Lipschitz continuity of the concentratable entanglement given in~\cite{schatzki2021entangled, Beckey2021}.
\begin{proof}[Proof of~\eqref{ineq:gme-concurrence-continuity-bound}]
    Fix $|\psi\rangle, |\psi'\rangle \in \mathcal{H}_B$. Observe that for any $\varnothing \subsetneq \gamma \subsetneq [N]$, we have
    \begin{align}
        \left|\sqrt{2(1 - \tr[\psi_\gamma^2])} - \sqrt{2(1 - \tr[(\psi')_\gamma^2])}\right| &\leq 
        \sqrt{\left|2(1 - \tr[\psi_\gamma^2]) - 2(1 - \tr[(\psi')_\gamma^2])\right|} \label{eq:first-gme-cont-proof-step}\\ &=
        \sqrt{2\left|\tr[(\psi')_\gamma^2] - \tr[\psi_\gamma^2]\right|} \label{eq:purity-deviation} \\ &= 
        \sqrt{2\left|\tr[(\psi'_\gamma+ \psi_\gamma)(\psi'_\gamma - \psi_\gamma)]\right|} \\ &\leq\label{ineq:CS-CGME}
        \sqrt{2\| \psi'_\gamma + \psi_\gamma \|_2 \|\psi'_\gamma - \psi_\gamma\|_2} \\ &\leq 
        \sqrt{2(\| \psi'_\gamma \|_2 + \|\psi_\gamma \|_2)\|\psi'_\gamma - \psi_\gamma\|_2} \\ &\leq 
        2\sqrt{\|\psi'_\gamma - \psi_\gamma\|_2}\\ &\leq\label{ineq:one-norm-two-norm-rel} 
        2^{3/4}\sqrt{\|\psi'_\gamma - \psi_\gamma\|_1} \\ &\leq\label{ineq:monotonicity-cgme}
        2^{3/4}\sqrt{\|\psi' - \psi\|_1}
    \end{align}

    Inequality~\eqref{ineq:CS-CGME} follows from the Cauchy-Schwarz inequality applied to the Hilbert-Schmidt inner product. Inequality~\eqref{ineq:one-norm-two-norm-rel} follows from a slight tightening of the usual relation between the unnormalized trace distance and Hilbert-Schmidt distance for mixed states, $\|\rho - \sigma\|_2 \leq \frac{1}{\sqrt{2}} \|\rho - \sigma\|_1$~\cite[eq.(6)]{Coles2019}. Finally~\eqref{ineq:monotonicity-cgme} follows from the monotonicity of the trace distance under CPTP maps. 

    Assume now without loss of generality that $C_{\gme}(|\psi\rangle)\geq C_{\gme}(|\psi'\rangle)$, and let $\varnothing \subsetneq \gamma \subsetneq [N]$ be such that $C_{\gme}(|\psi'\rangle) = \sqrt{2(1 - \tr [(\psi')_\gamma^2])}$.
    Then $C_{\gme}(|\psi\rangle) \leq \sqrt{2(1 - \tr [\psi_\gamma^2])}$ by definition, and hence
    \begin{align}
        |C_{\gme}(|\psi\rangle) - C_{\gme}(|\psi'\rangle)| 
        &= C_{\gme}(|\psi\rangle) - C_{\gme}(|\psi'\rangle)\\  
        &\leq \sqrt{2(1 - \tr[\psi_\gamma^2])} - \sqrt{2(1 - \tr [(\psi')_\gamma^2])} \\
        &\leq 2^{3/4} \sqrt{\|\psi - \psi'\|_1}.
    \end{align}
    The last inequality follows from arguments above.
\end{proof}

Finally, for the proof of~\eqref{ineq:ce-continuity-bound}, see~\cite{schatzki2021entangled}. Furthermore,~\cite{Beckey2021} also gives a similar continuity bound for the concentratable entanglement, although they show that it is $2$-Lipschitz whereas~\cite{schatzki2021entangled} shows that it is $\sqrt{2}$-Lipschitz.

\section{Concentration results}\label{app:concentration-results}

A key result we will use throughout this appendix is the well-known Levy's Lemma. For the following discussion, we define $\mathbb{S}^k \coloneqq \{(x_1, \dots, x_{k+1}) \in \mathbb{R}^{k+1}: x_1^2 + \dots + x_{k+1}^2 = 1\}$ for all integers $k \geq 0$. 
\begin{lemma}[Levy's Lemma]\label{lemma:levy}
    Let $f\colon \mathbb{S}^{2d-1} \to \mathbb{R}$ be a function such that $|f(|\Phi\rangle) - f(|\Phi'\rangle)| \leq K\||\Phi\rangle - |\Phi'\rangle\|_2$ for all $|\Phi\rangle, |\Phi'\rangle \in \mathbb{S}^{2d - 1}$. Then for any $\varepsilon > 0$, we have
    \begin{equation}
        \Pr\left( \left| f(|\Phi\rangle) - \EV_{|\Phi'\rangle \sim \Haar(d)}[f(|\Phi'\rangle)] \right| \geq \varepsilon \right) \leq 2\exp\left( - \frac{2d \varepsilon^2}{9\pi^3 K^2}\right)
    \end{equation}
\end{lemma}

First, we will lower bound the expectation value of $\mathcal{A}^\tau(|\Psi\rangle)$ over Haar-random states $|\Psi\rangle \in \mathcal{H}_A \otimes \mathcal{H}_B$. 
\begin{lemma}\label{lemma:EoA-tau-LB}
    Assume $N_B$ is even. We have 
    \begin{equation}\label{ineq:haar-avg-EoA-LB}
        \EV_{|\Psi\rangle\sim \Haar(d)} \left[\mathcal{A}^\tau(|\Psi\rangle)\right] \geq 1- \sqrt{\frac{d_B}{2d_A}}. 
    \end{equation}
\end{lemma}
\begin{proof}
    First, note that Theorem~\ref{Th:n-tangle-upper-bound} implies that for all $|\Psi\rangle \in \mathcal{H}_A \otimes \mathcal{H}_B$, we have
    \begin{equation}\label{ineq:EoA-LB}
        \mathcal{A}^\tau(|\Psi\rangle) = F(\Psi_B, \tilde{\Psi}_B) \geq 1 - \frac{1}{2} \| \Psi_B - \tilde{\Psi}_B\|_1 \geq 
        1 - \frac{\sqrt{d_B}}{2} \| \Psi_B - \tilde{\Psi}_B \|_2, 
    \end{equation}
    where the first inequality is the Fuchs-van de Graaf inequality. Now observe that 
    \begin{align}
        \EV_{|\Psi\rangle \sim \Haar(d)} \left[ \| \Psi_B - \tilde{\Psi}_B \|_2 \right] &= 
        \EV_{|\Psi\rangle \sim \Haar(d)} 
        \left[ \sqrt{2\tr(\Psi_B^2) - 2\tr(\Psi_B \tilde{\Psi}_B)} \right] \\ 
        &\leq\label{ineq:avg-spin-flip-dist-bound}      \sqrt{2\left(\EV_{|\Psi\rangle\sim\Haar(d)}\left[\tr(\Psi_B^2)\right] - \EV_{|\Psi\rangle\sim\Haar(d)}\left[\tr(\Psi_B\tilde{\Psi}_B) \right]\right)}.
    \end{align}
    The first term in~\eqref{ineq:avg-spin-flip-dist-bound} is simply the expectation value of the purity of $\Psi_B$, which equals (see, e.g.,~\cite{Mele2024})
    \begin{equation}
        \EV_{|\Psi\rangle\sim\Haar(d)}\left[\tr(\Psi_B^2)\right] =
        \frac{d_A + d_B}{d_Ad_B + 1}.
    \end{equation}

    Consider the Hilbert space $\mathcal{H}_B \otimes \mathcal{H}_B$ and let $\mathbb{F}_{B_1, B_2}$ denote the swap operator between the two copies of $\mathcal{H}_B$. Let also $\Phi_{B_1B_2}^+$ denote the \textit{maximally entangled state} across $\mathcal{H}_B \otimes \mathcal{H}_B$, or 
    \begin{equation}\label{eq:maximally-entangled-def}
        \Phi_{B_1 B_2}^+ \coloneqq  \frac{1}{d_B} \sum_{i,j=1}^{d_B} |ii\rangle\langle jj|,
    \end{equation}
    where $\{|i\rangle \}_{i=1}^{d_A}$ is an orthonormal basis of $\mathcal{H}_B$. 
    Let also $d=d_Ad_B$.
    Then the second term of~\eqref{ineq:avg-spin-flip-dist-bound} is given by
    \begin{align}                   
        &\EV_{|\Psi\rangle\sim\Haar(d)}\left[\tr(\Psi_B\tilde{\Psi}_B) \right]\notag\\
        &= \EV_{|\Psi\rangle\sim\Haar(d)}\left[\tr\left(\mathbb{F}_{B_1, B_2} \left[\Psi_{B_1} \otimes \tilde{\Psi}_{B_2}\right]\right)\right] \\
        &= \EV_{|\Psi\rangle\sim\Haar(d)}\left[\tr\left(\mathbb{F}_{B_1, B_2} 
        (I_{B_1} \otimes \sigma_y^{\otimes N_B})\tr_{A_1A_2}\left[\Psi \otimes \Psi^\ast \right](I_{B_1} \otimes \sigma_y^{\otimes N_B})
        \right)\right] \\
        &= \tr\left[\mathbb{F}_{B_1, B_2} 
        (I_{B_1} \otimes \sigma_y^{\otimes N_B})\tr_{A_1A_2}\left(\EV_{|\Psi\rangle\sim\Haar(d)} \left[\Psi \otimes \Psi^\ast\right] \right) (I_{B_1} \otimes \sigma_y^{\otimes N_B})
        \right] \\
        &=\label{eq:haar-conjugate-avg} \tr\Bigg[\mathbb{F}_{B_1, B_2} 
        (I_{B_1} \otimes \sigma_y^{\otimes N_B})\tr_{A_1 A_2}\left(\frac{I_{A_1A_2}\otimes I_{B_1B_2} + d \Phi_{A_1 A_2}^+ \otimes \Phi_{B_1B_2}^+}{d(d+1)} \right) \notag\\
        &\qquad\qquad\times (I_{B_1} \otimes \sigma_y^{\otimes N_B})
        \Bigg]\\
        &= \tr\left[\mathbb{F}_{B_1, B_2} 
        (I_{B_1} \otimes \sigma_y^{\otimes N_B}) \left(\frac{d_A^2 I_{B_1 B_2} + d \Phi_{B_1 B_2}^+ }{d(d+1)} \right) 
        (I_{B_1} \otimes \sigma_y^{\otimes N_B}) \right] \\
        &= \frac{d_A^2 \tr[\mathbb{F}_{B_1, B_2}] + d \tr[\mathbb{F}_{B_1,B_2} (I_{B_1} \otimes \sigma_y^{\otimes N_B}) \Phi_{B_1 B_2}^+ (I_{B_1} \otimes \sigma_y^{\otimes N_B})]}{d(d+1)} \\
        &=\label{eq:maximally-entangled-comp} \frac{d_A^2 d_B + d}{d(d+1)} \\
        &= \frac{d_A + 1}{d_Ad_B + 1}.
    \end{align}
    Equation~\eqref{eq:haar-conjugate-avg} follows from standard techniques involved in the computation of Haar averages (see, e.g.,~\cite[Ex.~49]{Mele2024}). 
    Equality~\eqref{eq:maximally-entangled-comp} follows from a direct computation using~\eqref{eq:maximally-entangled-def} and the fact that $N_B$ is even. Referring back to~\eqref{ineq:avg-spin-flip-dist-bound}, it then follows that 
    \begin{equation}
        \EV_{|\Psi\rangle\sim\Haar(d)}\left[ \| \Psi_B - \tilde{\Psi}_B \|_2 \right] 
        \leq \sqrt{\frac{2d_B - 2}{d_Ad_B + 1}}.
    \end{equation}
    Using~\eqref{ineq:EoA-LB}, we conclude that
    \begin{equation}
        \EV_{|\Psi\rangle \sim \Haar(d)}\left[{\mathcal{A}^\tau(|\Psi\rangle)}\right] \geq 1 - \sqrt{\frac{d_B^2 - d_B}{2(d_Ad_B + 1)}}  \geq 1 - \sqrt{\frac{d_B}{2d_A}},
    \end{equation}   
	which proves the claim.
\end{proof}

As before, we define for any quantum state $|\Psi\rangle \in \mathcal{H}_A \otimes \mathcal{H}_B$ and normalized vector $|v\rangle \in \mathcal{H}_A$ the following:
\begin{align}
    p_v(|\Psi\rangle) &\coloneqq \langle \Psi|(|v\rangle\langle v|_A \otimes I_B)|\Psi\rangle \\
    |\varphi_v(\Psi)\rangle &\coloneqq \frac{(\langle v|_A \otimes I_B)|\Psi\rangle}{\sqrt{p_v}}.
\end{align}
For any function $E\colon \mathcal{H}_B \to \mathbb{R}$ and arbitrary orthonormal basis $\beta \subset \mathcal{H}_A$, let $\overline{E}_\beta$ denote the average value of $E$ across the post-measurement ensemble produced by measuring subsystem $A$ with respect to $\beta$:
\begin{equation}
    \overline{E}_\beta(|\Psi\rangle) = \sum_{|v\rangle \in \beta} p_v(|\Psi\rangle) E(|\varphi_v(\Psi)\rangle).
\end{equation}

In our proofs we use the following result which is implicitly proved in Lemma 4 from~\cite{Cotler2023}:

\begin{lemma}\label{lemma:independence}
    Let $|v\rangle \in \mathcal{H}_A$ be a normalized vector. If $|\Psi\rangle \sim \Haar(d)$, then for any functions $F\colon [0,1] \to \mathbb{R}$ and $G\colon \mathcal{H}_B \to \mathbb{R}$ for which the relevant expressions are defined, we have
    \begin{equation}        \EV_{|\Psi\rangle\sim\Haar(d)}\left[F(p_v(|\Psi\rangle)) G(|\varphi_v(\Psi)\rangle)\right] = \EV_{|\Psi\rangle\sim\Haar(d)}\left[F(p_v(|\Psi\rangle))\right]\EV_{|\varphi \rangle\sim\Haar(d_B)}\left[G(|\varphi\rangle)\right]
    \end{equation}
\end{lemma}

Lemma~\ref{lemma:independence} allows us to conveniently compute expectation values of $\overline{E}_\beta$ for an arbitrary orthonormal basis $\beta$ and function $E\colon \mathcal{H}_B \to \mathbb{R}$.
\begin{lemma}\label{lemma:avg-E-calc}
    Fix an orthonormal basis $\beta = \{|i\rangle\}_{i=1}^{d_A}$ of $\mathcal{H}_A$. Then for any function $E\colon \mathcal{H}_B \to \mathbb{R}$, we have
    \begin{equation}
        \EV_{|\Psi\rangle\sim\Haar(d)}\left[ \overline{E}_\beta(|\Psi\rangle)\right] = 
        \EV_{|\varphi\rangle \sim \Haar(d_B)}\left[E(|\varphi\rangle)\right]
    \end{equation}
\end{lemma}
\begin{proof}
    Observe that 
    \begin{align}
        \EV_{|\Psi\rangle \sim \Haar(d)} \left[\overline{E}_\beta(|\Psi\rangle)\right] 
        &=         
        \EV_{|\Psi\rangle \sim \Haar(d)} \left[\sum_{i=1}^{d_A} p_i(|\Psi\rangle) E(|\varphi_i(\Psi)\rangle) \right] \\
        &= \sum_{i=1}^{d_A}        \EV_{|\Psi\rangle\sim\Haar(d)}\left[p_i(|\Psi\rangle) E(|\varphi_i(\Psi)\rangle) \right] \\
        &= \sum_{i=1}^{d_A} \EV_{|\Psi\rangle\sim\Haar(d)}\left[p_i(|\Psi\rangle)\right] \EV_{|\varphi \rangle\sim\Haar(d_B)} \left[ E(|\varphi\rangle) \right] \\
        &= \sum_{i=1}^{d_A} \frac{1}{d_A} \EV_{|\varphi \rangle\sim\Haar(d_B)} \left[ E(|\varphi\rangle) \right]\\
        &= \EV_{|\varphi \rangle\sim\Haar(d_B)} \left[ E(|\varphi\rangle) \right],
    \end{align}
    where we used Lemma~\ref{lemma:independence} in the third equality.
\end{proof}
We now use Lemma~\ref{lemma:avg-E-calc} to bound or compute the average post-measurement values of the $N_B$-tangle, (squared) GME-concurrence, and concentratable entanglement for a fixed, non-necessarily-optimal measurement basis. We start by bounding the Haar average of the $N_B$-tangle across $\mathcal{H}_B$.

\begin{lemma}\label{lemma:avg-ntangle-bound}
    The following upper bound holds:
    \begin{equation}
        \EV_{|\psi\rangle \sim \Haar(d_B)}[\tau_{N_B}(|\psi\rangle)] \leq \sqrt{\frac{2}{d_B + 1}}. 
    \end{equation}
\end{lemma}
\begin{proof}
    The statement is trivially true when $N_B$ is odd, since the $N_B$-tangle is generically zero in that case.
    Assume therefore that $N_B$ is even. We have
    \begin{align}
        \EV_{|\psi\rangle \sim \Haar(d_B)}[\tau_{N_B}(|\psi\rangle)] &=
        \EV_{|\psi\rangle \sim \Haar(d_B)}[|\langle \psi|\sigma_y^{\otimes N_B}|\psi^\ast\rangle|] \\
        &= \EV_{|\psi\rangle \sim \Haar(d_B)}
        \left[\sqrt{\langle \psi|\sigma_y^{\otimes N_B}|\psi^\ast\rangle
        \langle \psi^\ast|\sigma_y^{\otimes N_B}|\psi\rangle}\right] \\
        &\leq\label{ineq:haar-avg-ntangle-concavity} \sqrt{\EV_{|\psi\rangle\sim \Haar(d_B)} 
        \left[\langle \psi|\sigma_y^{\otimes N_B}|\psi^\ast\rangle
        \langle \psi^\ast|\sigma_y^{\otimes N_B}|\psi\rangle\right]} \\
        &=\label{eq:swap-trick-ntangle} \sqrt{\EV_{|\psi\rangle\sim \Haar(d_B)}  
        \left[\tr\left(\mathbb{F}_{B_1, B_2}\left[\sigma_y^{\otimes N_B}|\psi\rangle\langle \psi| \otimes \sigma_y^{\otimes N_B}|\psi^\ast\rangle\langle \psi^\ast|\right]\right)\right]} \\
        &= \sqrt{\tr\left[\mathbb{F}_{B_1, B_2}\left((\sigma_y^{\otimes N_B} \otimes \sigma_y^{\otimes N_B}) \EV_{|\psi\rangle\sim \Haar(d_B)}\left[ |\psi\rangle\langle\psi|\otimes |\psi^\ast\rangle\langle \psi^\ast| \right] \right) \right]} \\
        &=\label{eq:isotropic-state-ntangle} \sqrt{\tr\left(\mathbb{F}_{B_1, B_2} \left[(\sigma_y^{\otimes N_B} \otimes \sigma_y^{\otimes N_B})\left(\frac{I_{B_1}\otimes I_{B_2} + d_B \Phi_{B_1 B_2}^+ }{d_B(d_B+1)} \right) \right] \right)} \\
        &= \sqrt{\frac{\tr \left[\mathbb{F}_{B_1, B_2} (\sigma_y^{\otimes N_B} \otimes \sigma_y^{\otimes N_B} )\right] + d_B\tr\left[ \mathbb{F}_{B_1B_2} (\sigma_y^{\otimes N_B} \otimes \sigma_y^{\otimes N_B}) \Phi_{B_1 B_2}^+ \right]}{d_B(d_B+1)}} \\
        &=\label{eq:ntangle-simplify} \sqrt{\frac{\tr[(\sigma_y^2)^{\otimes N_B}] + d_B\tr[(\sigma_y^{\otimes N_B} \otimes \sigma_y^{\otimes N_B})\Phi_{B_1B_2}^+]}{d_B(d_B + 1)}} \\
        &=\label{eq:ntangle-final-simplify} \sqrt{\frac{d_B + d_B}{d_B(d_B + 1)}} \\
        &= \sqrt{\frac{2}{d_B + 1}}.
    \end{align}
Inequality~\eqref{ineq:haar-avg-ntangle-concavity} follows from the concavity of the square root function. Equation~\eqref{eq:swap-trick-ntangle} is an application of the swap trick. Equality~\eqref{eq:isotropic-state-ntangle} follows from standard techniques used in computations involving Haar integration, see e.g.~\cite[Ex.~49]{Mele2024}. Equation~\eqref{eq:ntangle-simplify} follows from applying the swap trick to the first term and simplifying the second term via the fact that $[\mathbb{F}_{B_1, B_2}, U_B \otimes U_B]=0$ for any unitary $U_B$ over $\mathcal{H}_B$. Finally,~\eqref{eq:ntangle-final-simplify} is obtained by noting that $\sigma_y^2 = I$ and by explicitly computing the second term using~\eqref{eq:maximally-entangled-def} and the fact that $N_B$ is even. 
\end{proof}

Lemma~\ref{lemma:avg-E-calc} in conjunction with Lemma~\ref{lemma:avg-ntangle-bound} allows us to easily compute the following bound on the expectation value of the average post-measurement $N_B$-tangle for Haar-random states. 
\begin{lemma}\label{lemma:avg-avg-ntangle}
    Fix an arbitrary orthonormal basis $\beta = \{|i\rangle\}_{i=1}^{d_A}$ of $\mathcal{H}_A$. Then
    \begin{equation}
        \EV_{|\Psi\rangle \sim \Haar(d)} \left[\overline{\tau}_\beta(|\Psi\rangle)\right] \leq \sqrt{\frac{2}{d_B + 1}}.
    \end{equation}
\end{lemma}

We now turn to the GME-concurrence and bound the Haar average of its squared concurrence across $\mathcal{H}_B$. 
\begin{lemma}\label{lemma:gme-concurrence-mean-val}
    The following lower bound holds:
    \begin{equation}
        \EV_{|\psi\rangle \sim \Haar(d_B)}\left[C^2_{\rm GME}(|\psi\rangle)\right] \geq 1- 2\min_{\varnothing \subsetneq \gamma \subsetneq B} \frac{d_\gamma + d_{B\setminus \gamma}}{d_B + 1} - 2d_B^{-1/4} - 4d_B\exp\left(-\frac{\sqrt{d_B}}{72\pi^3} \right) .
    \end{equation}
\end{lemma}
\begin{proof}
    By referring to the proof of~\eqref{ineq:gme-concurrence-continuity-bound}, particularly from~\eqref{eq:purity-deviation} onward, we see that for any subsystem $\varnothing\subsetneq\gamma \subsetneq B$ and any $|\psi\rangle, |\varphi\rangle \in \mathcal{H}_B$, we have 
    \begin{equation}
        \left|\tr\left[\psi_\gamma^2\right] - \tr\left[\varphi_\gamma^2\right]\right|\leq 2 \|\psi - \varphi\|_1 \leq 4 \||\psi\rangle-|\varphi\rangle\|_2,
    \end{equation}
    where we used the norm inequality $\||\psi\rangle\langle\psi|-|\varphi\rangle\langle\varphi|\|_1\leq 2\| |\psi\rangle - |\varphi\rangle\|_2$ in the last inequality. This norm inequality may be derived using the fact that $1-a^2 \leq 2 - 2a$ for all $a\in (0,1]$, so that for any states $|\psi\rangle, |\varphi\rangle \in \mathcal{H}_B$ we have
    \begin{align}
        \| |\psi\rangle \langle \psi| - |\varphi\rangle\langle \varphi|\|_1 
        &= 2\sqrt{1 - |\langle \psi|\varphi\rangle|^2} \\
        &\leq 2\sqrt{2 - 2|\langle \psi|\varphi\rangle|} \\
        &\leq \sqrt{2 - 2\mathrm{Re}\langle \psi|\varphi\rangle} \\
        &= 2\||\psi\rangle - |\varphi\rangle\|_2.
    \end{align}
    Using Levy's Lemma~\ref{lemma:levy} and the fact that $\EV\left[\tr\left[\psi_\gamma^2\right]\right] = (d_\gamma + d_{B\setminus \gamma})/(d_B + 1)$~\cite{Mele2024}, we arrive at
    \begin{equation}
        \Pr_{|\psi\rangle \sim \Haar(d_B)} \left( \left| \tr\left[\psi_\gamma^2\right] - \frac{d_\gamma + d_{B\setminus\gamma}}{d_B + 1} \right| \geq d_B^{-1/4} \right) \leq 2\exp\left( - \frac{\sqrt{d_B}}{72\pi^3} \right).
    \end{equation}
    Applying a union bound, this becomes
    \begin{align}
        \Pr_{|\psi\rangle\sim \Haar(d_B)} &\left(\max_{\varnothing\subsetneq \gamma \subsetneq B} \tr\left[\Psi_\gamma^2\right] \geq  \max_{\varnothing \subsetneq \gamma' \subsetneq B} \frac{d_{\gamma'} + d_{B\setminus\gamma'}}{d_B + 1} +  d_B^{-1/4} \right)\\
        &\leq \sum_{\varnothing\subsetneq\gamma\subsetneq B} 
        \Pr_{|\psi\rangle \sim \Haar(d_B)} \left( \tr\left[\psi_\gamma^2\right] \geq  \max_{\varnothing \subsetneq \gamma' \subsetneq B} \frac{d_{\gamma'} + d_{B\setminus\gamma'}}{d_B + 1}  + d_B^{-1/4} \right)  \\
        &\leq \sum_{\varnothing \subsetneq \gamma \subsetneq B} 
        \Pr_{|\psi\rangle \sim \Haar(d_B)} \left( \tr\left[\psi_\gamma^2\right] \geq  \frac{d_{\gamma} + d_{B\setminus\gamma}}{d_B + 1}  + d_B^{-1/4} \right) \\
        &\leq \sum_{\varnothing  \subsetneq \gamma\subsetneq B} \Pr_{|\psi\rangle \sim \Haar(d_B)} \left( \left| \tr\left[\psi_\gamma^2\right] - \frac{d_\gamma + d_{B\setminus\gamma}}{d_B + 1} \right| \geq d_B^{-1/4} \right)  \\
        &\leq 2d_B \exp\left( - \frac{\sqrt{d_B}}{72\pi^3} \right). 
    \end{align}
    Naturally, $\max_\gamma \tr[\psi_\gamma^2]$ may lie either above or below $\max_\gamma (d_\gamma + d_{B\setminus \gamma})/(d_B + 1) + d_B^{-1/4}$. Using this idea and the fact that both $\max_\gamma \tr[\psi_\gamma^2]$ and the probability that $\max_\gamma \tr\left[\psi_\gamma^2\right]$ lies below this threshold value are upper bounded by one, we get
    \begin{align}
        \EV_{|\psi\rangle \sim \Haar(d_B)} &\left[\max_{\varnothing \subsetneq \gamma \subsetneq B}\left[\tr\left[\psi_\gamma^2\right] \right] \right] \notag\\
        &\leq \Pr_{|\psi \rangle \sim \Haar(d_B)}\left( \max_{\varnothing \subsetneq \gamma \subsetneq B} \left[\tr\left[\psi_\gamma^2 \right]\right] \geq \max_{\varnothing \subsetneq \gamma' \subsetneq B} \frac{d_{\gamma'} + d_{B\setminus\gamma'}}{d_B + 1}  + d_B^{-1/4} \right) \notag\\
        &\qquad {}+ \max_{\varnothing \subsetneq \gamma' \subsetneq B} \frac{d_{\gamma'} + d_{B\setminus\gamma'}}{d_B + 1}  + d_B^{-1/4} \\
        &\leq \max_{\varnothing \subsetneq \gamma\subsetneq B} \frac{d_{\gamma} + d_{B\setminus\gamma}}{d_B + 1}  + d_B^{-1/4} + 2d_B\exp\left(-\frac{\sqrt{d_B}}{72\pi^3} \right) 
    \end{align}
    Finally, we compute the expectation value of the squared GME-concurrence to be
    \begin{align}
        \EV_{|\psi\rangle \sim \Haar(d_B)} \left[C^2_{\rm GME}(|\psi\rangle)\right] &= \EV_{|\psi\rangle \sim \Haar(d_B)}\left[\min_{\varnothing \subsetneq\gamma\subsetneq B} 2\left(1- \tr\left[\psi_\gamma^2\right]\right)\right] \\
        &= 2-2\EV_{|\psi \rangle \sim \Haar(d_B)}\left[\max_{\varnothing\subsetneq \gamma\subsetneq B} \tr\left[\psi_\gamma^2\right]\right] \\
        &\geq 2 - 2\max_{\varnothing \subsetneq \gamma\subsetneq B} \frac{d_{\gamma} + d_{B\setminus\gamma}}{d_B + 1}  - 2d_B^{-1/4} - 4d_B\exp\left(-\frac{\sqrt{d_B}}{72\pi^3} \right),
    \end{align}
    which concludes the proof.
\end{proof}

For arbitrary measurement basis $\beta\subset \mathcal{H}_A$, let us define the quantity $\overline{C^2_{\rm GME, \beta}}(|\Psi\rangle)$ to be the average value of the squared GME-concurrence over the ensemble produced by a measurement on $A$ with respect to $\beta$ of the system in state $|\Psi\rangle$. The expectation value of $\overline{C^2_{\rm GME, \beta}}(|\Psi\rangle)$ averaged over Haar-random $|\Psi\rangle$ is then lower bounded by the Lemma below, which follows directly from Lemmas~\ref{lemma:avg-E-calc} and~\ref{lemma:gme-concurrence-mean-val}. 
\begin{lemma}
    The following holds:
    \begin{equation}
        \EV_{|\Psi\rangle \sim \Haar(d)} \left[\overline{C^2_{\rm GME, \beta}}(|\Psi\rangle)\right] \geq 2 - 2\max_{\varnothing \subsetneq \gamma\subsetneq B} \frac{d_{\gamma} + d_{B\setminus\gamma}}{d_B + 1}  - 2d_B^{-1/4} - 4d_B\exp\left(-\frac{\sqrt{d_B}}{72\pi^3} \right).
    \end{equation}
\end{lemma}

We note that it was shown in~\cite{Schatzki2024} that for any nonempty $s \subseteq B$, we have
\begin{equation}
    \EV_{|\psi\rangle\sim\Haar(d_B)}[C(|\psi\rangle; s)] 
    = 1 - \frac{3^{|s|}(2^{N_B - |s|} + 1)}{2^{|s|}(2^{N_B} + 1)}.
\end{equation}
Using this fact and Lemma~\ref{lemma:avg-E-calc}, we obtain the average post-measurement concentratable entanglement for Haar-random states:
\begin{lemma}\label{lemma:avg-LCE-calc}
    For arbitrary measurement basis $\beta \subset \mathcal{H}_A$ and nonempty subset $s\subseteq B$, the following holds:
    \begin{equation}
        \EV_{|\Psi\rangle \sim \Haar(d)} \left[\overline{C}_\beta(|\Psi\rangle; s)\right] = 1 - \frac{3^{|s|}(2^{N_B - |s|} + 1)}{2^{|s|}(2^{N_B} + 1)}.
    \end{equation}
\end{lemma}

We now have all ingredients in place to prove the main results of Sec.~\ref{sec:concentration}:
\begin{proof}[Proof of Theorem~\ref{Th:EoA-concentration}]
By Corollary \ref{cor:LE-local-bounds}, we have
\begin{align}
    \left|\mathcal{A}^\tau(|\Psi\rangle) - \mathcal{A}^\tau(|\Psi'\rangle) \right| 
    &\leq (2\sqrt{2}+1)\|\Psi - \Psi' \|_1 \\
    &= (4\sqrt{2} + 2)\| |\Psi\rangle - |\Psi'\rangle \|_2,
\end{align}
where in the last line we use the norm inequality $\||\Psi\rangle \langle \Psi| - |\Psi'\rangle\langle \Psi'|\|_1 \leq \||\Psi\rangle - |\Psi'\rangle\|_2$ for states $|\Psi\rangle, |\Psi'\rangle \in \mathcal{H}_A \otimes \mathcal{H}_B$.\footnote{Note that Lipschitz continuity of $\mathcal{A}^\tau(|\Psi\rangle) = F(\Psi_B, \tilde{\Psi}_B)$ with respect to the Euclidean norm on $\mathbb{S}^{2d - 1}$ can also be shown using properties of the quantum fidelity.}

Putting the above work together with Lemma~\ref{lemma:EoA-tau-LB} and applying Levy's Lemma \ref{lemma:levy}, we then get that
\begin{align}
    \Pr\left(\mathcal{A}^\tau(|\Psi\rangle) \leq 1 - \sqrt{\frac{2d_B}{d_A}} - \varepsilon\right) &\leq 
    \Pr\left(\mathcal{A}^\tau(|\Psi\rangle) \leq 
    \EV_{|\Phi'\rangle \sim \Haar(d)}[\mathcal{A}^\tau(|\Phi'\rangle)] - \varepsilon\right) \\
    &\leq 
    \Pr\left(\left|\mathcal{A}^\tau(|\Psi\rangle) - \EV_{|\Psi'\rangle \sim \Haar(d)}[\mathcal{A}^\tau(|\Psi'\rangle)]\right| \geq \varepsilon\right) \\
    &\leq 2\exp\left( - \frac{2d\varepsilon^2}{9\pi^3 (4\sqrt{2} + 2)^2}\right),
\end{align}
which finishes the proof.
\end{proof}

\begin{proof}[Proof of Theorem~\ref{Th:EoA-concentration-2}]
    By referring to the work in the proof of Lemma~\ref{lemma:3-continuity-bounds}, particularly from line~\eqref{eq:first-gme-cont-proof-step} onward, we see that 
    \begin{equation}
        |C^2_{\rm GME}(|\psi\rangle) - C^2_{\rm GME}(|\varphi\rangle)| \leq 2^{3/2} \|\psi - \varphi\|_1 \qquad \text{for }|\psi\rangle, |\varphi\rangle \in \mathcal{H}_B.
    \end{equation}
    For arbitrary state $|\Psi\rangle \in \mathcal{H}_A \otimes \mathcal{H}_B$ and measurement basis $\mathcal{\beta}$, let $\overline{C^2_{\rm GME, \beta}}(|\Psi\rangle)$ denote the average post-measurement value of the squared GME-concurrence. By using Lemma~\ref{lemma:variation-bounds} with the function $f: \mathbb{R}_{\geq 0} \to \mathbb{R}_{\geq 0} $ defined by $x\mapsto 2^{3/2} x$, it follows that for fixed measurement basis $\beta$, the function $\overline{C^2_{\rm GME, \beta}}(\cdot)$ has a Lipschitz constant of at most $2^{5/2} + 1$ with respect to the operator 1-norm.\footnote{We note that $C^2_{\rm GME}$ is a genuine entanglement monotone for pure states because it is the minimum of bipartite linear subsystem entropies, and the minimum of a bipartite entanglement measure across all bipartitions is a legitimate multipartite entanglement measure~\cite{Horodecki2009}.} Therefore, by Levy's Lemma~\ref{lemma:levy} and our lower bound on the expectation value of $\overline{C^2_{\rm GME, \beta}}(|\Psi\rangle)$ from Lemma~\ref{lemma:gme-concurrence-mean-val}, we have for arbitrary fixed measurement basis $\beta$ that
    \begin{multline}
        \Pr_{|\Psi\rangle \sim \Haar(d)} \left(\overline{C^2_{\rm GME, \beta}}(|\Psi\rangle) \leq 1- 2\min_{\varnothing \subsetneq \varnothing \subsetneq B} \frac{d_\gamma + d_{B\setminus \gamma}}{d_B + 1} - 2d_B^{-1/4} - 4d_B\exp\left(-\frac{d_B}{72\pi^3} \right) - \varepsilon \right) \\
        \leq 2\exp\left( - \frac{2d\varepsilon^2}{9\pi^3(2^{7/2}+2)^2} \right). 
    \end{multline}
    Since $C_{\rm GME}$ is upper bounded by one, we have $\overline{C^2_{\rm GME, \beta}}(|\Psi\rangle) \leq \overline{C_{\rm GME, \beta}}(|\Psi\rangle)$, from which the concentration inequality for $\overline{C_{\rm GME, \beta}}(|\Psi\rangle)$ in \eqref{eq:Cgme-concentration} of Theorem~\ref{Th:EoA-concentration-2} follows.

    For the (average post-measurement) concentratable entanglement $\overline{C}_\beta$, we use Lemmas~\ref{lemma:3-continuity-bounds} and~\ref{lemma:variation-bounds} to show the Lipschitz continuity of $\overline{C}_\beta$.
    The proof of the concentration inequality for $\overline{C}_\beta$  in \eqref{eq:CE-concentration} of Theorem~\ref{Th:EoA-concentration-2} then follows similar lines of argument as the previous one by invoking Lemma~\ref{lemma:avg-LCE-calc} to bound the average of $\overline{C}_\beta$. 
\end{proof}

Finally, to see that our statement from the main text regarding equation~\eqref{eq:tau-beta-concentration} holds, one may apply Lemmas~\ref{lemma:3-continuity-bounds} and~\ref{lemma:variation-bounds} regarding the continuity of $\overline{\tau}_\beta$, the expectation value of $\overline{\tau}_\beta$ from Lemma~\ref{lemma:avg-avg-ntangle}, and Levy's lemma to derive a concentration inequality on $\overline{\tau}_\beta$.

\section{Proof of Lemma~\ref{lemma:fidelity-graph-lemma}}\label{app:unweighted-graph-results}

Throughout this appendix, we let the symbol $\sigma_i^{(b)}$ for $b\in B$ denote the Pauli operator $\sigma_i$ acting only on the system associated with $b$. 
For any graph $G$ whose vertex set is $B$, let 
\begin{equation}
    K_b \coloneqq \sigma_x^{(b)}\prod_{b'\in N_b} \sigma_z^{(b')}, 
\end{equation}
where $N_b$ denotes the set of all vertices in $B$ connected to $b$ via an edge of the graph $G$. The graph state $|G\rangle$ is a stabilizer state and the operators $\{K_b: b\in B\}$ form an independent set of stabilizer generators for $|G\rangle$ \cite{Hein2006}:
\begin{equation}
    K_b|G\rangle = |G\rangle.
\end{equation}
From this, one may see that
\begin{equation}\label{eq:generator-identity}
    K_b\sigma_z^{\mathbf{z}}|G\rangle = (-1)^{z_b}\sigma_z^{\mathbf{z}}|G\rangle,
\end{equation}
where $\mathbf{z}\in \mathbb{F}_2^B$ and $z_b$ is the component of $\mathbf{z}$ corresponding to vertex $b$. This relation implies that
\begin{equation}
    \langle G|\sigma_z^{\mathbf{z}}|G\rangle = \delta_{\mathbf{z}, \mathbf{0}}.
\end{equation}
Hence, the set of vectors $\{\sigma_z^{\mathbf{z}}|G\rangle\}_{\mathbf{z}\in \mathbb{F}_2^B}$ forms an orthonormal basis for $\mathcal{H}_B$, which is called the \textit{graph state basis}. 

Another useful fact is that the graph state $|G\rangle$ of Lemma~\ref{lemma:fidelity-graph-lemma} has a reduced state $G_B$ of the form
\begin{equation}\label{eq:reduced-state}
    G_B = \frac{1}{2^{N_A}} \sum_{\mathbf{z} \in \mathbb{F}_2^A} \sigma_z^{\mathbf{\Gamma}_{BA}\mathbf{z}}|G-A\rangle\langle G-A|\sigma_z^{\mathbf{\Gamma}_{BA}\mathbf{z}}.
\end{equation}
As before, the matrix multiplication $\mathbf{\Gamma}_{BA} \mathbf{z}$ is understood to be modulo 2 here. Furthermore, $G_B$ is an unnormalized projector, meaning that $G_B^2 = KG_B$ for some real constant $K$~\cite[Prop.~8]{Hein2006}. This then implies that $G_B$ is maximally mixed on its support. 

Next, we state a general lemma for graph states. Our proof is inspired by the proof of Theorem III.2 in~\cite{schatzki2023}. 
\begin{lemma}\label{lemma:graph-state-sigma-y}
    For any $\mathbf{z} \in \mathbb{F}_2^B$, we have
    \begin{equation}
        \sigma_y^{\otimes N_B}\sigma_z^{\mathbf{z}}|G-A\rangle = C(\mathbf{z}, G-A) \sigma_z^{\mathbf{z} + \mathbf{D}}|G-A\rangle, 
    \end{equation}
    where $C(\mathbf{z}, G - A)$ is phase factor that is equal to either $i^{N_B}$ or $-i^{N_B}$ depending on the vector $\mathbf{z}$ and the subgraph $G - A$. 
\end{lemma}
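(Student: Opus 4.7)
The plan is to reduce $\sigma_y^{\otimes N_B}$ to a product of $\sigma_x^{\otimes N_B}$ and a $\sigma_z$-string, slide everything past the existing $\sigma_z^{\mathbf{z}}$, and then convert the residual $\sigma_x^{\otimes N_B}|G-A\rangle$ into a $\sigma_z$-string acting on $|G-A\rangle$. Using $\sigma_y = i\sigma_x\sigma_z$ qubit-by-qubit and the shorthand $\sigma_z^{\otimes N_B} = \sigma_z^{\mathbf{1}}$, I would first write
\[
\sigma_y^{\otimes N_B}\sigma_z^{\mathbf{z}}|G-A\rangle = i^{N_B}\,\sigma_x^{\otimes N_B}\sigma_z^{\mathbf{1}+\mathbf{z}}|G-A\rangle,
\]
and then commute $\sigma_x^{\otimes N_B}$ past $\sigma_z^{\mathbf{1}+\mathbf{z}}$ one qubit at a time, picking up a sign of $(-1)^{|\mathbf{1}+\mathbf{z}|}$, where $|\cdot|$ denotes Hamming weight.

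The main step is to show that $\sigma_x^{\otimes N_B}|G-A\rangle = (-1)^{|E(G-A)|}\sigma_z^{\mathbf{d}}|G-A\rangle$, where $\mathbf{d}\in \mathbb{F}_2^B$ is the parity-of-degree vector of $G-A$ (so $d_b = \deg_{G-A}(b) \bmod 2$) and $|E(G-A)|$ is the number of edges. My intended approach is to expand $|G-A\rangle = 2^{-N_B/2}\sum_{\mathbf{x}}(-1)^{q(\mathbf{x})}|\mathbf{x}\rangle$ with the standard quadratic form $q(\mathbf{x}) = \sum_{\{i,j\}\in E(G-A)} x_i x_j$, apply $\sigma_x^{\otimes N_B}$ to shift the summation variable by $\mathbf{1}$, and then simplify $q(\mathbf{y}+\mathbf{1}) \equiv q(\mathbf{y}) + \mathbf{d}\cdot\mathbf{y} + |E(G-A)| \pmod 2$ by expanding each edge contribution $(y_i+1)(y_j+1)$ and summing. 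Pulling the constant $(-1)^{|E(G-A)|}$ out of the sum and recognizing $(-1)^{q(\mathbf{y}) + \mathbf{d}\cdot\mathbf{y}}$ as the amplitudes of $\sigma_z^{\mathbf{d}}|G-A\rangle$ then gives the claimed identity. As an independent cross-check, the same relation can be obtained algebraically by multiplying the stabilizers $\prod_{b\in B}K_b = \pm \sigma_x^{\otimes N_B}\sigma_z^{\mathbf{d}}$ and tracking the anticommutation signs, which also produces the factor $(-1)^{|E(G-A)|}$.

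Assembling the pieces, the overall scalar becomes $i^{N_B}(-1)^{|\mathbf{1}+\mathbf{z}|+|E(G-A)|}$, which is manifestly $\pm i^{N_B}$, and the surviving $\sigma_z$-exponent is $\mathbf{1}+\mathbf{z}+\mathbf{d}$. Since $\mathbf{D}$ is defined to flag vertices of even degree in $G-A$ while $\mathbf{d}$ flags those of odd degree, the two vectors are entrywise complementary, so $\mathbf{1}+\mathbf{d} = \mathbf{D}$ and the exponent equals $\mathbf{z}+\mathbf{D}$, as required. The main bookkeeping hazard is getting the $|E(G-A)|$ parity right, since that is exactly what distinguishes $+i^{N_B}$ from $-i^{N_B}$; I would guard against slips by running the formula on the two-qubit edge state $|G-A\rangle = \tfrac{1}{2}(|00\rangle+|01\rangle+|10\rangle-|11\rangle)$, where $\sigma_y^{\otimes 2}|G-A\rangle = |G-A\rangle$ can be checked directly and matched against the general expression for $C(\mathbf{z},G-A)$.
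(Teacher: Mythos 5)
Your proof is correct, and it takes a genuinely different route from the paper's. The paper works entirely inside the stabilizer formalism: it rewrites each $\sigma_y^{(b)}$ as $-i\,\sigma_z^{(b)}\bigl(\prod_{b'\in N_b}\sigma_z^{(b')}\bigr)K_b$, uses the eigenvalue relation $K_b\sigma_z^{\mathbf z}|G-A\rangle=(-1)^{z_b}\sigma_z^{\mathbf z}|G-A\rangle$, and then determines which $\sigma_z^{(b)}$ factors survive in the residual product by counting that vertex $b$ appears $1+\deg_{G-A}(b)$ times, which is odd exactly when $\deg_{G-A}(b)$ is even — this is how $\mathbf D$ emerges. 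You instead split $\sigma_y^{\otimes N_B}=i^{N_B}\sigma_x^{\otimes N_B}\sigma_z^{\mathbf 1}$ globally and evaluate $\sigma_x^{\otimes N_B}|G-A\rangle$ by a computational-basis expansion with the quadratic form $q(\mathbf x)=\sum_{\{i,j\}\in E(G-A)}x_ix_j$, obtaining $\sigma_x^{\otimes N_B}|G-A\rangle=(-1)^{|E(G-A)|}\sigma_z^{\mathbf d}|G-A\rangle$ with $\mathbf d$ the odd-degree indicator; the identification $\mathbf 1+\mathbf d=\mathbf D$ then closes the argument. Your identity $q(\mathbf y+\mathbf 1)\equiv q(\mathbf y)+\mathbf d\cdot\mathbf y+|E(G-A)|\pmod 2$ is correct, as is the commutation sign $(-1)^{|\mathbf 1+\mathbf z|}$. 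What your approach buys is a fully explicit phase, $C(\mathbf z,G-A)=i^{N_B}(-1)^{N_B-|\mathbf z|+|E(G-A)|}$, whereas the paper only asserts $C=\pm i^{N_B}$ (which suffices, since Lemma~\ref{lemma:fidelity-graph-lemma} only uses the supports); the paper's argument is somewhat shorter and avoids the basis expansion. Your two-qubit sanity check is consistent with your formula, and your suggested cross-check via $\prod_{b\in B}K_b=\pm\sigma_x^{\otimes N_B}\sigma_z^{\mathbf d}$ is essentially the paper's mechanism in compressed form.
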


\begin{proof}
    Take $\mathbf{z}\in \mathbb{F}_2^B$ and $b\in B$ to be arbitrary. We have 
    \begin{align}
        \sigma_y^{(b)}\sigma_z^{\mathbf{z}}|G-A\rangle &= 
        -i \sigma_z^{(b)} \sigma_x^{(b)} \sigma_z^{\mathbf{z}} |G-A\rangle \\
        &= -i\sigma_z^{(b)} \left( \prod_{b' \in N_b} \sigma_z^{(b')} \right)\left( \prod_{b'' \in N_b} \sigma_z^{(b'')} \right) \sigma_x^{(b)}  \sigma_z^{\mathbf{z}} |G-A\rangle \\
        &= -i\sigma_z^{(b)} \left( \prod_{b' \in N_b} \sigma_z^{(b')} \right) \sigma_x^{(b)} \left(\prod_{b'' \in N_b} \sigma_z^{(b'')} \right) \sigma_z^{\mathbf{z}} |G-A\rangle \\
        &= -i\sigma_z^{(b)} \left( \prod_{b' \in N_b} \sigma_z^{(b')} \right) K_b  \sigma_z^{\mathbf{z}} |G-A\rangle \\
        &= -i(-1)^{z_b} \sigma_z^{(b)} \left( \prod_{b' \in N_b} \sigma_z^{(b')} \right) \sigma_z^{\mathbf{z}} |G-A\rangle,
    \end{align}
    where the last line follows from~\eqref{eq:generator-identity}.
    Therefore,
    \begin{equation}\label{eq:yprod-state}
        \sigma_y^{\otimes N_B}\sigma_z^{\mathbf{\Gamma}_{BA} \mathbf{z}}|G-A\rangle = C(\mathbf{z}, G-A) \underbrace{\prod_{b\in B} \sigma_z^{(b)} \left(\prod_{b'\in N_b} \sigma_z^{(b')} \right)}_{P}\sigma_z^{\mathbf{\Gamma}_{BA}\mathbf{z}}|G-A\rangle, 
    \end{equation}
    where $C(\mathbf{z}, G-A)$ is a phase factor either equal to $i^{N_B}$ of $-i^{N_B}$. Note that if vertex $b\in B$ has odd degree, then the product $P$ in~\eqref{eq:yprod-state} contains an even number of $\sigma_z^{(b)}$ operators. On the other hand, if $b\in B$ has even degree, then the product $P$ contains an odd number of $\sigma_z^{(b)}$ operators. Recall now that $\mathbf{D}_b = 1$ if the degree of vertex $b\in B$ is even in the subgraph $G-A$ and zero otherwise. From this, we conclude that
    \begin{equation}
        \sigma_y^{\otimes N_B} \sigma_z^{\mathbf{\Gamma}_{BA} \mathbf{z}} |G- A\rangle =  
        C(\mathbf{z}, G-A) \sigma_z^{\mathbf{\Gamma}_{BA}\mathbf{z} + \mathbf{D}}|G-A\rangle, 
    \end{equation}
    with the expression $\mathbf{\Gamma}_{BA}\mathbf{z} + \mathbf{D}$ evaluated in $\mathbb{F}_2^B$.
\end{proof} 

We are now ready to prove Lemma~\ref{lemma:fidelity-graph-lemma}.

\begin{proof}[Proof of Lemma~\ref{lemma:fidelity-graph-lemma}]
    Equation~\eqref{eq:reduced-state} and the orthogonality of the graph state basis imply that the supports of the states $G_B$ and $\tilde{G}_B$ are given by
    \begin{align}
        \supp(G_B) &=\label{eq:supp-psi-b} \Span(\{\sigma_z^{\mathbf{\Gamma}_{BA}\mathbf{z}}|G-A\rangle\}_{\mathbf{z}\in \mathbb{F}_2^A})\\
        \supp(\tilde{G}_B) &= \Span(\{\sigma_y^{\otimes N_B}\sigma_z^{\mathbf{\Gamma}_{BA}\mathbf{z}}|G-A\rangle\}_{\mathbf{z}\in \mathbb{F}_2^A}),
    \end{align}
    where to obtain the latter we used the fact that graph states have real components in the computational basis. Using Lemma~\ref{lemma:graph-state-sigma-y}, it follows that
    \begin{align}
        \supp(\tilde{G}_B) &=\label{eq:supp-psi-tilde-b} \Span(\{\sigma_z^{\mathbf{\Gamma}_{BA}\mathbf{z} + \mathbf{D}}|G-A\rangle\}_{\mathbf{z}\in \mathbb{F}_2^A}).
    \end{align}

    Now suppose that $\mathbf{\Gamma}_{BA} \mathbf{x} = \mathbf{D}$ does not have a solution $\mathbf{x} \in \mathbb{F}_2^A$. Take $\mathbf{z}, \mathbf{z}' \in \mathbb{F}_2^A$ to be arbitrary. By hypothesis, we must have that $\mathbf{\Gamma}_{BA}(\mathbf{z} + \mathbf{z}') \neq \mathbf{D}$. Recalling that addition is understood to be modulo 2, it follows that $\mathbf{\Gamma}_{BA}\mathbf{z} + \mathbf{D} + \mathbf{\Gamma}_{BA}\mathbf{z'} \neq \mathbf{0}$. Hence, 
    \begin{equation}
        \langle G-A|\sigma_z^{\mathbf{\Gamma}_{BA}\mathbf{z} + \mathbf{D} + \mathbf{\Gamma}_{BA}\mathbf{z'}}|G-A\rangle = 0. 
    \end{equation}
    In view of~\eqref{eq:supp-psi-b} and~\eqref{eq:supp-psi-tilde-b}, this implies that $G_B$ and $\tilde{G}_B$ have orthogonal support, and hence $F(G_B, \tilde{G}_B) = 0$.

    On the other hand, suppose that there does exist $\mathbf{x} \in \mathbb{F}_2^A$ such that $\mathbf{\Gamma}_{BA}\mathbf{x} = \mathbf{D}$. We then have
    \begin{equation}
        \{\sigma_z^{\mathbf{\Gamma}_{BA}\mathbf{z}}|G-A\rangle\}_{\mathbf{z}\in \mathbb{F}_2^A} 
        =
        \{\sigma_z^{\mathbf{\Gamma}_{BA}(\mathbf{z}+\mathbf{x})}|G-A\rangle\}_{\mathbf{z}\in \mathbb{F}_2^A} 
        =
        \{\sigma_z^{\mathbf{\Gamma}_{BA}\mathbf{z} + \mathbf{D}}|G-A\rangle\}_{\mathbf{z}\in \mathbb{F}_2^A},
    \end{equation}
    where the first equality follows from the fact that $\mathbf{z}\mapsto \mathbf{z} + \mathbf{x}$ is a bijection from $\mathbb{F}_2^A$ to itself. 
    Therefore, by equations~\eqref{eq:supp-psi-b} and~\eqref{eq:supp-psi-tilde-b}, we have
    $\supp(G_B) = \supp(\tilde{G}_B)$. The fact that $G_B$ is maximally mixed on its support implies the same for $\tilde{G}_B$. But since the two states have the same support, it follows that $G_B = \tilde{G}_B$. Hence, $F(G_B, \tilde{G}_B) = 1$. 
\end{proof}

\section{Remarks on weighted graph states}\label{app:weighted-graph-results}

In this section, we given an expression and a bound for $\|G_{\varphi} - G_{\chi}\|_1$ that help clarify the behavior of the maximum probability with which one can extract a GHZ state from a uniformly weighted graph state $|G_{\theta}\rangle$ for $\theta \in [0,2\pi)$.

For a fixed graph $G = (V, E)$, suppose that we are given two uniformly weighted $N$-qubit graph states $|G_\varphi\rangle, |G_{\chi}\rangle$ with edge weights $\varphi$ and $\chi$, respectively. 
Let $d = 2^N$. As discussed in~\cite[Section 9]{Hein2006}, we have for $\theta \in \{\varphi, \chi\}$ that
\begin{equation}
    |G_\theta\rangle = \frac{1}{\sqrt{d}} \sum_{\mathbf{z}\in \{0,1\}^N} e^{i \theta \mathbf{z}^T \mathbf{\Gamma} \mathbf{z}/2} |\mathbf{z}\rangle,
\end{equation}
where $\mathbf{\Gamma}$ is the adjacency matrix of the underlying (unweighted) graph. Note that in this instance, the matrix multiplication $\mathbf{z}^T\mathbf{\Gamma}\mathbf{z}$ is understood to be over the complex field $\mathbb{C}$ rather than $\mathbb{F}_2^{A}$, as in Section~\ref{sec:graph-states}.
It follows that
\begin{equation}\label{eq:wgs-trace-dist-exact}
    \|G_\varphi - G_{\chi}\|_1 = 2\sqrt{1 - |\langle G_{\chi}|G_{\varphi}\rangle|^2} = 2\sqrt{1 - \frac{1}{d^2}\left|\sum\nolimits_{\mathbf{z}\in \{0,1\}^N} \exp(i(\varphi - \chi)\mathbf{z}^T \mathbf{\Gamma} \mathbf{z}/2)\right|^2}.
\end{equation}
Hence, $\|G_{\varphi} - G_{\chi}\|_1$ may be computed from a sum of $2^N$ terms involving simple computations on an $N\times N$ adjacency matrix in place of more complicated computations involving $2^N \times 2^N$ density matrices. 

We may approximate $\| G_\varphi - G_{\chi} \|_1$ in the interval $|\varphi - \chi| < \min_{\mathbf{z} \in \mathbb{F}_2^N} (\mathbf{z}^T\mathbf{\Gamma}\mathbf{z}/2)^{-1}$ by simply using the power series expansion of the exponential functions in~\eqref{eq:wgs-trace-dist-exact}:
\begin{align}
    \left |\sum_{\mathbf{z} \in \{0,1\}^N} \exp\left(i(\varphi - \chi) \mathbf{z}^T \mathbf{\Gamma} \mathbf{z}/2 \right) \right| 
    &= \left |\sum_{\mathbf{z}\in \{0,1\}^N} \left(1 + \frac{i}{2}(\varphi - \chi)\mathbf{z}^T\mathbf{\Gamma} \mathbf{z} + \varepsilon_{\mathbf{z}}\right) \right|  \\
    &\geq \left |\sum_{\mathbf{z}\in \{0,1\}^N} 1\right| - \left|\sum_{\mathbf{z}\in\{0,1\}^N} \frac{1}{2}|\varphi - \chi|\mathbf{z}^T\mathbf{\Gamma} \mathbf{z}\right| - \sum_{\mathbf{z}\in\{0,1\}^N} |\varepsilon_{\mathbf{z}}| \\
    &= d - \frac{1}{2}|\varphi - \chi|\sum_{\mathbf{z}\in \{0,1\}^N} \mathbf{z}^T\mathbf{\Gamma} \mathbf{z} - \sum_{\mathbf{z}\in\{0,1\}^N} |\varepsilon_{\mathbf{z}}| \\
    &\geq\label{ineq:edge-bound} d - d|E||\varphi - \chi| - \sum_{\mathbf{z}\in \{0,1\}^N} |\varepsilon_{\mathbf{z}}| \\
    &\geq\label{ineq:Lagrange} d - d|E||\varphi - \chi| - |\varphi - \chi|^2\sum_{\mathbf{z}\in \{0,1\}^N} \left( \frac{\mathbf{z}^T\mathbf{\Gamma} \mathbf{z}}{2} \right)^2 \\
    &\geq\label{ineq:final-trunc-error-bound} d - d|E||\varphi - \chi| - d|E|^2|\varphi - \chi|^2,
\end{align}
where $\varepsilon_{\mathbf{z}}$ is the truncation error for our expansion of $\exp(i(\varphi - \chi)\mathbf{z}^T\mathbf{\Gamma} \mathbf{z})$. A power expansion of the exponential function together with a Cauchy estimate show that $|e^z - (1+z)| \leq |z|^2$ for all $|z| < 1$, implying $|\varepsilon_{\mathbf{z}}| \leq |\varphi- \chi|^2 (\mathbf{z}^T\mathbf{\Gamma}\mathbf{z}/2)^2$ whenever $|\varphi- \chi|^2 (\mathbf{z}^T\mathbf{\Gamma}\mathbf{z}/2)^2 < 1$. Inequalities~\eqref{ineq:edge-bound} and~\eqref{ineq:final-trunc-error-bound} follow from the fact that $z^T \mathbf{\Gamma} z \leq \sum_{i,j} \mathbf{\Gamma}_{ij} = 2|E|$ by the definition of the graph adjacency matrix, while inequality~\eqref{ineq:Lagrange} follows from our estimate $|\varepsilon_{\mathbf{z}}| \leq |\varphi- \chi|^2 (\mathbf{z}^T\mathbf{\Gamma}\mathbf{z}/2)^2$ on the interval $|\varphi- \chi|<  (\mathbf{z}^T\mathbf{\Gamma}\mathbf{z}/2)^{-1}$.

Hence, on the interval $|\varphi - \chi| < \min_{\mathbf{z} \in \mathbb{F}_2^N} (\mathbf{z}^T \mathbf{\Gamma} \mathbf{z}/2)^{-1}$,
\begin{align}
    \|G_\varphi - G_\chi \|_1 
    &= 2\sqrt{1 - \frac{1}{d^2}\left |\sum_{z \in \{0,1\}^N} \exp\left(i(\varphi - \chi) \mathbf{z}^T \mathbf{\Gamma} \mathbf{z} \right) \right|^2} \\
    &\leq 2\sqrt{1 - \left(1 - |E||\varphi - \chi| - |E|^2|\varphi - \chi|^2\right)^2} \\
    &= 2\sqrt{2\left(|E||\varphi - \chi| + |E|^2|\varphi - \chi|^2\right) - \left(|E||\varphi - \chi| + |E|^2|\varphi - \chi|^2\right)^2} \\
    &= 2\sqrt{2|E||\varphi - \chi|}\sqrt{1 -  |E|^2|\varphi - \chi|^2 -|E|^3|\varphi - \chi|^3} \\
    &= 2\sqrt{2|E||\varphi - \chi|}\left(1 - \mathcal{O}\left(|E|^2|\varphi - \chi|^2\right)\right). \label{eq:trace-dist-estimate}
\end{align}

As shown in the main text, the maximum probability $p_{\tau = 1}(\varphi)$ of extracting any $\tau = 1$ state from a uniformly weighted graph state $|G_{\varphi}\rangle$ with underlying graph $G$ is upper bounded in terms of the trace distance $\| |G_{\varphi}\rangle \langle G_{\varphi}| - |G\rangle\langle G| \|_1$ via 
\begin{equation}\label{eq:p-t-1-bound}
    p_{\tau = 1}(\varphi) \leq (2\sqrt{2} + 1)\|G_\varphi - G\|_1
\end{equation}
whenever the matrix equation of Theorem~\ref{Th:main-graph-thm} has no solution (see~\eqref{eq:p-ghz-bound}). By substituting~\eqref{eq:trace-dist-estimate} into~\eqref{eq:p-t-1-bound}, we then obtain an upper bound on the decay of $p_{\tau = 1}(\varphi)$ to zero whenever the noise parameter $\varphi$ satisfies $|\varphi - \pi/2| < \min_{\mathbf{z} \in \mathbb{F}_2^N} (\mathbf{z}^T \mathbf{\Gamma} \mathbf{z}/2)^{-1}$.

\end{document}